\newcommand{\si}[1]{#1}
\newcommand{\remove}[1]{}
\def\A{\EuScript{A}}
\def\C{\EuScript{C}}
\def\D{\EuScript{D}}
\def\E{\EuScript{E}}
\def\G{\EuScript{G}}
\def\J{\EuScript{J}}
\def\K{\EuScript{K}}
\def\R{\EuScript{R}}
\def\V{\EuScript{V}}
\def\T{\EuScript{T}}
\def\union{\EuScript{U}}
\def\U{\EuScript{U}}
\def\EE{\mathsf{E}}
\def\X{\mathsf{X}}
\def\AA{\mathsf{A}}
\def\fdepth{\omega}
\def\asize{\nu}
\def\eps{{\varepsilon}}
\def\reals{{\mathbb R}}
\def\ph{{\varphi}}
\def\preals{{\reals_{{\ge}0}}}
\def\etal{\textsl{et~al.}}
\def\bd{{\partial}}
\def\lpt{a}
\def\rpt{b}
\def\RR{\mathrm{RR}}
\def\RRR{\Theta}
\def\CR{\mathrm{CR}}
\def\radii{\theta}
\def\rr{\mathbf{r}}
\newcommand{\pdf}{density\xspace}
\newcommand{\cdf}{cdf\xspace}
\newcommand{\halfedge}{portal\xspace}
\newcommand{\halfedges}{{\halfedge{}}s\xspace}
\def\compl{\psi}
\newtheorem{theorem}{Theorem}[section]
\newtheorem{lemma}[theorem]{Lemma}
\newtheorem{cor}[theorem]{Corollary}
\newcommand{\lemlab}[1]{\label{lem:#1}}
\newcommand{\figlab}[1]{\label{fig:#1}}
\def\figref#1{Figure~\ref{fig:#1}}
\long\def\@makecaption#1#2{
   \vskip 10pt
   \setbox\@tempboxa\hbox{{\footnotesize {\bf #1.} #2}}
   \ifdim \wd\@tempboxa >\hsize         % IF longer than one line:
       {\footnotesize {\bf #1.} #2\par}% THEN set as ordinary paragraph.
     \else                              %   ELSE  center.
       \hbox to\hsize{\hfil\box\@tempboxa\hfil}
   \fi}
\begin{document}

\title{Union of Random Minkowski Sums and Network Vulnerability
   Analysis%
   \thanks{%
      A preliminary version of this paper appeared in \emph{Proc. 29th Annual Symposium of Computational Geometry, 2013, pp. 177--186.}
      %\cite{aks-urmsn-13}.
      %
      Work by Pankaj Agarwal and Micha Sharir has been supported by
      Grant 2012/229 from the U.S.-Israel Binational Science Foundation.
      Work by Pankaj Agarwal has also been supported by NSF under grants
      CCF-09-40671, CCF-10-12254, and CCF-11-61359, by ARO grants
      W911NF-07-1-0376 and W911NF-08-1-0452, and by an ERDC contract
      W9132V-11-C-0003. %
      Work by Sariel Har-Peled has been supported by NSF under
         grants CCF-0915984 and CCF-1217462. %
      Work by Haim Kaplan has been supported by grant 822/10 from the
      Israel Science Foundation, grant 1161/2011 from the
      German-Israeli Science Foundation, and by the Israeli Centers
      for Research Excellence (I-CORE) program (center no.~4/11). %
      Work by Micha Sharir has also been supported by NSF Grant
      CCF-08-30272, by Grants 338/09 and 892/13 from the Israel
      Science Foundation, by the Israeli Centers for
      Research Excellence (I-CORE) program (center no.~4/11), and by
      the Hermann Minkowski--MINERVA Center for Geometry at Tel Aviv
      University.}%
}

\author{%
   Pankaj K. Agarwal\thanks{%
      Department of Computer Science, Box 90129, Duke University,
      Durham, NC 27708-0129, USA; {\tt pankaj@cs.duke.edu}}%
   \and Sariel Har-Peled\thanks{%
      Department of Computer Science; University of Illinois; 201 N.
      Goodwin Avenue; Urbana, IL, 61801, USA; {\tt
         \si{sariel@illinois.edu}}} \and Haim Kaplan\thanks{%
      School of Computer Science, Tel Aviv University, Tel~Aviv 69978,
      Israel; {\tt haimk@post.tau.ac.il }} \and Micha Sharir\thanks{%
      School of Computer Science, Tel Aviv University, Tel~Aviv 69978,
      Israel; {\tt michas@post.tau.ac.il }} }

% \date{}

\begin{titlepage}
    \maketitle

    \begin{abstract}
        Let $\C=\{C_1,\ldots,C_n\}$ be a set of $n$
    pairwise-disjoint convex sets of constant description complexity,
    and let $\pi$ be a probability density function (\pdf for short) over
    the non-negative reals.  For each $i$, let $K_i$ be the
        Minkowski sum of $C_i$ with a disk of radius $r_i$, where each
        $r_i$ is a random non-negative number drawn independently from
        the distribution determined by $\pi$. We show that the expected
    complexity of the union of $K_1, \ldots, K_n$ is
    $O(n^{1+\eps})$ for any $\eps > 0$; here the constant of
        proportionality depends on $\eps$ and on the description
        complexity of the sets in $\C$, but not on $\pi$.
    If each $C_i$ is a convex polygon with at most $s$ vertices, then
    we show that the expected complexity of the union
    %of $K_1, \ldots, K_n$
    is $O(s^2n\log n)$.

        Our bounds hold in the stronger model in which we are given an arbitrary
        multi-set $\RRR=\{\radii_1,\ldots,\radii_n\}$ of expansion
        radii, each a non-negative real number. We assign them to the
    members of $\C$ by a random
        permutation, where all permutations are equally likely to be
    chosen; the expectations are now with respect to these permutations.

        We also present an application of our results to a
    problem that arises in analyzing the vulnerability of a network to
        a physical attack.
        \remove{ The first bound has an application to the following
           problem that arises in analyzing the vulnerability of a
           network under a physical attack. Let $\G=(V,\E)$ be a
           planar geometric graph where $\E$ is a set of $n$ line
           segments with pairwise-disjoint relative interiors.  Let
           $\ph:\preals \rightarrow [0,1]$ be an \emph{edge failure
              probability function}, where a physical attack at a
           location $x\in\reals^2$ causes an edge $e$ of $\E$ at
           distance $r$ from $x$ to fail with probability $\ph(r)$; we
           assume that $\ph$ is of the form $1-\Pi(x)$, where $\Pi$ is
           a cumulative distribution function (\cdf) on the
           non-negative reals. The goal is to compute the most
           \emph{vulnerable} location for $\G$, i.e., the location of
           the attack that maximizes the expected number of failing
           edges of $\G$.  Using our bound on the complexity of the
           union of random Minkowski sums, we present a near-linear
           Monte-Carlo algorithm for computing a location that is an
           approximately most vulnerable location of attack for $\G$.}

    \end{abstract}
\end{titlepage}

\section{Introduction}
\label{sec:intro}

\paragraph{Union of random Minkowski sums.}
Let $\C=\{C_1,\ldots,C_n\}$ be a set of $n$ pairwise-disjoint convex sets
of constant description complexity, i.e., the boundary of each $C_i$ is defined by a constant
number of algebraic arcs of constant maximum degree.
%Let $\C = \{C_1,C_2,\ldots,C_n\}$ be a collection of $n$
%pairwise-disjoint convex polygons, each with at most $s$ edges, for
%some constant $s$.
Let $D(r)$ denote the disk of radius $r$ centered
at the origin. We consider the setup where we are given a sequence
$\rr=\langle r_1, \ldots, r_n\rangle$ of non-negative numbers, called
\emph{expansion distances} (or \emph{radii}).
We set $K_i =
C_i \oplus D(r_i)$, the Minkowski sum of $C_i$ with $D(r_i)$. The
boundary of $K_i$, denoted by $\bd K_i$,
consists of $O(1)$ algebraic arcs of bounded degree.
If $C_i$ is a convex polygon with $s$ vertices, then $\bd K_i$
is an alternating concatenation of line segments and
circular arcs, where each segment is a parallel shift, by distance
$r_i$, of an edge of $C_i$, and each circular arc is of radius $r_i$
and is centered at a vertex of $C_i$; see Figure~\ref{fig:mink}. We
refer to the endpoints of the arcs of $\bd K_i$
as the \emph{vertices} of $K_i$.
Let $\K = \{K_1, \ldots, K_n\}$, and let $\union =
\union(\K)= \bigcup_{i=1}^n K_i$. The \emph{combinatorial complexity}
of $\union$, denoted by $\compl(\C,\rr)$, is defined to be the number of
vertices of $\union$, each of which is either a vertex of some $K_i$ or
an intersection point of
the boundaries of a pair of $K_i$'s, lying on $\bd \union$.  We do not
make any assumptions on the shape and location of the sets in
$\C$, except for requiring them to be pairwise disjoint.
% and also assuming them to be in general position (see below for details); the
%latter assumption is made only to simplify the analysis, and is not
%essential, so the bound that we will derive thus holds for any family
%of $n$ pairwise-disjoint convex $s$-\si{gons}.

\begin{figure}[\si{htbp}]
    \centering%
%    \scalebox{0.8}
    {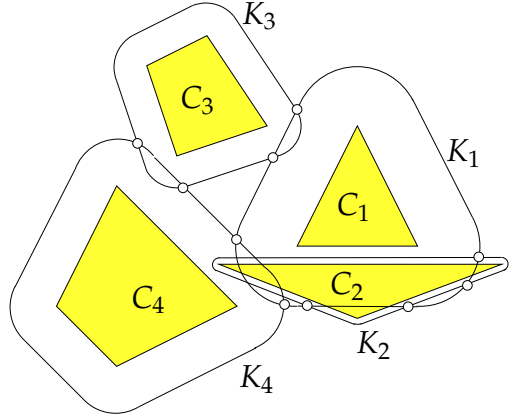}%
    \caption{Pairwise-disjoint convex polygons and their Minkowski
       sums with disks of different radii. The vertices of the union
       of these sums are highlighted.} \label{fig:mink}
\end{figure}

%The case where $\C$ consists of polygons is a simple special case of
%the more general case where $\C$ is a collection of $n$ arbitrary
%pairwise-disjoint convex sets of constant description complexity (see
%below for the precise definition). We also address this general setup
%in this paper.

Our goal is to obtain an upper bound on the \emph{expected}
combinatorial complexity of $\union$, under a suitable probabilistic
model for choosing the expansion radii $\rr$ of the members of
$\C$---see below for the precise models that we will use.
%%%%%%%%%%

\paragraph{Network vulnerability analysis.}
Our motivation for studying the above problems comes from the problem
of analyzing the vulnerability of a network to a physical attack
(e.g., electromagnetic pulse (EMP) attacks, military bombing, or
natural disasters~\cite{EMP}), as studied in~\cite{AEG*}.
Specifically, let $\G=(V,\E)$ be a planar graph embedded in the plane,
where $V$ is a set of points in the plane and $\E=\{e_1, \ldots,
e_n\}$ is a set of $n$ segments (often called \emph{links}) with
pairwise-disjoint relative interiors, whose endpoints are points of
$V$.  For a point $q\in \reals^2$, let $d(q,e)= \min_{p\in e}\|q-p\|$
denote the (minimum) distance between $q$ and $e$. Let $\ph: \preals
\rightarrow [0,1]$ denote the \emph{edge failure probability
function}, so that the probability of an edge $e$ to be damaged by a
physical attack at a location $q$ is $\ph(d(q,e))$. In this model, the
failure probability only depends on the distance of the point of
attack from $e$. We assume that $1-\ph$ is a cumulative distribution
function (\cdf), or, equivalently, that $\varphi(0)=1$,
$\varphi(\infty)=0$, and $\varphi$ is monotonically decreasing.  A
typical example is $\ph(x) = \max\{1-x,0\}$, where the \cdf is the
uniform distribution on $[0,1]$.

\begin{figure}[\si{htb}]
     \centering
     \begin{tabular}{\si{ccc}}
         \includegraphics[width=2.75in]{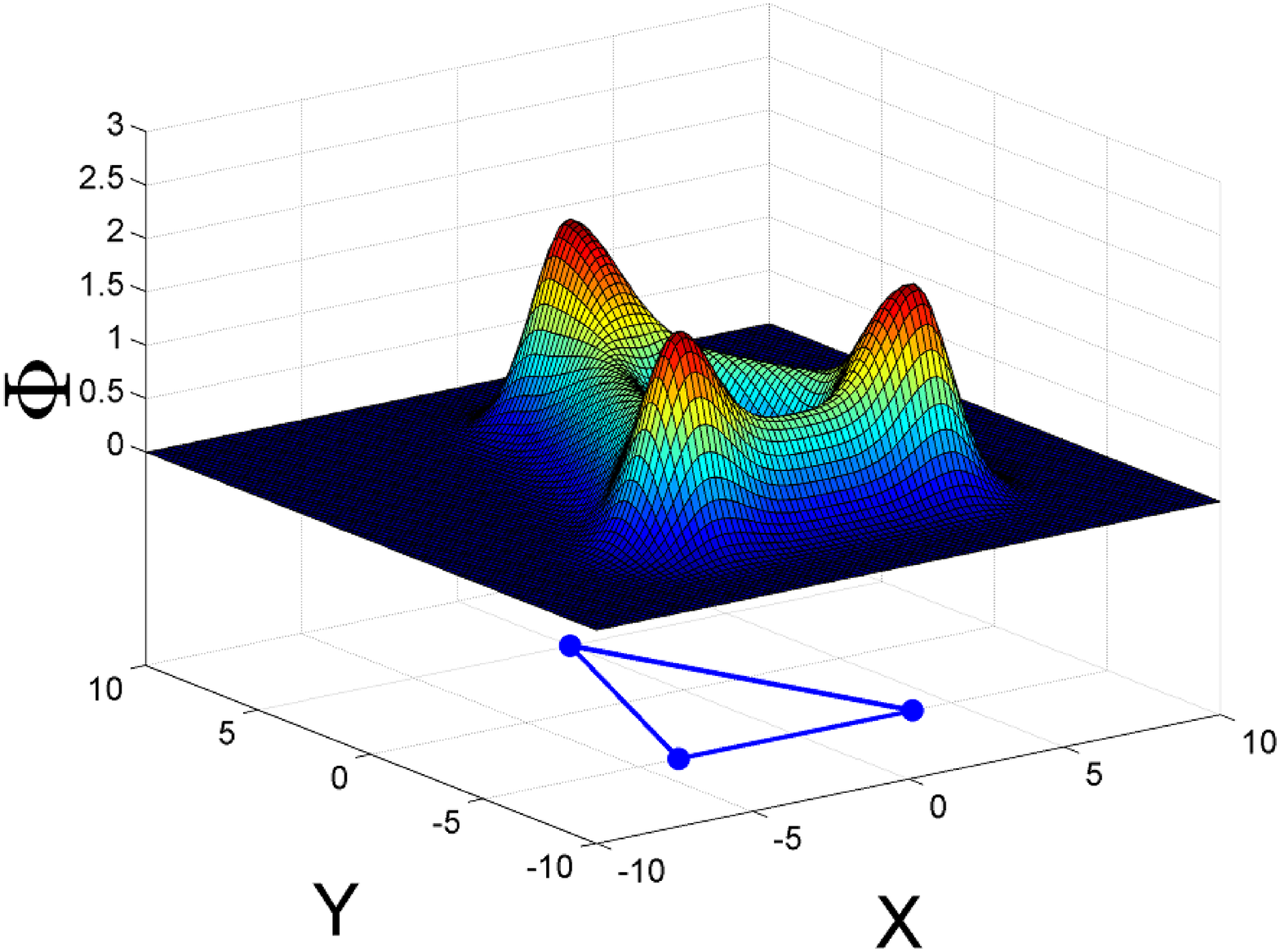}&\hspace{5mm}&
         \includegraphics[width=2.75in]{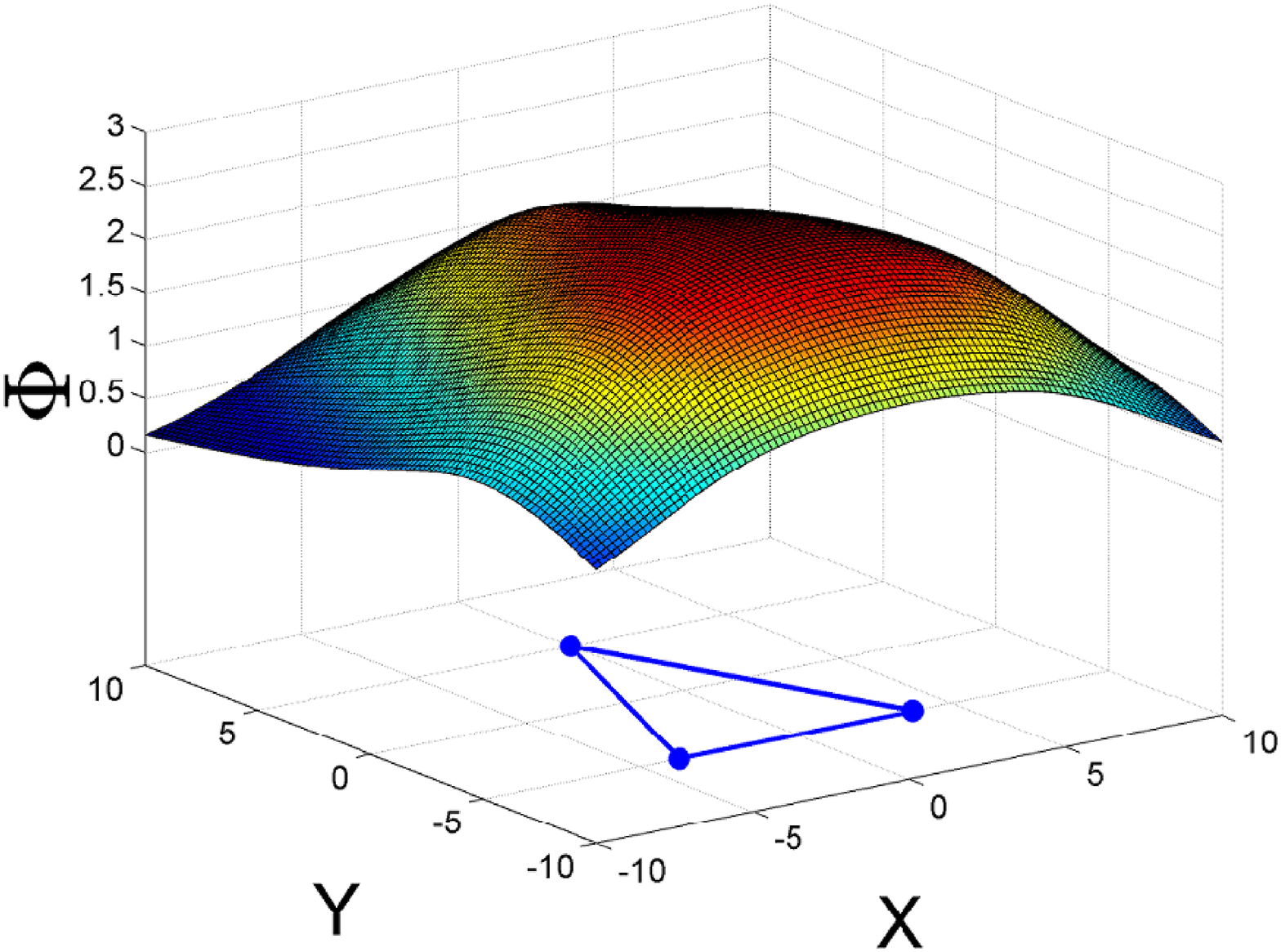}\\
         \small (i)&&\small(ii)
     \end{tabular}
     \caption{Expected damage for a triangle network and Gaussian
        probability distribution function with (i) small variance,
        (ii) large variance.}
     \label{fig:nva}
\end{figure}

For each $e_i \in \E$, let $f_i(q)=\ph(d(q,e_i))$.  The function
$\Phi(q,\E)=\sum_{i=1}^n f_i(q)$ gives the expected number of links of
$\E$ damaged by a physical attack at a location $q$; see
Figures~\ref{fig:nva} and \ref{fig:nva2}. Set
$$\Phi(\E)=\max_{q \in \reals^2} \Phi(q,\E).$$
Our ideal goal is to compute $\Phi(\E)$ and a location
% $q^*=\arg\max_{q\in\reals^2} \Phi(q,\E)$ that maximizes this
% expression.
$q^*$ such that $\Phi(q^*,\E)=\Phi(\E)$.  We refer to such a point
$q^*$ as a \emph{most vulnerable} location for $\G$.  As evident from
Figure~\ref{fig:nva2}, the function $\Phi$ can be quite complex, and
it is generally hard to compute $\Phi(\E)$ exactly, so we focus on
computing it approximately. More precisely, given an error parameter
$\delta > 0$, we seek a point $\tilde{q} \in \reals^2$ for which
$\Phi(\tilde{q},\E) \ge (1-\delta) \Phi(\E)$ (a so-called
\emph{approximately most vulnerable} location).
Agarwal~\etal~\cite{AEG*} proposed a Monte Carlo algorithm for this
task.  As it turns out, the problem can be reduced to the problem of
estimating the maximal depth in an arrangement of random Minkowski
sums of the form considered above, under the \pdf model, and its
performance then depends on the expected complexity of $\union(\K)$.
Here $\K$ is a collection of Minkowski sums of the form $e_i
\oplus D(r_i)$, for a sample of edges $e_i \in \E$ and for suitable
random choices of the $r_i$'s, from the distribution $1-\ph$.  We
adapt and simplify the algorithm in~\cite{AEG*} and prove a better
bound on its performance by using the sharp (near-linear) bound on the
complexity of $\union(\K)$ that we derive in this paper; see below and
Section~\ref{sec:nva} for details.

\begin{figure}[\si{htb}]
    \centering
    \includegraphics[width=3.25in]{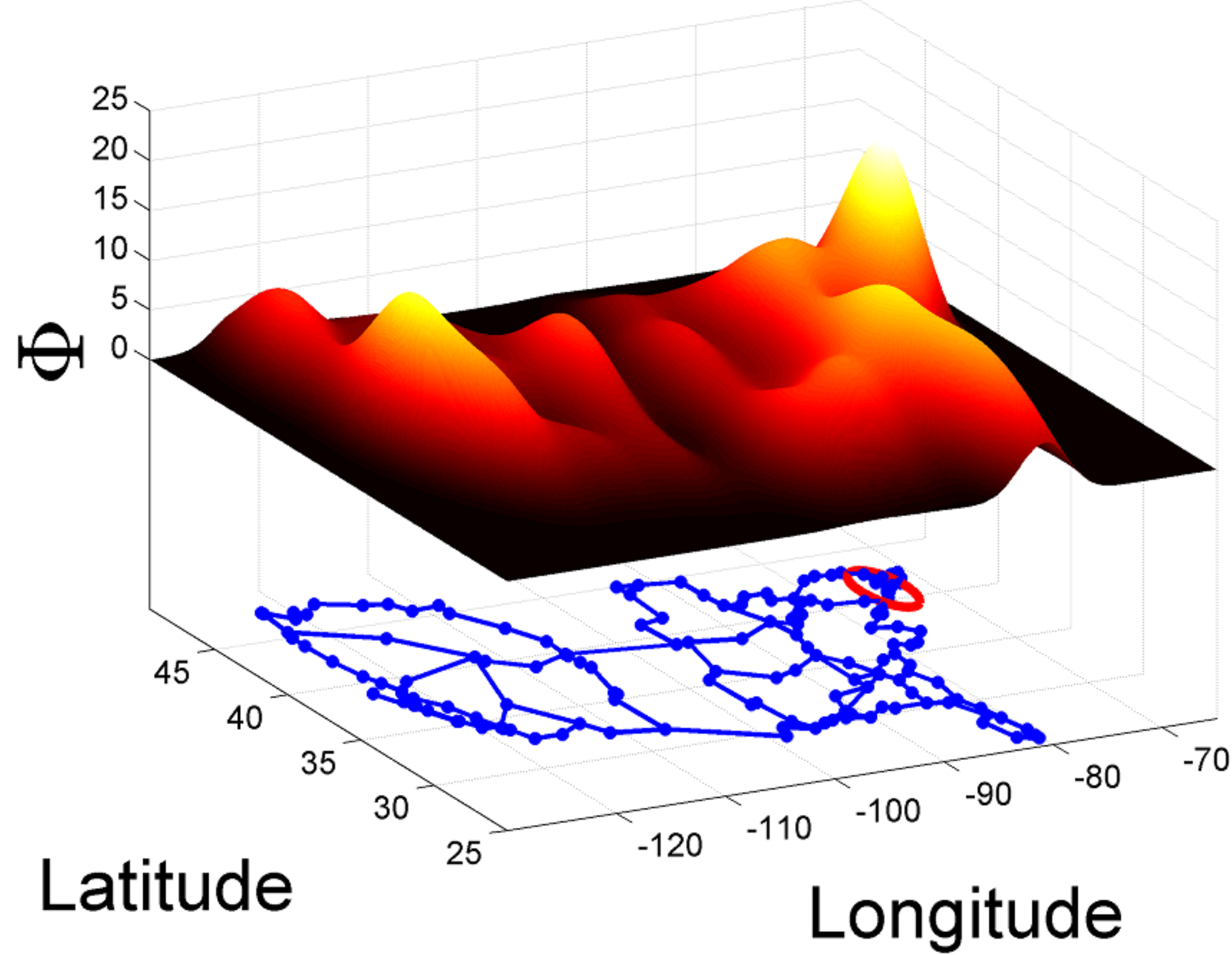}
    \caption{Expected damage for a complex fiber network. This figure
       is taken from~\cite{AEG*}.} \label{fig:nva2}
\end{figure}

\paragraph{Related work.}
\textbf{(i) \textit{Union of geometric objects.}}  There is
extensive work on bounding the complexity of the union of a set of
geometric objects, especially in $\reals^2$ and $\reals^3$, and
optimal or near-optimal bounds have been obtained for many interesting
cases. We refer the reader to the survey paper by
Agarwal~\etal~\cite{APS} for a comprehensive summary of most of the
known results on this topic. For a set of $n$ planar objects, each of
\emph{constant description complexity}, the complexity of their union
can be $\Theta(n^2)$ in the worst case, but many linear or near-linear
bounds are known for special restricted cases. For example, a fairly
old result of Kedem~\etal~\cite{KLPS} asserts that the union of a set
of pseudo-disks in $\reals^2$
%(simply connected regions, where any pair of object boundaries intersect in at most two points)
has linear complexity.  It is also shown in \cite{KLPS} that the Minkowski sums
of a set of pairwise-disjoint planar convex objects with a fixed
common convex set is a family of pseudo-disks.
%(under a suitable assumption of general position).
Hence, in our setting, if all the
$r_i$'s were equal, the result of \cite{KLPS} would then imply that
the complexity of $\union(\K)$ is $O(n)$. On the other hand, an adversial
choice of the $r_i$'s may result in a union $\union$ with
$\Theta(n^2)$ complexity; see Figure~\ref{race2}.

\begin{figure}[\si{htb}]
    \centering %
    \scalebox{0.8}
    {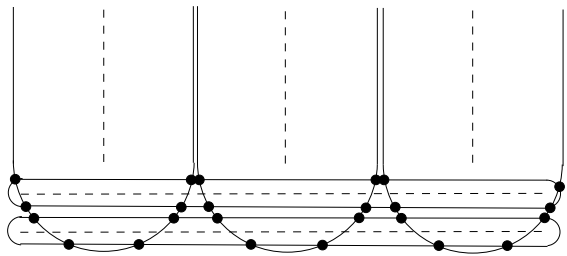}
    \caption{A bad choice of expansion distances may cause $\union$ to
       have quadratic complexity.}%
    \label{race2}
\end{figure}

\noindent\textbf{(ii) \textit{Network vulnerability analysis.}}
Most of the early work on network vulnerability analysis considered a small
number of isolated, independent failures; see, e.g.,
\cite{Bhandari99,OuBook} and the references therein.  Since physical
networks rely heavily on their physical infrastructure, they are
vulnerable to physical attacks such as electromagnetic pulse (EMP)
attacks as well as natural disasters~\cite{EMP,W-top}, not to mention
military bombing and other similar kinds of attack. This has led to
recent work on analyzing the vulnerability of a network under
geographically correlated failures due to a physical attack at a
single location~\cite{AEG-MCOM,AEG*,NM09TR,G42,W-top}. Most papers on
this topic have studied a deterministic model for the damage caused by
such an attack, which assumes that a physical attack at a location $x$
causes the failure of all links that intersect some simple geometric
region (e.g., a vertical segment of unit length, a unit square, or a
unit disk) centered at $x$. The impact of an attack is measured in
terms of its effect on the connectivity of the network, (e.g., how many
links fail, how many pairs of nodes get disconnected, etc.), and the
goal is to find the location of attack that causes the maximum damage
to the network. In the simpler model studied in \cite{AEG*} and in the
present paper, the damage is measured by the number of failed
links. This is a problem that both attackers and planners of such
networks would like to solve. The former for obvious reasons, and the
latter for identifying the most vulnerable portions of the network, in
order to protect them better.

In practice, though, it is hard to be certain in advance whether a
link will fail by a nearby physical attack. To address this situation,
Agarwal \etal~\cite{AEG*} introduced the simple probabilistic
framework for modeling the vulnerability of a network under a physical
attack, as described above.
% They studied various measures of the impact of an attack on the
% connectivity of the network.
One of the problems that they studied is to compute the largest
expected number of links damaged by a physical attack. They described
an approximation algorithm for this problem whose expected running
time is quadratic in the worst case.
A major motivation for the present study is to improve the efficiency
of this algorithm and to somewhat simplify it at the same time.

Finally, we note that the study in this paper has potential
applications in other contexts, where one wishes to analyze the
combinatorial and topological structure of the Minkowski sums (or
rather convolutions) of a set of geometric objects (or a function over
the ambient space) with some \emph{kernel function} (most notably a
Gaussian kernel), or to perform certain computations on the resulting
configuration. Problems of this kind arise in many applications,
including statistical learning, computer vision, robotics, and
computational biology; see, e.g.,~\cite{EFR,LP} and references
therein.

\paragraph{Our models.}
We consider two probabilistic models for choosing the sequence $\rr=\langle r_1,\ldots,r_n\rangle$ of expansion distances:
%The proofs of the bounds that are stated below are carried out in the
%permutation model, but since, as we will shortly argue, this model is
%stronger than the \pdf model, the bounds also hold under this latter
%model.  For the convenience of the reader we recall the two models.

\medskip
\noindent\textbf{\textit{The \pdf model.}}
We are given an arbitrary \pdf (or a probability mass function) $\pi$ over the non-negative reals; for
each $1 \le i \le n$, we take $r_i$ to be a  random
value drawn independently from the distribution determined by $\pi$.

\medskip
\noindent\textbf{\textit{The permutation model.}}
We are given a multi-set $\RRR=\{\radii_1,\ldots,\radii_n\}$ of $n$ arbitrary
non-negative real numbers. We draw a random permutation $\sigma$ on
$[1:n]$, where all permutations are equally likely to be
chosen, and assign $r_i:=\radii_{\sigma(i)}$ to $C_i$ for each
$i=1,\ldots,n$.

\medskip

Our goal is to prove sharp bounds on on the expected complexity of
the union $\union(\K)$ under these two models. More precisely, for
the density model, let $\compl(\C,\pi)$ denote the expected value of
$\compl(\C,\rr)$, where the expectation is taken over the random
choices of $\rr$, made from $\pi$ in the manner specified above. Set
$\compl(\C)=\max \compl(\C,\pi)$, where the maximum is taken over
all probability density (mass) functions. For the permutation model,
in an analogous manner, we let $\compl(\C,\RRR)$ denote the expected
value of $\compl(\C,\rr)$, where the expectation is taken over the
choices of $\rr$, obtained by randomly shuffling the members of
$\RRR$. Then, with a slight overloading of the notation, we define
$\compl(\C) = \max \compl(\C,\RRR)$, where the maximum is over all
possible choices of the multi-set $\RRR$. We wish to obtain an upper
bound on $\compl(\C)$ under both models.

We note that the permutation model is more general than the \pdf
model, in the sense that an upper bound on $\compl(\C)$ under the
permutation model immediately implies the same bound on $\compl(\C)$
under the \pdf model. Indeed,
% let $\kappa_{\mathrm{perm}}(n)$
%denote an upper bound for $\compl(\C)$ under the permutation model.
consider some given \pdf $\pi$, out of whose distribution the
distances $r_i$ are to be sampled (in the \pdf model). Interpret
such a random sample as a 2-stage process, where we first sample
from the distribution of $\pi$ a multi-set $\RRR$ of $n$ such
distances, in the standard manner of independent repeated draws, and
then assign the elements of $\RRR$ to the sets $C_i$ using a random
permutation (it is easily checked that this reinterpretation does
not change the probability space). Let $\rr$ be the resulting
sequence of expansion radii for the members of $\C$.
%The resulting probability space is identical under both interpretations.
Using the new interpretation, the  expectation of $\compl(\C,\rr)$
(under the \pdf model), conditioned on the fixed $\RRR$, is at most
$\compl(\C)$ under the permutation model.
%$\kappa_{\mathrm{perm}}(n)$.
Since this bound holds for every $\RRR$, the unconditional expected
value of $\compl(\C,\rr)$ (under the \pdf model) is also at most
$\compl(\C)$ under the permutation model.
% $\kappa_{\mathrm{perm}}(n)$.
Since this holds for every $\pdf$, the
claim follows.

We do not know whether the opposite inequality also holds. A natural
reduction from the permutation model to the \pdf model would be to
take the input set $\RRR$ of the $n$ expansion distances and regard
it as a discrete mass distribution (where each of its members can be
picked with probability $1/n$). But then, since the draws made in
the \pdf model are independent, most of the draws will not be
permutations of $\RRR$, so this approach will not turn $\compl(\C)$
under the \pdf model  into an upper bound for $\compl(\C)$ under the
permutation model.
%
% It is clear that the two models are not identical.
% For example, consider the \pdf where
% radius $1$ and $2$ are picked with equal probability. The probability
% that the set of radii generated, for $n$ objects, would have exactly
% half the values $1$ and exactly half the values $2$ is $\binom{n}{n/2}
% = \Theta (1/\sqrt{n} )$. Naturally, if we set $\RRR$ be this specific
% multi-set, then in the permutation model we will get this set of
% radii.
%This subtle difference is a minor technical issue, and seem to have no impact on the bounds we get.
%We leave the question of whether there exist a collection $\C$ and a
%\pdf $\pi$ for which $\compl(\C)$ under the pdf model is strictly
%smaller than   $\compl(\C)$ under the permutation model for future
%research.
%

\paragraph{Our results.}
The main results of this paper are near-linear upper bounds on $\compl(\C)$
under the two models discussed above. Since the permutation model is more
general, in the sense made above, we state, and prove, our results in this model.
We obtain (in Section~\ref{sec:blobs}) the following bound for the general case.

%%%%%%%%%%%%%%%%%%%%%%%%%%%%%%%%%%%%%%%%%%
\begin{theorem} \label{thm:sets}
    Let $\C=\{C_1,\ldots,C_n\}$ be a set
    of $n$ pairwise-disjoint convex sets of constant description
    complexity in $\reals^2$.  Then the value of $\compl(\C)$ under the
    permutation model, for any multi-set $\RRR$ of expansion radii,
    is $O(n^{1+\eps})$, for any $\eps > 0$;
    the constant of proportionality depends on $\eps$ and the description
    complexity of the members of $\C$, but not on $\RRR$.
\end{theorem}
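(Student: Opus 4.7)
The plan is to combine a charging scheme on the vertices of $\union(\K)$ with a multi-scale decomposition of the radii that exploits the uniform randomness of the permutation. Vertices of $\union$ split into two kinds: vertices of individual $\partial K_i$, of which there are only $O(n)$ since each $C_i$ has constant description complexity, and \emph{intersection} vertices $v \in \partial K_i \cap \partial K_j \cap \partial \union$ for pairs $i \neq j$. Only the latter kind require work, so I would charge each such $v$ to one of the two defining sets (preferably the one with the smaller of the two radii) and bound the expected total charge.

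The first step is a dyadic partition of the radii. Sort the multi-set $\RRR$ and split it into $O(\log n)$ groups $\RRR_0, \RRR_1, \ldots$ of radii lying within a factor of $2$ of one another. The random permutation then induces scale classes $\C_0, \C_1, \ldots \subseteq \C$ whose sizes are prescribed by $\RRR$ but whose members are chosen uniformly at random among subsets of the appropriate size. An intersection vertex is called \emph{intra-scale} if the two defining sets lie in the same class, and \emph{cross-scale} otherwise.

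For intra-scale vertices, since the two radii are within a factor of $2$, I would adapt the pseudo-disk machinery of Kedem \etal~\cite{KLPS} to obtain a near-linear bound on the union complexity restricted to a single scale, even though the Minkowski sums with differing radii do not literally form a pseudo-disk family. For cross-scale vertices, I would charge $v$ to the set with the smaller of the two radii and apply a Clarkson--Shor style random-sampling argument: since $v$ must lie outside every other $K_k$ in order to reach $\partial \union$, the random assignment of radii within each scale tends to hide candidate vertices inside larger $K_k$'s, making the expected charge per set small.

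The main obstacle is the intra-scale step. As Figure~\ref{race2} shows, an adversarial choice of radii within a narrow range can produce unions of quadratic complexity in the deterministic setting, so a literal appeal to~\cite{KLPS} does not suffice; instead one needs a finer combinatorial statement that uses the randomness of the permutation (equivalently, typicality of the assignment within a scale) to force a bound of the form $O(|\C_j|^{1+\eps/\log n})$ per scale, possibly via a further recursive decomposition. Combining the $O(\log n)$ intra-scale bounds with the cross-scale bound, and absorbing the resulting polylogarithmic factors into $n^{\eps}$, then yields the claimed $O(n^{1+\eps})$ expected complexity, with the constant depending on $\eps$ and on the description complexity of the $C_i$ but not on $\RRR$.
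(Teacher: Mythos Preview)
Your proposal has two genuine gaps.

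First, the dyadic partition of $\RRR$ into groups of radii within a factor of~$2$ of one another does not in general produce $O(\log n)$ groups: nothing in the hypotheses bounds the ratio $\radii_n/\radii_1$ by a polynomial in $n$ (take, e.g., $\radii_i = 2^i$). So the ``$O(\log n)$ scales'' premise fails from the outset, and with it the budget for absorbing a per-scale near-linear bound into $n^{\eps}$.

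Second, and more seriously, you correctly identify the intra-scale step as the heart of the matter and then concede that you do not know how to carry it out. Restricting the radii to a factor-of-$2$ window does not make the expanded sets pseudo-disks, and Figure~\ref{race2} already shows that a deterministic near-linear bound is impossible in that regime; so you would need an argument that exploits the randomness of the permutation \emph{within a scale}, and ``a further recursive decomposition'' is a hope, not an argument. The cross-scale charging via Clarkson--Shor is likewise only sketched: it is not clear which random sample is being analyzed or why survival of $v$ on $\bd\union$ forces the desired probability bound.

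The paper's proof avoids value-based scales altogether and proceeds by a single recursive scheme. Fix a threshold $t$ and let $\rho=\radii_{n-t}$. The set $\C^+$ of the $t$ members of $\C$ receiving the $t$ largest radii is a \emph{uniformly random} $t$-subset of $\C$. Expand every $C_i\in\C^+$ by the \emph{same} radius $\rho$; now \cite{KLPS} applies verbatim, the resulting union has complexity $O(t)$, and its complement has a vertical decomposition into $O(t)$ trapezoids. Because $\C^+$ is a random sample, a standard $\eps$-net argument shows that each trapezoid meets at most $O((n/t)\log n)$ of the smaller expansions, and one recurses on each trapezoid. After accounting for $O(nt)$ intersections with trapezoid boundaries and with the $t$ large expansions, one obtains the recurrence $T(n)\le at\,T\bigl((cn/t)\ln n\bigr)+ant$, whose solution is $O(n^{1+\eps})$. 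The key idea you are missing is the ``equalize the large radii to $\rho$'' trick, which manufactures a genuine pseudo-disk family out of a random sample and thereby sidesteps precisely the obstacle you flagged.
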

%%%%%%%%%%%%%%%%%%%%%%%%%%%%%%%%%%%%%%%%%%

If $\C$ is a collection of convex polygons, we obtain a slightly improved bound by
using a different, more geometric argument.

%%%%%%%%%%%%%%%%%%%%%%%%%%%%%%%%%%%%%%%%%%
\begin{theorem} \label{thm:main}
    Let $\C=\{C_1,\ldots,C_n\}$ be a set of $n$ pairwise-disjoint convex
    polygons in $\reals^2$, where each $C_i$ has at most $s$ vertices.
    Then the maximum value of $\compl(\C)$ under the permutation model,
    for any multi-set $\RRR$ of expansion radii, is $O(s^2n \log n)$.
\end{theorem}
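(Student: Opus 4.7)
The plan is to bound the complexity of $\union$ in two parts: the vertices of individual $K_i$'s, and the intersection vertices of pairs of boundaries lying on $\bd \union$. Since each $K_i$ is a convex polygon rounded by $s$ circular arcs and $s$ shifted edges, it has $O(s)$ vertices, so the first part contributes at most $O(sn)$. I would therefore focus the entire analysis on bounding the expected number of intersection vertices $\bd K_i \cap \bd K_j$ on $\bd\union$.

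The key geometric ingredient is the Kedem \etal (KLPS) pseudo-disk property. For any pair $K_i, K_j$ with $r_i \le r_j$, we rewrite $K_j = (C_j \oplus D(r_j - r_i)) \oplus D(r_i)$, and if the dilated set $C_j \oplus D(r_j - r_i)$ remains disjoint from $C_i$---that is, if $d(C_i, C_j) > r_j - r_i$---then KLPS applied to the pair $\{C_i,\ C_j \oplus D(r_j - r_i)\}$ with the common summand $D(r_i)$ yields that $\bd K_i$ and $\bd K_j$ cross at most twice. I call such pairs \emph{good}; the remaining pairs are \emph{bad} and can contribute up to $O(s^2)$ boundary intersections, one per pair of features (shifted edges and circular arcs) on the two boundaries.

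With this dichotomy in hand, I would run a random-incremental-construction analysis, processing the sets in the order dictated by the random permutation~$\sigma$ and using backward analysis to bound the expected complexity added at each step $k$. For good pairs, KLPS ensures that their contribution to $\bd U_k$ (the partial union at step~$k$) behaves as in a pseudo-disk family and yields a near-linear cumulative contribution. For bad pairs, I would charge each crossing on $\bd \union$ to a specific pair $(f, f')$ of features of $C_i$ and $C_j$, and exploit the random assignment of radii by $\sigma$ to argue that, in expectation, only a $O(\log n)/k$-fraction of the $O(s^2)$ feature pairs between any two polygons produce a crossing that survives on the outer boundary. Summing over all pairs, using the standard RIC harmonic sum $\sum_{k=1}^n 1/k = O(\log n)$, should give the claimed $O(s^2 n \log n)$ bound.

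The main obstacle is that, under the permutation model, the multi-set $\RRR$ is adversarial: the radii themselves can be chosen to force arbitrarily many bad pairs for a given~$\C$, and only the \emph{assignment} of radii to polygons is random. Thus the argument must show that even for a worst-case $\RRR$, the random assignment is sufficient to push most bad-pair crossings into the interior of $\union$, where they do not contribute to the boundary complexity. This requires a delicate feature-pair-by-feature-pair charging scheme specific to the polygonal structure (hence the $s^2$ factor), in which the randomness of $\sigma$ is used to argue that, for a typical candidate intersection point determined by $(f, f')$, few other $K_\ell$ with $r_\ell \ge d(\cdot, C_\ell)$ are expected to cover it.
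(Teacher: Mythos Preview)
Your plan has a genuine gap at the step that carries all the weight. The good/bad dichotomy is fine as geometry, but for bad pairs you assert that ``in expectation, only an $O(\log n)/k$-fraction of the $O(s^2)$ feature pairs \ldots\ produce a crossing that survives on the outer boundary,'' and nothing in the proposal supports this. Whether a crossing between features of $K_i$ and $K_j$ lies on $\bd\union$ depends on the radii assigned to \emph{all} other polygons, and you give no mechanism linking the random permutation $\sigma$ to this coverage event. A randomized-incremental framework also does not fit cleanly here: the randomness in the permutation model is in the \emph{assignment} of the fixed multiset $\RRR$ to the polygons, not in an insertion order, and if you insert in order of increasing radius you lose the ``random object removed'' symmetry that backward analysis needs. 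Finally, even for good pairs the KLPS linear bound requires the \emph{entire} family to be pairwise pseudo-disks; once bad pairs are present, the good pairs do not form a pseudo-disk family among themselves, so their contribution to $\bd\union$ is not automatically near-linear either.

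The paper takes a quite different route. It first proves the $O(n\log n)$ bound for \emph{segments} by classifying union vertices as CC, RR, or CR and handling each type directly: CC-vertices are $O(n)$ by the disk-union bound; RR-vertices are charged to one of the $4n$ ``portals'' (half-edges of the rectangles $R_i$), and a short probabilistic argument shows each portal receives $O(\log n)$ charges in expectation; CR-vertices are reduced via the Pach--Sharir regular/irregular dichotomy to bounding irregular CR-vertices on each semicircular arc, again $O(\log n)$ per arc. For polygons with $\le s$ vertices, the paper then decomposes: let $\C_p$ be the set of $p$-th edges, one from each polygon. Every union vertex is a vertex of $\union(\K_p\cup\K_q)$ for some pair $(p,q)$, and the segment analysis applies to each of the $\binom{s}{2}$ pairs (the radii restricted to $\C_p$ still form a random permutation), giving $O(n\log n)$ per pair and $O(s^2 n\log n)$ overall. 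The $s^2$ factor thus comes from this pair-of-edge-classes decomposition, not from a feature-pair charging inside a single bad pair.
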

%%%%%%%%%%%%%%%%%%%%%%%%%%%%%%%%%%%%%%%%%%

For simplicity, we first prove Theorem~\ref{thm:main} in
Section~\ref{sec:segs} for the special case where $\C$ is a set of $n$
segments with pairwise-disjoint relative interiors.
%say, $e_1, \ldots, e_n$, in which case
%each $K_i = e_i \oplus D(r_i)$ is a \emph{racetrack}, bounded by two
%segments that are parallel shifts, by distance $r_i$, of $e_i$, and by
%two semicircles, of radius $r_i$ each, centered at the endpoints of
%$e_i$; e.g., see Figure~\ref{race2}.
Then we extend the proof, in a
reasonably straightforward manner, to polygons in
Section~\ref{sec:poly}. The version involving segments admits a
somewhat cleaner proof, and is sufficient for the
application to network vulnerability analysis.

Using the Clarkson-Shor argument~\cite{CS}, we also obtain the
following corollary, which will be needed for the analysis in Section
\ref{sec:nva}.
%%%%%%%%%%%%%%%%%%%%%%%%%%%%%%%%%%%%%%%%%%
\begin{cor}%
    \label{cor:klevel}%
    Let $\C=\{C_1,\ldots,C_n\}$ be a set of $n$ pairwise-disjoint
    convex set of constant description compelxity.
    Let $r_1,\ldots,r_n$ be the random expansion distances, obtained under
    the permutation model, for any multi-set $\RRR$ of expansion radii,
    that are assigned to $C_1,\ldots,C_n$,
    respectively, and set $\K = \{C_i \oplus D(r_i) \mid 1\ \le i \le
    n\}$.  Then, for any $1\le k\le n$, the expected number of
    vertices in the arrangement $\A(\K)$ whose depth is at most $k$ is
    $O(n^{1+\eps}k^{1-\eps})$, for any $\eps > 0$;
    the constant of proportionality depends on $\eps$ and the description
    complexity of the members of $\C$ but not on $\RRR$.
    If each $C_i$ is a convex polygon with at most $s$ vertices, then
    the bound improves to $O(s^2nk\log(n/k))$.
\end{cor}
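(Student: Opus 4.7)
The plan is to derive this from Theorems~\ref{thm:sets} and \ref{thm:main} by a standard Clarkson--Shor random-sampling argument, adapted to the permutation model. Let $V_{\le k}$ denote the (random) set of vertices of $\A(\K)$ whose depth---the number of sets in $\K$ whose interior contains the vertex---is at most $k$. Each $v \in V_{\le k}$ is either a corner of a single $K_i$ or a transversal intersection of $\bd K_i$ and $\bd K_j$; call these at most two indices the \emph{defining} sets of $v$, and call the (at most $k$) other indices $\ell$ with $v \in \mathrm{int}(K_\ell)$ the \emph{conflict} sets.

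The construction is to draw, independently of the random permutation $\sigma$ that realizes $\K$ under the permutation model, a uniform random subset $S \subseteq \{1,\ldots,n\}$ of size $r = \lceil n/k \rceil$, and set $\C_S = \{C_i \mid i\in S\}$, $\K_S = \{C_i \oplus D(r_i) \mid i \in S\}$. The key structural observation is that, for every fixed $S$, the radii assigned to $\C_S$ are obtained by picking a uniform random sub-multi-set $\RRR_S \subseteq \RRR$ of size $r$ and assigning it to $\C_S$ via a uniform random permutation---so $(\C_S, \K_S)$ is itself a permutation-model instance on $r$ pairwise-disjoint convex sets. Since the bounds in Theorems~\ref{thm:sets} and \ref{thm:main} hold for \emph{every} multi-set of radii, I can average over $\sigma$ while holding $S$ fixed to conclude that
\[
E_\sigma\bigl[\compl(\C_S,\K_S)\bigr] \;=\; O\bigl(r^{1+\eps}\bigr)
\]
in the general case, and $O(s^2 r \log r)$ in the polygonal case, with constants depending only on $\eps$ and the description complexity (not on $\RRR$).

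Next, I would carry out the standard counting step. For an intersection-type vertex $v \in V_{\le k}$ with defining indices $i,j$ and $d(v) \le k$ conflict indices, $v$ survives on $\bd \union(\K_S)$ iff $\{i,j\} \subseteq S$ and $S$ avoids all $d(v)$ conflict indices, which happens with probability
\[
\frac{\binom{n-2-d(v)}{r-2}}{\binom{n}{r}} \;=\; \Theta\!\left(\frac{r^2}{n^2}\left(1-\frac{r}{n}\right)^{d(v)}\right) \;=\; \Omega\!\left(\frac{1}{k^2}\right),
\]
the last step using $d(v) \le k$ and $(1-1/k)^k \ge 1/e$. Summing this lower bound and comparing against the union-complexity upper bound above yields $E[|V_{\le k}|] = O(k^2 \cdot r^{1+\eps}) = O(n^{1+\eps}k^{1-\eps})$ in the general case, and $O(k^2 \cdot s^2 r \log r) = O(s^2 n k \log(n/k))$ in the polygonal case. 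Corner-type vertices (defined by a single $K_i$) are handled analogously and contribute only $O(1/k)$ to the survival probability, giving a strictly smaller term that is absorbed in the final bound.

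The only subtle point---and it is the one I would spell out carefully---is the compatibility of the subsample with the permutation model, i.e., that conditioning on a fixed $S$ turns the original random assignment into a permutation-model assignment on $\C_S$ with a random sub-multi-set of $\RRR$. Once this is verified, the Clarkson--Shor reduction is entirely routine because the bounds of Theorems~\ref{thm:sets} and \ref{thm:main} are uniform over $\RRR$; there is no issue of correlating the probability of survival with a non-uniform distribution on the surviving radii.
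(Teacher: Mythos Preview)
Your proposal is correct and follows exactly the approach the paper indicates: the paper merely says the corollary follows ``using the Clarkson--Shor argument'' and gives no further details, and your write-up supplies precisely those details. In particular, you correctly isolate the one non-routine point---that a uniform random index-subsample of a permutation-model instance is again a permutation-model instance on the sampled sets (with a uniformly random sub-multi-set of $\RRR$), so the uniform-in-$\RRR$ bounds of Theorems~\ref{thm:sets} and~\ref{thm:main} apply to the sample and the standard Clarkson--Shor inequality goes through.
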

%%%%%%%%%%%%%%%%%%%%%%%%%%%%%%%%%%%%%%%%%%

Using Theorem~\ref{thm:main} and Corollary
\ref{cor:klevel}, we present (in Section~\ref{sec:nva}) an efficient
Monte-Carlo $\delta$-approximation algorithm for computing an
approximately most vulnerable location for a network, as defined
earlier. Our algorithm is a somewhat simpler, and considerably more efficient,
variant of the algorithm
proposed by Agarwal~\etal~\cite{AEG*}, and the general approach is
similar to the approximation algorithms presented
in~\cite{AES,AHSSW,AHP} for computing the depth in an arrangement of a
set of objects. Specifically, we establish the following result.

%%%%%%%%%%%%%%%%%%%%%%%%%%%%%%%%%%%%%%%%%%
\begin{theorem} \label{thm:nva}
    Given a set $\E$ of $n$ segments in
    $\reals^2$ with pairwise-disjoint relative interiors, an
    edge-failure-probability function $\ph$ such that $1-\ph$ is a
    \cdf, and a constant $0< \delta < 1$, one can compute, in
    $O(\delta^{-4}n\log^3 n)$ time, a location $\tilde q \in\reals^2$,
    such that $\Phi(\tilde{q},\E) \ge (1-\delta)\Phi(\E)$ with
    probability at least $1-1/n^c$, for arbitrarily large $c$;
    the constant of proportionality in the running-time bound depends
    on $c$.
\end{theorem}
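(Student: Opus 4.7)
}
I plan to recast $\Phi(\E)$ as the maximum expected depth in a random arrangement of Minkowski sums and then apply a two-stage Monte-Carlo algorithm that exploits Theorem~\ref{thm:main} and Corollary~\ref{cor:klevel}. For each segment $e_i$, if $r_i$ is drawn from the distribution whose \cdf is $1-\ph$ and we set $K_i = e_i \oplus D(r_i)$, then $\ph(d(q,e_i))=\Pr[q\in K_i]$, so $\Phi(q,\E)$ equals the expectation (over $\rr$) of the depth $\mu_{\rr}(q)$ of $q$ in the arrangement $\A(\K)$ of random Minkowski sums covered by Theorem~\ref{thm:main}. The problem reduces to finding a point whose expected depth is at least $(1-\delta)\Phi(\E)$.

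Set $N_1 = N_2 = \Theta(\delta^{-2}\log n)$. In the \emph{candidate-generation} stage I draw $N_1$ independent radius vectors, construct the resulting arrangements $\A_1,\ldots,\A_{N_1}$ in $O(n\log n)$ expected time each (Theorem~\ref{thm:main}), and find a deepest cell $\tau_j$ in each $\A_j$ by a sweep, taking a representative point $q_j\in\tau_j$. In the \emph{verification} stage I draw $N_2$ fresh radius vectors, build the corresponding arrangements $\A'_1,\ldots,\A'_{N_2}$ together with point-location structures, and for every pair $(j,k)$ compute the sample depth $\mu_k'(q_j)$. The algorithm outputs the candidate $\tilde q := q_{j^*}$ that maximizes the empirical average $\hat\Phi(q_j) = \frac{1}{N_2}\sum_k \mu_k'(q_j)$.

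Correctness splits according to the size of $\Phi(\E)$. In the typical case $\Phi(\E) \ge C\delta^{-2}\log n$, a multiplicative Chernoff bound shows that for any fixed maximizer $q^*$ the sample depth $\mu_{\rr^{(j)}}(q^*)$ exceeds $(1-\delta/4)\Phi(\E)$ with probability $1-n^{-c}$, so the sample-deepest cell $\tau_j$ inherits that depth. A second Chernoff argument, union-bounded over the $N_1$ candidates, forces all the empirical averages $\hat\Phi(q_j)$ to lie within a $(1\pm\delta/4)$ factor of the true $\Phi(q_j)$, and combining these events yields $\Phi(\tilde q) \ge (1-\delta)\Phi(\E)$. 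When $\Phi(\E) < C\delta^{-2}\log n$, Corollary~\ref{cor:klevel} bounds the total complexity of cells at depth up to $C\delta^{-2}\log n$ in a single random arrangement by $O(\delta^{-2}n\log^2 n)$, so I enumerate representative points of these cells and Monte-Carlo-estimate $\Phi$ on each. The overall cost is dominated by $N_1 N_2$ point locations at $O(\log n)$ each, after $O((N_1+N_2)n\log n)$ construction, totalling $O(\delta^{-4}n\log^3 n)$.

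The main obstacle is a ``false-positive'' issue in the candidate-generation stage: the cell $\tau_j$ that is deepest in the sample arrangement $\A_j$ might correspond to a point whose \emph{expected} depth is small, the cell merely having appeared deep by chance. Controlling this requires a uniform concentration argument across all cells of $\A_j$, and the quantitative control on the number of cells at each depth level, supplied by Corollary~\ref{cor:klevel}, is precisely what lets a union bound beat down the per-cell failure probability. The verification stage then safely rules out any surviving false positives by producing an unbiased and sharply concentrated estimate $\hat\Phi(q_j)$ for each candidate.
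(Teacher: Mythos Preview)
Your approach differs substantially from the paper's, and it has a real gap.

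The paper does not sample random radius vectors and search random arrangements. It first \emph{discretizes}: with $m=\lceil 2n/\delta\rceil$ it forms the deterministic family $\K=\{e_i\oplus D(\ph^{-1}(1-j/m)) : 1\le i\le n,\ 1\le j<m\}$, so that $\Delta(q,\K)/m$ approximates $\Phi(q,\E)$ uniformly in $q$. It then estimates $\Delta(\K)$ via relative $(\rho,\eps)$-approximations: an exponential search localizes $\rho\approx\fdepth(\K)$, and a final random sample $\R\subset\K$ of size $O(\delta^{-2}\rho^{-1}\log n)$ satisfies $\fdepth(q,\R)\approx\fdepth(q,\K)$ \emph{uniformly over all} $q$, by the bounded VC dimension of the dual range space. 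Corollary~\ref{cor:klevel} enters only to bound $|\A(\R)|$, using that $\Delta(\R)=O(\delta^{-2}\log n)$.

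The gap in your plan is that nothing guarantees any candidate $q_j$ has large $\Phi(q_j)$. You establish $\mu_{\rr^{(j)}}(q_j)\ge(1-\delta/4)\Phi(\E)$ and that verification gives $\hat\Phi(q_j)\approx\Phi(q_j)$, but the deepest cell of a single sample $\A_j$ may lie at a point of small expected depth that merely happened to be deep; then every candidate is bad and verification only confirms it. The ``uniform concentration across all cells of $\A_j$'' you invoke cannot come from a union bound via Corollary~\ref{cor:klevel}: the cells of $\A_j$ and their depths are functions of the \emph{same} randomness $\rr^{(j)}$ you are trying to control, so you cannot first fix a cell and then apply Chernoff with respect to $\rr^{(j)}$. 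What you would need is a net of test points fixed in advance of the draw, with $\Phi$ essentially constant on each piece---but $\Phi$ is not piecewise constant on the cells of any single random arrangement. The same obstacle breaks your small-$\Phi(\E)$ branch: the shallow cell of $\A_j$ containing a maximizer $q^*$ may have a representative whose $\Phi$-value is far from $\Phi(q^*)$. The paper's discretization is precisely what manufactures a fixed arrangement on whose cells the target function is constant, and the VC-based relative approximation is what replaces your per-cell union bound.

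There is also a running-time problem. Theorem~\ref{thm:main} bounds the expected complexity of the \emph{union}, not of the full arrangement $\A_j$ you need in order to locate the deepest cell; that arrangement can have $\Theta(n^2)$ vertices, so you cannot build it in $O(n\log n)$ time. The paper avoids this because its sample $\R$ is small and has maximum depth only $O(\delta^{-2}\log n)$, at which point Corollary~\ref{cor:klevel} does give a near-linear bound on $|\A(\R)|$.
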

%%%%%%%%%%%%%%%%%%%%%%%%%%%%%%%%%%%%%%%%%%%%%%%%%%%%%%%%%%%%
\section{The Case of Convex Sets}
\label{sec:blobs}
%%%%%%%%%%%%%%%%%%%%%%%%%%%%%%%%%%%%%%%%%%%%%%%%%%%%%%%%%%%%

In this section we prove Theorem~\ref{thm:sets}.
We have a collection $\C=\{C_1,\ldots,C_n\}$ of $n$
pairwise-disjoint compact convex sets in the plane, each of constant
description complexity.
Let $\RRR$ be a multi-set of $n$ non-negative real numbers
$0 \le \radii_1\le \radii_2\le \cdots \le \radii_n$.  We choose a random
permutation $\sigma$ of $[1:n]$, where all permutations are equally
likely to be chosen, put $r_i = \radii_{\sigma(i)}$ for $i=1,\ldots,n$,
and form the Minkowski sums $K_i = C_i \oplus D(r_i)$, for $i=1,\ldots,n$. We put
$\K=\{K_1,\ldots,K_n\}$. We prove a near-linear upper bound on the expected complexity of
$\U(\K)$, as follows.

Fix a parameter $t$, whose value will be determined later, and put
$\rho=\radii_{n-t}$,  the $(t+1)$-st largest distance in $\RRR$. Put
    $\C^+  = \{ C_i\in\C \mid \sigma(i) > n-t \}$ and
    $\C^-  = \{ C_i\in\C \mid \sigma(i) \le n-t \}$.
That is, $\C^+$ is the set of the $t$ members of $\C$ that were
assigned the $t$ largest distances in $\RRR$, and $\C^-$ is the
complementary subset.

By construction, $\C^+$ is a random subset of $\C$ of size $t$ (where
all $t$-element subsets of $\C$ are equally likely to arise as
$\C^+$). Moreover, conditioned on the choice of $\C^+$, the set $\C^-$
is fixed, and the subset $\RRR^-$ of the $n-t$ distances in $\RRR$
that are assigned to them is also fixed. Furthermore, the permutation
that assigns the elements of $\RRR^-$ to the sets in $\C^-$ is a
random permutation.

For each $C_i\in\C^+$, put $K_i^* = C_i\oplus D(\rho)$. Put $(\K^*)^+
= \{ K_i^* \mid C_i\in\C^+\}$, and let $\U^*$ denote the union of
$(\K^*)^+$. Note that $\U^*\subseteq\U$, because $K_i^*\subseteq K_i$
for each $C_i\in\C^+$. Since the $C_i$'s are pairwise-disjoint and we
now add the same disk to each of them, $(\K^*)^+$ is a collection of
pseudo-disks \cite{KLPS}, and therefore $\U^*$ has $O(t)$ complexity.

Let $\V$ denote the \emph{vertical decomposition} of the complement of
$\U^*$; it consists of $O(t)$ \emph{pseudo-trapezoids}, each defined
by at most four elements of $(\K^*)^+$ (that is, of $\C^+$).  See
\cite{SA} for more details concerning vertical decompositions.  For
short, we refer to these pseudo-trapezoids as \emph{trapezoids}.

In a similar manner, for each $C_i\in\C^-$, define $K_i^* = C_i \oplus
D(\rho)$; note that $K_i\subseteq K_i^*$ for each such $i$.  Put
$(\K^*)^- = \{ K_i^* \mid C_i\in\C^-\}$. Since $\C^+$ is a random
sample of $\C$ of size $t$, the following lemma follows from a standard
random-sampling argument; see~\cite[Section~4.6]{Mat} for a proof.

\begin{lemma}
    With probability $1-O\left(\frac{1}{n^{c-4}}\right)$, every (open)
    trapezoid $\tau$ of $\V$ intersects at most $k:= \frac{cn}{t}\ln
    n$ of the sets of $(\K^*)^-$, for sufficiently large $c>4$.
\end{lemma}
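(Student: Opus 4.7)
The plan is to apply the classical Clarkson--Shor random-sampling framework (see, e.g., Matou\v sek~\cite{Mat}, Section~4.6) to the sample $\C^+$. By construction (a uniformly random permutation assigns the $t$ largest radii to $\C^+$), the set $\C^+$ is a uniformly random $t$-element subset of $\C$. Crucially, the radius $\rho=\radii_{n-t}$ is a deterministic function of $\RRR$ alone, so the enlarged sets $K_i^{\,*}=C_i\oplus D(\rho)$ are fixed; the only randomness lies in which of them are selected into $(\K^*)^+$.

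First I would set up the configuration space: a \emph{potential trapezoid} is any pseudo-trapezoid that can arise in the vertical decomposition of the complement of the union of some subset of the $K_i^{\,*}$'s. Each such potential trapezoid $\tau$ has a \emph{defining set} $D_\tau\subseteq\C$ of size at most $4$ (the at most four sets whose boundaries contribute the floor, ceiling, and (up to) two vertical walls of $\tau$), and a \emph{conflict list} $L_\tau$ consisting of all other $C_j\in\C$ whose expansion $K_j^{\,*}$ crosses the interior of $\tau$. Then $\tau$ appears as a cell of $\V$ iff $D_\tau\subseteq\C^+$ and $L_\tau\cap\C^+=\emptyset$. Since defining sets are drawn from $\C$ and have size $\le 4$, the total number of potential trapezoids is $O(n^4)$.

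Next, call $\tau$ \emph{bad} if $|L_\tau|\ge k=(cn/t)\ln n$. For any fixed bad $\tau$ with $d=|D_\tau|\le 4$ and $\ell=|L_\tau|\ge k$, a direct sampling-without-replacement estimate gives
$$
\Pr\bigl[D_\tau\subseteq\C^+,\ L_\tau\cap\C^+=\emptyset\bigr]
= \frac{\binom{n-d-\ell}{t-d}}{\binom{n}{t}}
\le \left(\frac{t}{n}\right)^{d}\left(1-\frac{t}{n}\right)^{\ell-d}
\le \left(\frac{t}{n}\right)^{4} e^{-t k/(2n)}
= \left(\frac{t}{n}\right)^{4} n^{-c/2},
$$
(with constants tunable to give the exponent $c$ in the final bound by choosing $k$ accordingly). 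A union bound over the $O(n^4)$ potential bad trapezoids then yields a failure probability of $O\bigl(n^{4}\cdot (t/n)^{4}\cdot n^{-c}\bigr)=O(n^{-(c-4)})$, as claimed.

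The only real obstacle is the setup step: one must verify carefully that every cell of $\V$ is indeed captured by some potential trapezoid with a defining set of size at most four, and that the ``appears in $\V$'' event decomposes cleanly into the ``defining sets in $\C^+$, conflicting sets out of $\C^+$'' condition. This is a standard feature of vertical decompositions of unions of pseudo-disks (which the sets $K_i^{\,*}$ form), so once stated it is essentially bookkeeping; the probability estimate and the union bound are then routine.
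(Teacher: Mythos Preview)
Your proposal is correct and follows essentially the same route as the paper's (removed) proof: observe that $\rho$ is deterministic, enumerate the $O(n^4)$ potential trapezoids determined by at most four sets of $\C$, bound the probability that a trapezoid with conflict list of size at least $k$ survives in $\V$, and take a union bound. One small slip: the step $(t/n)^d \le (t/n)^4$ goes the wrong way when $d<4$ and $t<n$; simply drop the $(t/n)^d$ factor (bound it by $1$), as the paper does, using only $\Pr[L_\tau\cap\C^+=\emptyset]\le\binom{n-k}{t}/\binom{n}{t}<(1-k/n)^t<e^{-kt/n}=n^{-c}$, and the union bound over $O(n^4)$ candidates yields the stated $O(n^{-(c-4)})$.
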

\remove{
\begin{proof}
    Fix a trapezoid $\tau$ among the set $\T$ of $O(n^4)$ possible
    trapezoids that can appear in $\V$ (over all possible choices of $\C^+$).
    Each such trapezoid is defined
    by at most four sets in $\C$ and by $\rho$ (recall that $\rho$ is
    not a random variable; it depends only on the input set $\RRR$ and
    the threshold index $t$).
    Suppose $\tau$ intersects more than $k$ of the sets of $\K^*=\{
    K_i^* \mid C_i\in\C\}$.

    For $\tau$ to appear in $\V$, none of these $k$ sets should be
    sampled in $\C^+$. Since $\C^+$ is a random sample of $\C$ of size
    $t$, the probability of this to happen is at most
    \begin{align*}
        \frac{\binom{n-k}{t}}{\binom{n}{t}} & =
        \frac{(n-k)(n-k-1)\cdots (n-k-t+1)}{n(n-1)\cdots (n-t+1)} \\
        & < \left(\frac{n-k}{n}\right)^t = \left(1 -
            \frac{k}{n}\right)^t < e^{-kt/n} = \frac{1}{n^c} .
    \end{align*}
    Taking the union bound over all $O(n^4)$ trapezoids in $\T$ we get
    that, with probability $1-O\left(\frac{1}{n^{c-4}}\right)$, each
    trapezoid in $\V$ is crossed by at most $\frac{cn}{t}\ln n$ sets
    of $(\K^*)^-$.
\end{proof}
}

For each trapezoid $\tau$ of $\V$, let $\C_\tau^-$ denote the
collection of the sets $C_i\in\C^-$ for which $K_i^*$ crosses $\tau$.
We form the union $\U_\tau^-$ of the ``real'' (and smaller)
corresponding sets $K_i$, for $C_i\in\C_\tau^-$, and clip it to within
$\tau$ (clearly, no other set $C_i\in\C^-$ can have its real expansion
$K_i$ meet $\tau$). Finally, we take all the ``larger'' sets
$C_i\in\C^+$, and form the union of $\U_\tau^-$ with the corresponding
``real'' $K_i$'s, again clipping it to within $\tau$.  The overall
union $\U$ is the union of $\U^*$ and of all the modified unions
$\U_\tau^-$, for $\tau\in\V$.

This divide-and-conquer process leads to the following recursive
estimation of the expected complexity of $\U$.  For $m\le n$, let
$\C'$ be any subset of $m$ sets of $\C$, and let $\RRR'$ be any subset
of $m$ elements of $\RRR$, which we enumerate, with some abuse of
notation, as $\theta_1,\ldots,\theta_m$. Let $T(\C',\RRR')$ denote the expected
complexity of the union of the expanded regions $C_i \oplus
D(\theta_{\sigma'(i)})$, for $C_i\in\C'$, where the expectation is over the
random shuffling permutation $\sigma'$ (on $(1,\ldots,m)$). Let $T(m)$
denote the maximum value of $T(\C',\RRR')$, over all subsets $\C'$ and
$\RRR'$ of size $m$ each, as just defined.

Let us first condition the analysis on a fixed choice of $\C^+$.  This
determines $\U^*$ and $\V$ uniquely. Hence we have a fixed set of
trapezoids, and for each trapezoid $\tau$ we have a fixed set
$\C_\tau^-$ of $k_\tau = |\C_\tau^-|$ sets, whose expansions by $\rho$
meet $\tau$. The set $\RRR_\tau^-$ of distances assigned to these sets
is not fixed, but it is a \emph{random subset} of
$\{\theta_1,\ldots,\theta_{n-t}\}$ of size $k_\tau \le \frac{cn}{t}\ln n$, where
$k_\tau$ depends only on $\tau$.  Moreover, the assignment (under the
original random permutation $\sigma$) of these distances to the sets
in $\C_\tau^-$ is a \emph{random} permutation. Hence, conditioning
further on the choice of $\RRR_\tau^-$, the expected complexity of
$\U_\tau^-$, before its modification by the expansions of the larger
sets of $\C^+$, and ignoring its clipping to within $\tau$, is
$$
T(\C_\tau^-,\RRR_\tau^-) \le T(k_\tau) \le T\left( \frac{cn}{t}\ln
    n\right) .
$$
Hence the last expression also bounds the unconditional expected
complexity of the unmodified and unclipped $\U_\tau^-$ (albeit still conditioned on
the choice of $\C^+$).  Summing this over all $O(t)$ trapezoids $\tau$
of $\V$, we get a bound of at most
$$
at T\left( \frac{cn}{t}\ln n\right) ,
$$
for a suitable absolute constant $a$. Since this bound holds for any
choice of $\C^+$, it also bounds the unconditional expected value of
the sum of the complexities of the unmodified unions $\U_\tau^-$. To
this we need to add the complexity of $\U^*$, which is $O(t)$, the
number of intersections between the boundaries of the unions $\U_\tau$
with the respective trapezoid boundaries, and the
number of intersections between the boundaries of the $t$ larger
expansions and the boundaries of all the expansions, that appear on
the boundary of $\U$. The last two quantities are clearly both
at most $O(nt)$ (the constant
in this latter expression depends on the description complexity of the
sets in $\C$). Altogether, we obtain the following recurrence (reusing
the constant $a$ for simplicity).
$$
T(n) \le at T\left( \frac{cn}{t}\ln n\right) + ant ,
$$
which holds when $n$ is sufficiently large. When $n$ is small, we use
the trivial bound $T(n) = O(n^2)$.

With appropriate choice of parameters, the solution of this recurrence
is $T(n) \le An^{1+\eps}$, for any $\eps>0$, where $A$ depends on
$\eps$ and on the other constants appearing in the recurrence. For
this, one needs to choose $t\gg (\log n)^{(1+\eps)/\eps}$, and then
choose $A$ sufficiently large so that the additive term is
significantly subsumed by the other terms, and so that the quadratic
bound for small values of $n$ is also similarly subsumed.  Leaving out
the remaining routine details, we have thus established the bound
asserted in the theorem.  \hfill $\Box$

%%%%%%%%%%%%%%%%%%%%%%%%%%%%%%%%%%%%%%%%%%

\section{The Case of Segments}
\label{sec:segs}

Let $\E=\{e_1,\ldots,e_{n}\}$ be a collection of $n$
line segments in the plane with pairwise-disjoint relative interiors, and
as in Section~\ref{sec:blobs}, let $\RRR$ be a multi-set of $n$ non-negative real numbers
$0 \le \radii_1 \le \radii_2 \le \cdots \le \radii_n$.
For simplicity, we assume that the segments in $\E$ are in general
position, i.e., no segment is vertical, no two of them share an endpoint,
and no two are parallel. Moreover, we
assume that the expansion distances in $\RRR$ are positive and in ``general
position'' with respect to $\E$, so as to ensure that, no matter which
permutation we draw, the racetracks of $\K$ are also in general
position---no pair of them are tangent and no three have a common
boundary point.\footnote{%
   When we carry over the analysis to the \pdf model, the latter
   assumption will hold with probability $1$ when $\pi$ is Lebesgue-continuous,
   but may fail for a discrete probability mass distribution. In the
   latter situation, we can use symbolic perturbations to turn $\pi$
   into a density in general position, without affecting the asymptotic
   bound that we are after.}
Using the standard symbolic perturbation techniques (see e.g.~\cite[Chapter~7]{SA}), the proof can be
extended when $\E$ or $\RRR$ is not  in general position or when some of
the expansion distances are $0$; we omit here the routine details.

For each
$1 \le i \le n$, let $\lpt_i, \rpt_i$ be the left and right endpoints,
respectively, of $e_i$ (as mentioned, we assume, with no loss of generality, that no
segment in $\E$ is vertical). We draw a random permutation $\sigma$ of
$\{1,\ldots,n\}$, and, for each $1\le i \le n$, we put $r_i =
\theta_{\sigma(i)}$. We then form the Minkowski sums $K_i = e_i \oplus
D(r_i)$, for $i=1,\ldots,n$.  We refer to such a $K_i$ as a
\emph{racetrack}. Its boundary consists of two
semicircles $\gamma_i^-$ and $\gamma_i^+$, centered at the respective endpoints
$\lpt_i$ and $\rpt_i$ of $e_i$, and of two parallel copies, $e_i^-$
and $e_i^+$, of $e_i$; we use $e_i^-$ (resp., $e_i^+$) to denote the
straight edge of $K_i$ lying below (resp., above) $e_i$. Let
$\lpt_i^-,\lpt_i^+$ (resp., $\rpt_i^-,\rpt_i^+$) denote the left
(resp., right) endpoints of $e_i^-, e_i^+$, respectively. We regard
$K_i$ as the union of two disks $D_{i}^-$, $D_{i}^+$ of radius $r_i$
centered at the respective endpoints $\lpt_i$, $\rpt_i$ of $e_i$, and
a rectangle $R_i$ of width $2r_i$ having $e_i$ as a midline. The left
endpoint $\lpt_i$ splits the edge $\lpt_i^-\lpt_i^+$ of $R_i$ into two
segments of equal length, and similarly $\rpt_i$ splits the edge
$\rpt_i^-\rpt_i^+$ of $R_i$ into two segments of equal lengths. We
refer to these four segments $\lpt_i\lpt_i^-, \lpt_i\lpt_i^+,
\rpt_i\rpt_i^-, \rpt_i\rpt_i^+$ as the \emph{\halfedges} of $R_i$;
see \figref{racetrack}(i).
Set $\K=\{K_i\mid 1 \le i \le n\}$,
$\D=\{D_i^+,D_i^- \mid 1 \le i \le n\}$,
and $\R = \{R_i \mid 1 \le i \le n\}$.

As above, let $\union = \union(\K)$ denote the union of $\K$. We show
that the expected number of vertices on $\bd \union$ is $O(n\log n)$,
where the expectation is over the choice of the random permutation
$\sigma$ that produces the distances $r_1,\ldots,r_{n}$.

\begin{figure}[\si{htb}]
    \centering
    \begin{tabular}{\si{ccc}}
%        \scalebox{0.5}
        {
           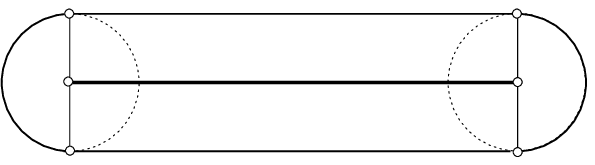%
        }
        &&
        %\scalebox{0.5}
        {%
           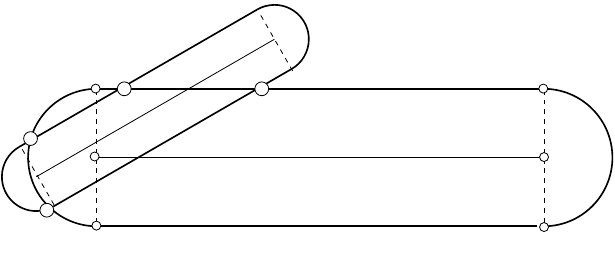%
        }%
        \\
        \small (i)&&\small (ii)
    \end{tabular}
    \caption{XXX (i) Segment $e_i$, racetrack $K_i$, and its constituents
       rectangle $R_i$ and disks $D_i^-,D_i^+$. (ii) Union of two
       racetracks. $\alpha, \beta$ are RR-vertices, $\zeta$ is a
       CR-vertex, and $\xi$ is a CC-vertex; $\alpha$ is a non-terminal
       vertex and $\beta$ is a terminal vertex (because of the edge of
       $R_j$ it lies on, which ends inside $R_i$).}%
    \figlab{racetrack}
\end{figure}

We classify the vertices of $\bd \union$ into three types (see
\figref{racetrack}(ii)):
\begin{itemize}
    \item[(i)] \emph{CC-vertices}, which lie on two semicircular arcs
    of the respective pair of racetrack boundaries;
    \item[(ii)] \emph{RR-vertices}, which lie on two straight-line
    edges; and
    \item[(iii)] \emph{CR-vertices}, which lie on a semicircular arc
    and on a straight-line edge.
\end{itemize}

Bounding the number of CC-vertices is trivial because they are also
vertices of $\union (\D)$, the union of the $2n$ disks $D_i^-$,
$D_i^+$, so their number is $O(n)$~\cite{APS,KLPS}. We therefore focus
on bounding the expected number of RR- and CR-vertices of $\bd\union$.

\subsection{RR-vertices}
Let $v$ be an RR-vertex of $\union$, lying on $\partial R_i$ and
$\partial R_j$, the rectangles of two respective segments $e_i$ and
$e_j$. Denote the edges of $R_i$ and $R_j$ containing $v$ as $\eta_i
\in\{e_i^-,e_i^+\}$ and $\eta_j \in \{e_j^-,e_j^+\}$, respectively.  A
vertex $v$ is \emph{terminal} if either a subsegment of $\eta_i$
connecting $v$ to one of the endpoints of $\eta_i$ is fully contained
in $K_j$, or a subsegment of $\eta_j$ connecting $v$ to one of the
endpoints of $\eta_j$ is fully contained in $K_i$; otherwise $v$ is a
\emph{non-terminal} vertex. For example, in \figref{racetrack}(ii),
$\beta$ is a terminal vertex, and $\alpha$ is a non-terminal vertex.
There are at most $4n$ terminal vertices on $\bd\union$, so it
suffices to bound the expected number of non-terminal vertices.

Our strategy is first to describe a scheme that charges each non-terminal
RR-vertex $v$ to one of the \halfedges of one of the rectangles of $\R$
on whose boundary $v$ lies, and then to
prove that the expected number of vertices charged to each \halfedge
is $O(\log n)$. The bound $O(n\log n)$ on the expected number of
(non-terminal) RR-vertices then follows.
% \micha{Either remove the rest of the para,
   %or add the ``negative-slope'' condition too.}  The only properties
%that we require from the charging is that when a non-terminal
%RR-vertex $v$, lying on $\bd R_i\cap \bd R_j$, for a corresponding
%pair of segments $e_i$, $e_j$, is charged to a \halfedge $g$ then (i)
%$g$ is a \halfedge of $R_i$ or $R_j$, say $R_j$, adjacent to the edge
%of $R_j$ containing $v$, and (ii) the other segment, in this case
%$e_i$, crosses $g$.

%%%%%%%%%%%%%%%%%%%%%%%%%%%%%%%%%%%%%%%%%%%%%%%%%%%%%%%%%%%%%%%%%%%%%

Let $v$ be a non-terminal RR-vertex lying on $\bd R_i\cap\bd R_j$, for
two respective input segments $e_i$ and $e_j$.  To simplify the
notation, we rename $e_i$ and $e_j$ as $e_1$ and $e_2$. Note that $v$
is an intersection of $e^+_1$ or $e_1^-$ with $e_2^+$ or
$e_2^-$. Since the analysis is the same for each of these four choices
we will say that $v$ is the intersection of $e_1^\pm$ with $e_2^\pm$
where $e_1^\pm$ (resp., $e_2^\pm$) is either $e^+_1$ (resp., $e_2^+$)
or $e_1^-$ (resp., $e_2^-$).

For $i=1,2$, let $g_i^\pm$ denote the (unique) portion of $e_i^\pm$
between $v$ and an endpoint $w_i$ so that, locally near $v$, $g_i^\pm$
is contained in the second racetrack $K_{3-i}$ (that is, in
$R_{3-i}$). Now take $Q_i$ to be the rectangle with $g_i^\pm =vw_i$ as
one of its sides, and with the orthogonal projection of $g_i^\pm$ onto
$e_i$ as the opposite parallel side. We denote this edge of $Q_i$,
which is part of $e_i$, by $E_i$. We let $A_i$ be the side
perpendicular to $E_i$ and incident to $w_i$. Note that $A_i$ is one
of the four \halfedges of $R_i$. We denote by $A_i^*$ the side of
$Q_i$ that is parallel to $A_i$ (and incident to $v$). See
Figure~\ref{qandr}.

Each rectangle $Q_i$, $i=1,2$, has a complementary rectangle $Q'_i$ on
the same side of $e_i$, which is openly disjoint from $Q_i$, so that
$Q_i\cup Q'_i$ is the half of the rectangular portion $R_i$, between
$e_i$ and $e_i^\pm$, of the full racetrack $K_i$.

\begin{figure}[\si{htb}]
    \begin{center}
        \includegraphics{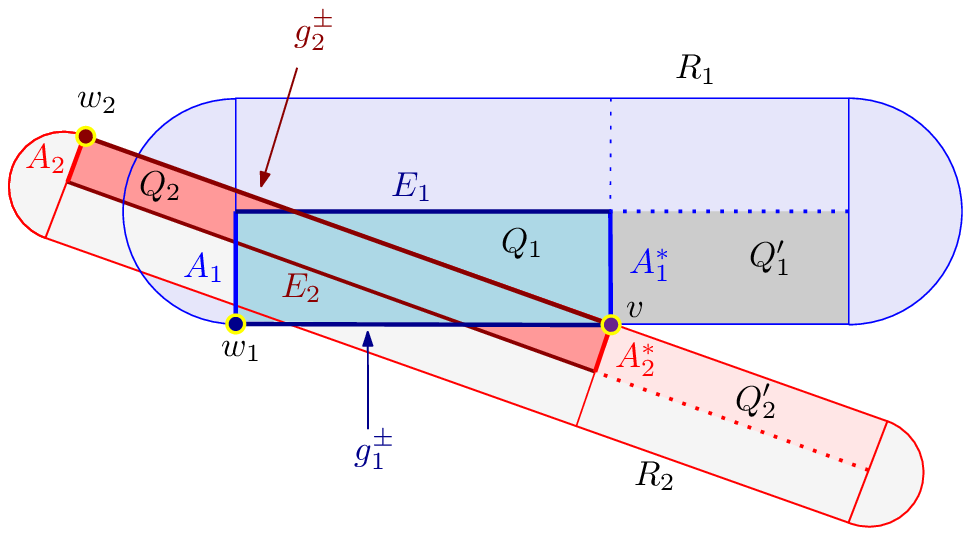}%
        % \scalebox{0.5}{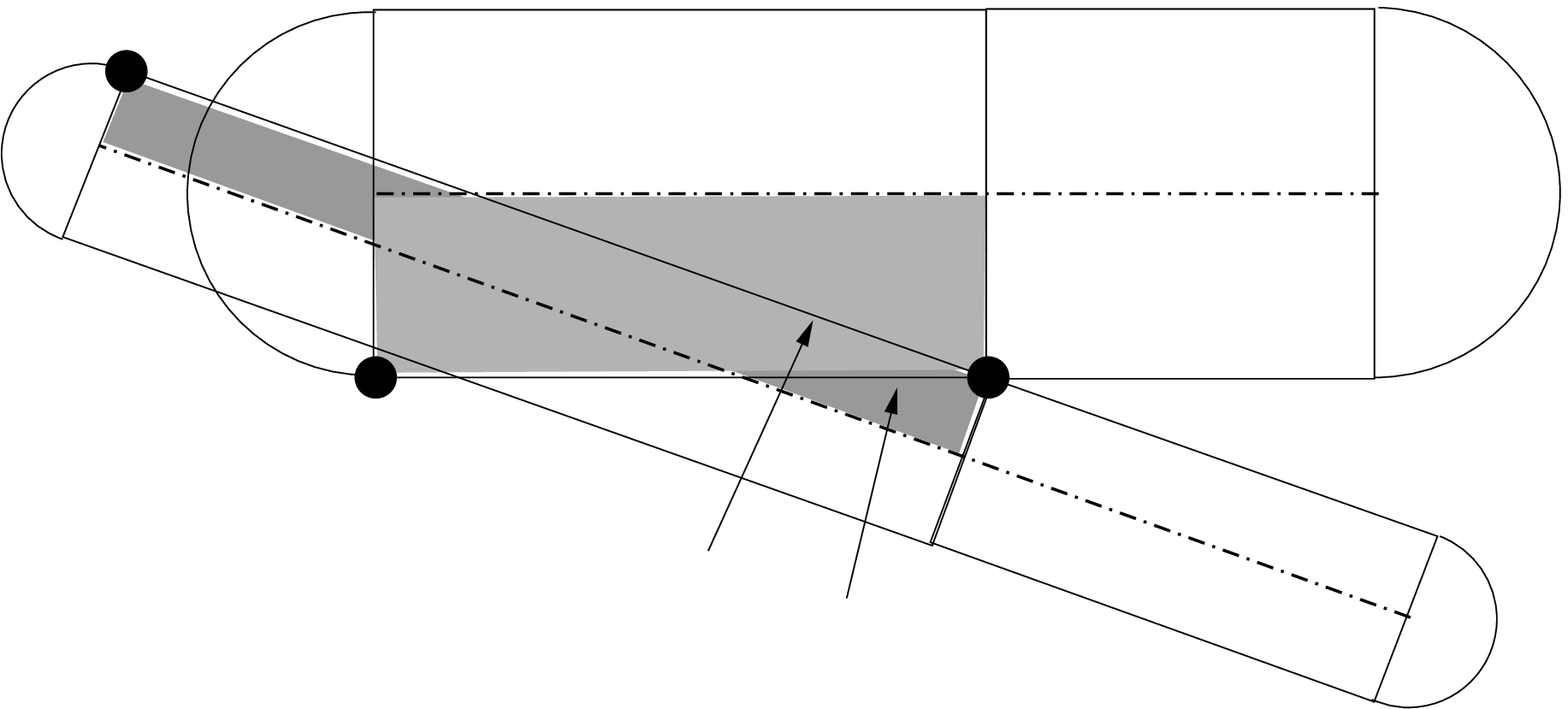}
        \caption{The rectangles $Q_1$ and $Q_2$ defined for a
           non-terminal RR-vertex $v$.}%
        \label{qandr}
    \end{center}
\end{figure}

Since $E_i\subset e_i$ for $i=1,2$, it follows that $E_1$ and $E_2$
are disjoint. Since $v$ is a non-terminal RR-vertex, $w_1$ lies
outside $K_2$ and $w_2$ lies outside $K_1$. So, as we walk along the
edge $vw_i$ of $Q_i$ from $v$ to $w_i$, we enter, locally near $v$,
into the other racetrack $K_{3-i}$, but then we have to exit it before
we get to $w_i$. Note that either of these walks, say from $v$ to
$w_1$, may enter $Q_2$ or keep away from $Q_2$ (and enter instead the
complementary rectangle $Q'_2$); see Figures~\ref{qandr} and
\ref{rect2b2c}(i) for the former situation, and Figure~\ref{rect2b2c}(ii)
for the latter one.

Another important property of the rectangles $Q_1$ and $Q_2$, which
follows from their definition, is that they are ``oppositely
oriented'', when viewed from $v$, in the following sense. When we view
from $v$ one of $Q_1$ and $Q_2$, say $Q_1$, and turn counterclockwise,
we first see $E_1$ and then $A_1$, and when we view $Q_2$ and turn
counterclockwise, we first see $A_2$ and then $E_2$.

So far our choice of which among the segments defining $v$ is denoted
by $e_1$ and which is denoted by $e_2$ was arbitrary. But in the rest
of our analysis we will use $e_1$ to denote the segment such that when
we view $Q_1$ from $v$ and turn counterclockwise, we first see $E_1$
and then $A_1$. The other segment is denoted by $e_2$.

The following lemma provides the key ingredient for our charging
scheme.

%%%%%%%%%%%%%%%%%%%%%%%%%%%%%
\begin{lemma}%
    \lemlab{chg}%
    Let $v$ be a non-terminal $RR$-vertex. Then, in the terminology
    defined above, one of the edges, say $E_1$, has to intersect
    either the \halfedge $A_2$ of the other rectangle $Q_2$, or the
    \halfedge $A'_2$ of the complementary rectangle $Q'_2$.
\end{lemma}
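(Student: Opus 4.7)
I will set up coordinates with $e_2$ along the $x$-axis, $\lpt_2 = (a_2, 0)$, $\rpt_2 = (b_2, 0)$, and $v = (0, r_2) \in e_2^+$. Let $\theta \in (0,\pi)$ be the angle that $e_1$ makes with $e_2$; after a possible reflection, WLOG $\sin\theta > 0$, so that $g_1^\pm$ starts at $v$ and immediately enters $R_2$ (its direction has negative $y$-component). Since $v$ is non-terminal, $w_1 \notin K_2 \supseteq R_2$, so $g_1^\pm$ must exit $R_2$ at some interior point $u$; by transversality, $u$ lies on $e_2^-$, on $\lpt_2^-\lpt_2^+$, or on $\rpt_2^-\rpt_2^+$ (not on $e_2^+$, which the line of $g_1^\pm$ crosses only at $v$). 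A symmetric statement holds for the trajectory of $g_2^\pm$ inside $R_1$.

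The central elementary observation is that the line carrying $E_1$ is parallel to the line carrying $g_1^\pm$, obtained from it by a perpendicular shift of magnitude $r_1$ toward $e_1$. Consequently, the two lines cross each short edge of $R_2$ (which is a vertical line in these coordinates) at $y$-coordinates differing by exactly $r_1/\cos\theta$, with the $E_1$-crossing lying that much lower. In particular, if $u$ lies on the upper halfedge $A_2 \cup A_2'$ at height $h \in [0,r_2]$, and if $r_1/\cos\theta \le h$, then the line of $E_1$ crosses the same short edge at height $h - r_1/\cos\theta \in [0,r_2]$, which still lies on $A_2 \cup A_2'$. The non-terminal condition $w_1 \notin K_2$ forces the far endpoint $w_1'$ of $E_1$ to lie beyond this crossing along $e_1$, so the crossing point in fact belongs to $E_1$, giving the first alternative of the lemma.

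When the preceding case does not apply --- that is, either $u \notin A_2\cup A_2'$ or $h < r_1/\cos\theta$ --- I will invoke the fully symmetric analysis for $g_2^\pm$ inside $R_1$ and for $E_2$ versus the short edges of $R_1$. Using that $Q_1$ and $Q_2$ are oppositely oriented at $v$, the failure condition above converts, via the matching trigonometric identity between the two shift magnitudes, into the corresponding shift condition for $E_2$; combined with $w_2 \notin K_1$, this places a crossing of $A_1$ or $A_1'$ with the $x$-axis (the line of $e_2$) precisely inside the segment $E_2 = v_2' w_2'$, yielding the second alternative. The main obstacle is verifying this complementarity cleanly across all the subcases, which rests on the fact that the two shift conditions, together with the non-terminal conditions $w_1 \notin K_2$ and $w_2 \notin K_1$, form an exhaustive and complementary pair.
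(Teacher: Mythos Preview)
Your coordinate approach is genuinely different from the paper's proof, which works intrinsically with viewing arcs $\Gamma_1,\Gamma_2$ from $v$, the bimodality of the radial functions $\rho_i(\theta)$, and a dichotomy into ``AA'' versus ``EE'' overlap types. That said, your proposal has a real gap precisely where the content of the lemma lies.

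The core problem is the complementary case. You give a sufficient condition for $E_1$ to hit $A_2\cup A_2'$ (exit point $u$ of $g_1^\pm$ on an upper \halfedge at height $h\ge r_1/\lvert\cos\theta\rvert$), but you provide no argument that its failure forces $E_2$ to hit $A_1\cup A_1'$. The phrase ``the failure condition converts, via the matching trigonometric identity, into the corresponding shift condition for $E_2$'' is an assertion, not a proof, and I do not see any identity that does this. Concretely, your Case~1 can fail because $g_1^\pm$ exits $R_2$ through the long edge $e_2^-$ or through the lower half of a short edge, while simultaneously the symmetric condition for $g_2^\pm$ inside $R_1$ can fail for the same structural reason ($g_2^\pm$ exits $R_1$ through the opposite long edge $e_1^\mp$). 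Nothing in your framework rules this out, and nothing establishes the conclusion there. In the paper's argument this is exactly the ``EE-overlap'' case, and it is \emph{not} handled by a parallel-shift computation: one must bring in the complementary rectangles $Q_1',Q_2'$ and the \emph{diametral disks} spanned by the \halfedges $A_1^*$ or $A_1'$, and argue that otherwise $w_2$ would fall inside such a disk and hence inside $K_1$, contradicting non-terminality. Your outline never invokes the circular parts of the racetracks, and without them the argument does not close.

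There are also unresolved sign issues in Case~1 itself. Whether the line of $E_1$ lies \emph{below} the line of $g_1^\pm$ on a vertical short edge depends on which side of $e_1^\pm$ the segment $e_1$ sits, which in turn is governed by the direction in which $g_2^\pm$ enters $R_1$; you have not fixed this using the opposite-orientation convention. And the claim that $w_1\notin K_2$ forces the far endpoint of $E_1$ to lie beyond the short-edge crossing needs care: projecting $w_1$ onto $e_1$ shifts its $x$-coordinate by $r_1\sin\theta$, so ``$w_1$ beyond the short edge'' does not automatically transfer to its projection. These are repairable, but the complementary-case gap is the missing idea.
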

%%%%%%%%%%%%%%%%%%%%%%%%%%%%%

\begin{proof}
    For $i=1,2$, we associate with $Q_i$ a \emph{viewing arc}
    $\Gamma_i$, consisting of all orientations of the rays that
    emanate from $v$ and intersect $Q_i$. Each $\Gamma_i$ is a
    quarter-circular arc (of angular span $90^\circ$), which is
    partitioned into two subarcs $\Gamma_i^A$, $\Gamma_i^E$, at the
    orientation at which $v$ sees the opposite vertex of the
    corresponding rectangle $Q_i$; $\Gamma_i^A$ (resp., $\Gamma_i^E$)
    is the subarc in which we view $A_i$ (resp., $E_i$).  See
    Figure~\ref{2arcs2}.

    \begin{figure}[\si{htb}]
        \begin{center}
%            \scalebox{0.5}%
            {%
               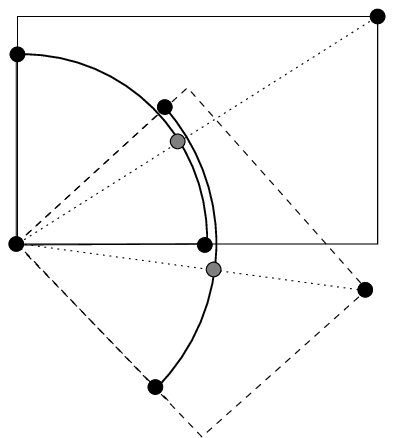%
            }
            \caption{The two viewing arcs $\Gamma_1$, $\Gamma_2$,
               their partitions into $\Gamma_1^E$, $\Gamma_1^A$ and
               $\Gamma_2^A$, $\Gamma_2^E$, and their
               overlap.} \label{2arcs2}
        \end{center}
    \end{figure}

    Moreover, the opposite orientations of $Q_1$ and $Q_2$ mean that,
    as we trace these arcs in counterclockwise direction, $\Gamma_1^E$
    precedes $\Gamma_1^A$, whereas $\Gamma_2^E$ succeeds $\Gamma_2^A$.
    That is, the clockwise endpoint of $\Gamma_1$ is adjacent to
    $\Gamma_1^E$ and we call it the \emph{E-endpoint} of $\Gamma_1$,
    and the counterclockwise endpoint of $\Gamma_1$, called the
    \emph{A-endpoint}, is adjacent to $\Gamma_1^A$. Symmetrically, the
    clockwise endpoint of $\Gamma_2$ is its A-endpoint and is adjacent
    to $\Gamma_2^A$, and its counterclockwise endpoint is the
    E-endpoint, adjacent to $\Gamma_2^E$. See Figures~\ref{2arcs2}
    and~\ref{rect2b2c}.

    \begin{figure}[\si{htb}]
        \centering%
        \begin{tabular}{\si{ccc}}
            %\scalebox{0.55}
            {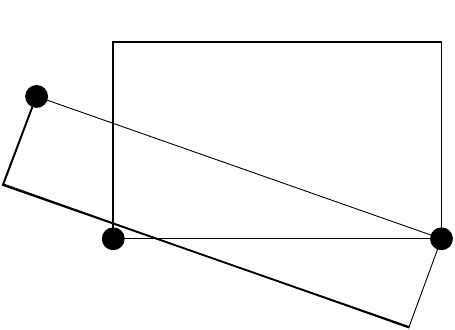}
            &\;\;\;\;\;\;&
            %\scalebox{0.55}
            {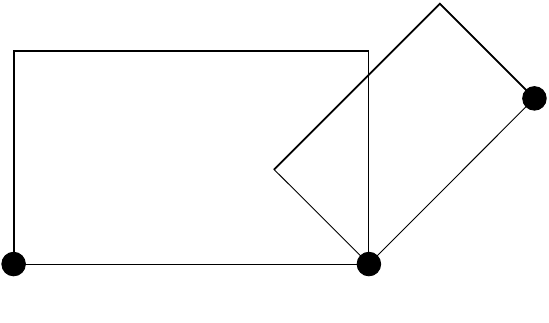}%
            \\
            {\small (i)} && {\small (ii)}
        \end{tabular}
        \begin{center}
            \caption{(i) One possible interaction between $Q_1$ and
               $Q_2$. The overlap is of type $AA$, the intersection
               $\Gamma_0$ of the viewing arcs is delimited by the
               orientations of $\vec{vw}_1$ and $\vec{vw}_2$. (i)
               Another possible interaction between $Q_1$ and
               $Q_2$. The overlap is of type $EE$, the intersection
               $\Gamma_0$ of the viewing arcs does not contain the
               orientations of $\vec{vw}_1$ and
               $\vec{vw}_2$. } \label{rect2b2c}
        \end{center}
    \end{figure}

    Finally, the overlapping of $Q_1$ and $Q_2$ near $v$ mean that the
    arcs $\Gamma_1$ and $\Gamma_2$ overlap too. Let
    $\Gamma_0:=\Gamma_1\cap\Gamma_2$.

    The viewing arcs $\Gamma_1$ and $\Gamma_2$ can overlap in one of
    the following two ways.

    \begin{description}
        \item{{\bf AA-overlap:}} The clockwise and the
        counterclockwise endpoints of $\Gamma_0$ are the A-endpoints
        of $\Gamma_2$ and $\Gamma_1$, respectively. See
        Figure~\ref{rect2b2c}(i).
        \item{{\bf E{}E-overlap:}} The clockwise and the
        counterclockwise endpoints of $\Gamma_0$ are the E-endpoints
        of $\Gamma_1$ and $\Gamma_2$, respectively. See
        Figure~\ref{rect2b2c}(ii).
    \end{description}

    We now assume that none of the four intersections (between one of
    the segments and a suitable \halfedge of the other rectangle),
    mentioned in the statement of the lemma, occur. We reach a
    contradiction by showing that under this assumption neither type
    of overlap can happen.

    \paragraph{AA-overlap.}
    For $i=1,2$, let $\rho_i(\theta)$, for $\theta\in\Gamma_i$, denote
    the length of the intersection of $Q_i$ with the ray emanating
    from $v$ in direction $\theta$. Note that $\rho_i$ is
    \emph{bimodal}: it increases to its maximum, which occurs at the
    direction to the vertex of $Q_i$ opposite to $v$, and then
    decreases back (each of the two pieces is a simple trigonometric
    function of $\theta$).

    Write $\Gamma_0 = [\alpha,\beta]$. Since the overlap of $\Gamma_1$
    and $\Gamma_2$ is an AA-overlap $\alpha$ is the orientation of
    $\vec{vw}_2$, and we have $\rho_2(\alpha) > \rho_1(\alpha)$ (we
    have to exit $Q_1$ before we reach $w_2$).  Symmetrically, we have
    $\rho_2(\beta) < \rho_1(\beta)$. See
    Figure~\ref{rect2b2c}(i). Hence, by continuity, there must exist
    $\theta\in\Gamma_0$ where $\rho_1(\theta)=\rho_2(\theta)$.

    \begin{figure}[\si{htb}]
        \begin{center}
            %\scalebox{0.5}
            {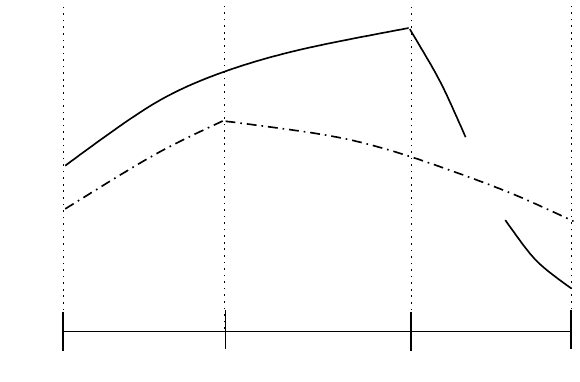}
            \caption{Illustrating the argument that $\rho_1$ and
               $\rho_2$ cannot intersect in an AA-overlap.}
            \label{bimodaa}
        \end{center}
    \end{figure}

    We claim however that this is impossible. Indeed, by taking into
    account the partitions $\Gamma_1=\Gamma_1^E\cup\Gamma_1^A$,
    $\Gamma_2=\Gamma_2^A\cup\Gamma_2^E$, and by overlaying them within
    $\Gamma_0$, we see that $\Gamma_0$ is partitioned into at most
    three subarcs, each being the intersection (within $\Gamma_0$) of
    one of $\Gamma_1^E$, $\Gamma_1^A$ with one of $\Gamma_2^A$,
    $\Gamma_2^E$.  See Figure~\ref{bimodaa} (and also
    Figure~\ref{2arcs2}, where the overlap has only two subarcs, of
    $\Gamma_1^E\cap\Gamma_2^E$ and $\Gamma_1^A\cap\Gamma_2^E$).  Since
    $E_1$ and $E_2$ are disjoint, and since, by assumption, no
    $E$-edge of any rectangle intersects the $A$-edge of the other
    rectangle, the intersection $\rho_1(\theta)=\rho_2(\theta)$ can
    only occur within $\Gamma_1^A\cap\Gamma_2^A$. As we trace
    $\Gamma_0$ from $\alpha$ to $\beta$, we start with $\rho_2 >
    \rho_1$, so this still holds as we reach
    $\Gamma_1^A\cap\Gamma_2^A$. However, the bimodality of $\rho_1$,
    $\rho_2$ and the different orientations of $Q_1$, $Q_2$ mean that
    $\rho_1$ is \emph{decreasing} on $\Gamma_1^A$, whereas $\rho_2$ is
    \emph{increasing} on $\Gamma_2^A$, so no intersection of these
    functions can occur within $\Gamma_1^A\cap\Gamma_2^A$, a
    contradiction that shows that an AA-overlap is impossible.

    \paragraph{E{}E-overlap.}
    We follow the same notations as in the analysis of AA-overlaps,
    but use different arguments, which bring to bear the complementary
    rectangles $Q'_1$, $Q'_2$.

    Consider the clockwise endpoint $\alpha$ of $\Gamma_0$, which, by
    construction, is the E-endpoint of $\Gamma_1$, incident to
    $\Gamma_1^E$. Consider first the subcase where $\rho_1(\alpha) >
    \rho_2(\alpha)$. That is, the edge $A^*_1$ of $Q_1$ connecting $v$
    to $E_1$ crosses and exits $Q_2$ before reaching $E_1$; it may
    exit $Q_2$ either at $E_2$ (as depicted in
    Figure~\ref{rect2b2c}(ii)) or at $A_2$ (as depicted in
    Figure~\ref{rect2d}).

    \begin{figure}[\si{htb}]
        \centering
        \begin{tabular}{\si{ccc}}
            %\scalebox{0.5}
            {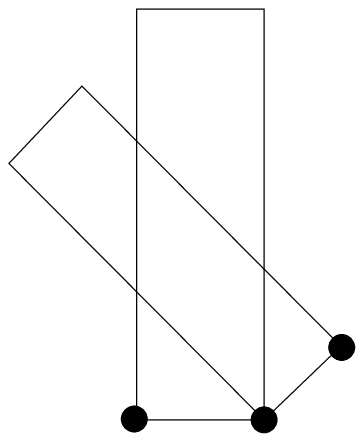}
            &\;\;\;\;\;\;\;\;&
            %\scalebox{0.5}
            {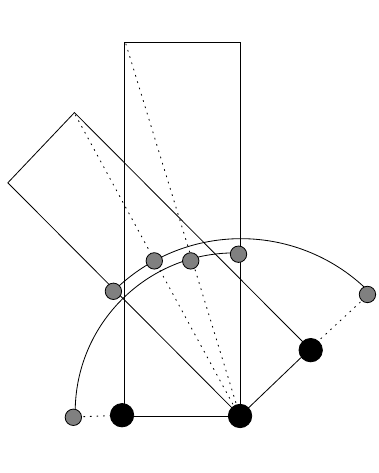}
            \\
            {\small (i)} && {\small (ii)}
        \end{tabular}
        \begin{center}
            \caption{(i) Another instance of an E{}E-overlap. (ii) The
               intersection of the viewing arcs; it consists of three
               subarcs (in counterclockwise order):
               $\Gamma_1^E\cap\Gamma_2^A$, $\Gamma_1^A\cap\Gamma_2^A$,
               and $\Gamma_1^A\cap\Gamma_2^E$. Here $w_2$ lies in the
               diametral disk (not drawn) spanned by the edge $A^*_1$
               of $Q_1$.} \label{rect2d}
        \end{center}
    \end{figure}

    If $A_1^*$ exits $Q_2$ at $E_2$ then we follow $E_2$ into the
    complementary rectangle $Q'_1$. By our assumption (that no
    intersection as stated in the lemma occurs) $E_2$ cannot exit
    $Q'_1$ through its anchor side $A'_1$ (as depicted in
    Figure~\ref{rect2bx}(i)). So $E_2$ must end inside $Q'_1$, at an
    endpoint $q_2$ (see Figure~\ref{rect2bx}(ii)). But then the right
    angle $q_2w_2v$ must either cross the anchor $A'_1$ twice, or be
    fully contained in $Q'_1$. In the latter case $w_2$ lies in
    $Q'_1\subset R_1$, contrary to the assumption that $v$ is
    non-terminal, and in the former case $w_2$ lies in the disk with
    $A'_1$ as a diameter, which is also contained in $K_1$, and again
    we have a contradiction.

    \begin{figure}[\si{htb}]
        \centering
        \begin{tabular}{\si{ccc}}
            %\scalebox{0.5}
            {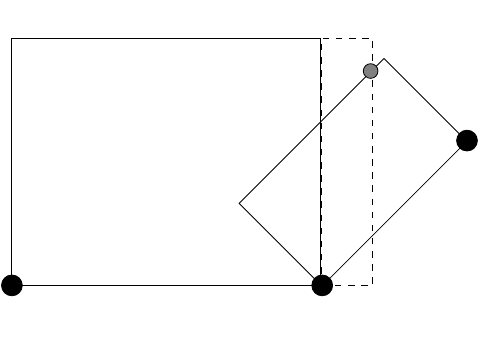}
            &\;\;\;\;\;\;\;\;&
            %\scalebox{0.5}
            {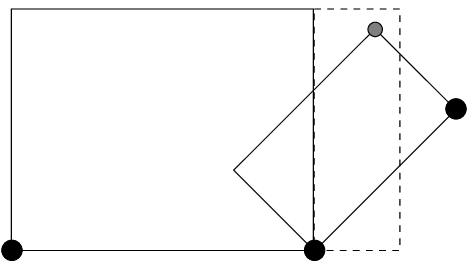}\\
            {\small (i)} && {\small (ii)}
        \end{tabular}
        \begin{center}
            \caption{Another instance of an E{}E-overlap. The
               intersection of the viewing arcs consists of just
               $\Gamma_1^E\cap\Gamma_2^E$. (i) $E_2$ crosses the
               anchor $A'_1$ of the complementary rectangle $Q'_1$.
               (ii) $E_2$ ends inside $Q'_1$.} \label{rect2bx}
        \end{center}
    \end{figure}

    If $A_1^*$ exits $Q_2$ at $A_2$, the argument is simpler, because
    then $w_2$ is contained in the disk with $A_1^*$ as a diameter,
    which is contained in $K_1$, again contrary to the assumption that
    $v$ is non-terminal (see Figure~\ref{rect2d}).

    A fully symmetric argument leads to a contradiction in the case
    where $\rho_1(\beta) < \rho_2(\beta)$. It therefore remains to
    consider the case where $\rho_1(\alpha) < \rho_2(\alpha)$ and
    $\rho_1(\beta) > \rho_2(\beta)$. Here we argue exactly as in the
    case of AA-overlaps, using the bimodality of $\rho_1$ and
    $\rho_2$, that this case cannot happen. (Figure~\ref{bimodaa}
    depicts the situation in this case too.) Specifically, there has
    to exist an intersection point of $\rho_1$ and $\rho_2$ within
    $\Gamma_0$, and it can only occur at
    $\Gamma_1^A\cap\Gamma_2^A$. But over this subarc $\rho_1$ is
    decreasing and $\rho_2$ is increasing, and we enter this subarc
    with $\rho_1 < \rho_2$, so these functions cannot intersect within
    this arc. This completes the argument showing that our assumption
    implies that an E{}E-overlap is not possible.

    We conclude that one of the intersections stated in the lemma must
    exist.
\end{proof}

\paragraph{The charging scheme.}
We charge $v$ to a \halfedge ($A_2$ or $A_2'$) of $R_2$ that
intersects $E_1$ or to a \halfedge ($A_1$ or $A_1'$) of $R_1$ that
intersects $E_2$. At least one such intersection must exist by
Lemma~\ref{lem:chg}. A useful property of this charging, which will be
needed in the next part of the analysis, is given by the following
lemma.
%%%%%%%%%%%%%%%%%%%%%%%
\begin{lemma}%
    \label{lem:neg}%
    Let $v$ be a non-terminal RR-vertex, lying on $\bd R_i\cap \bd
    R_j$, which is charged to a \halfedge $h_j$ of $R_j$. Then $e_i$,
    traced from its intersection with $h_j$ into $R_j$, gets further
    away from $e_j$.
\end{lemma}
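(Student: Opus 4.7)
My plan is to verify the claim by an elementary coordinate computation after a crucial preliminary observation about which halfedges can play the role of $h_j$. Set up coordinates so that $e_j$ lies on the $x$-axis as the segment $[0,\ell_j]\times\{0\}$, giving $R_j = [0,\ell_j]\times[-r_j,r_j]$, and assume without loss of generality that $v$ sits on the top edge $e_j^+$, so $v = (x_v, r_j)$ with $0 < x_v < \ell_j$.

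The key observation is that both candidate halfedges $A_j$ and $A_j'$ lie on the \emph{same} side of $e_j$ as $v$. Indeed, $w_j$ is the endpoint of $e_j^\pm$ that bounds $Q_j$, so $w_j \in e_j^+$, and the anchor $A_j$ connects $w_j$ to its projection on $e_j$, hence lives entirely in the upper half $[0,\ell_j]\times[0,r_j]$ of $R_j$. The complementary rectangle $Q_j'$ lies on the same side of $e_j$ as $Q_j$ by construction, so $A_j'$ is likewise a vertical segment in the upper half of $R_j$, situated at the opposite endpoint of $e_j$. Thus $h_j$ has the form $\{x_0\}\times[0,r_j]$ for some $x_0\in\{0,\ell_j\}$.

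Given this, let $p=(x_0,y_p)$ be the intersection of $h_j$ with $e_i$, so $0\le y_p<r_j$ (strict by general position). The line supporting $e_i$ meets the convex region $R_j$ in a single chord whose two endpoints on $\bd R_j$ are precisely $p$ and $v$; hence tracing $e_i$ from $p$ into $R_j$ amounts to moving from $p$ toward $v$. Along this segment the $y$-coordinate varies linearly from $y_p$ to $r_j$ and stays nonnegative, so since for any point with $x\in[0,\ell_j]$ the distance to $e_j$ equals $|y|=y$, this distance increases monotonically, which is exactly the conclusion of the lemma.

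There is no serious obstacle — the lemma is really a short computation once one notices that, by the very definitions of $A_j$ and $A_j'$, both halfedges lie on the same side of $e_j$ as $v$; the only point requiring a moment of care is verifying this for $A_j'$, which follows directly from the explicit construction of $Q_j'$ as a complementary rectangle on the same side of $e_j$ as $Q_j$.
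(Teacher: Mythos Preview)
Your preliminary observation---that $A_j$ and $A_j'$ both lie on the same side of $e_j$ as $v$---is correct and worth recording. But the core step of the argument is wrong: you assert that ``the line supporting $e_i$ meets the convex region $R_j$ in a single chord whose two endpoints on $\bd R_j$ are precisely $p$ and $v$.'' This is false. The vertex $v$ lies on one of the shifted edges $e_i^+$ or $e_i^-$ of $R_i$, which are parallel to $e_i$ at perpendicular distance $r_i>0$; hence $v$ is \emph{not} on the line through $e_i$, and the chord of that line in $R_j$ does not terminate at $v$. With that claim gone you have no control over the sign of the slope of $e_i$, and the conclusion that the $y$-coordinate increases as you move from $p$ into $R_j$ simply does not follow.

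A telltale sign that something is missing is that your argument never invokes the hypothesis that $v$ is non-terminal. That hypothesis is precisely what rules out the wrong slope. The paper argues by contradiction: assuming $e_i$ approaches $e_j$ upon entering $R_j$, one traces the segment along $e_j^\pm$ from $v$ toward the outer endpoint $q'$ of $h_j$; this segment must enter $R_i$ (because of the assumed slope) and then exit it before reaching $q'$ (because $v$ is non-terminal), through some halfedge $h_i$ of $R_i$. Since $e_i$ crosses $h_j$, one sees that $h_i$ must also cross $h_j$, which forces $q'$ into the diametral disk spanned by $h_i$ and hence into $K_i$---contradicting non-terminality. You need an argument of this kind; the coordinate computation by itself, even with your correct side-of-$e_j$ observation, is not enough.
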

%%%%%%%%%%%%%%%%%%%%%%%

\begin{figure}[\si{htb}]
    \begin{center}
        %\scalebox{0.5}
        {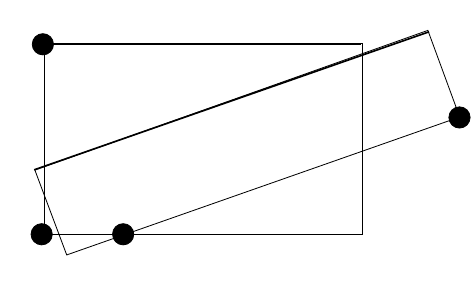}
        \caption{Illustrating the proof of Lemma~\ref{lem:neg}. Only
           the lower portions of $R_i$ and $R_j$ are shown.}
        \label{nonpos}
    \end{center}
\end{figure}

\begin{proof}
    Suppose to the contrary that $e_i$ approaches $e_j$ and assume,
    without loss of generality, that $e_j$ is horizontal, that $v$
    lies on $e_j^-$, and that $h_j$ is the left-lower \halfedge of
    $R_j$. In this case $e_i$ has positive slope. See
    Figure~\ref{nonpos}.

    Let $q$ denote the endpoint of $h_j$ incident to $e_j$ and let
    $q'$ denote the lower endpoint of $h_j$. (Note that $q'$ is $w_j$
    if $v$ is charged to the \halfedge $A_j$ of $R_j$ which is also a
    \halfedge of $Q_j$ and $q'$ is the endpoint of $A_j'$, the
    \halfedge of the complementary rectangle $Q_j'$, otherwise.)  Since
    $v$ is a non-terminal RR-vertex, the segment $\vec{vq'}$, as we
    trace it from $v$, enters $R_i$ (this follows as $e_i$ has
    positive slope) and then exits it before reaching $q'$.  The exit
    point lies on a suitable \halfedge $h_i$ of $R_i$. Since $e_i$
    intersects $h_j$, it follows that $h_i$ must also cross $h_j$.
    But then $q'$ must lie inside the diametral disk spanned by $h_i$,
    and thus it lies inside $K_i$, a contradiction that completes the
    proof.
\end{proof}

%%%%%%%%%%%%%%%%%%%%%%%%%%%%%%%%%%%%%%%%%%%%%%%%%%%%%

% First, I think that the dependence of the ``charging'' on the radii
% is not a real issue.  We have three kinds of random variables. The
% first is just the number of RR-vertices, and the second ones are the
% number of RR-vertices charged to each anchor.  The first is the sum
% of the second ones. Now each anchor also has a third variable, which
% is the number of intervals of the union on the shifted edge (such as
% $e_j^-$). The sum of the third type of variables is an upper bound
% for the other two types, and linearity of expectation does the
% rest. Am I missing something here? I hope not.

%%%%%%%%%%%%%%%%%%%%%%%%%%%%%%%%%%%%%%%%%%%%%%%%%%%%%%%%%%%%%%%%%%%%%%%%%

\paragraph{The expected number of vertices charged to a \halfedge.}
Fix a segment of $\E$, denote it as $e_0$, and rename the other
segments as $e_1,\dots,e_{n-1}$.  Assume, for simplicity, that $e_0$
is horizontal. We bound the expected number of vertices charged to the
lower-left \halfedge, denoted by $g$, of the rectangle $R_0$ (which is
incident to the left endpoint, $\lpt_0$, of $e_0$); symmetric
arguments will apply to the other three \halfedges of $R_0$.  Given a
specific permutation $(r_0,\ldots,r_{n-1})$ of the input set of
distances $\RRR$, let $\chi_\RR(g;r_0,\ldots,r_{n-1})$ denote the
number of vertices charged to $g$ if $e_i$ is expanded by $r_i$, for
$i=0,\ldots,n-1$.  We wish to bound $\bar\chi_\RR(g)$, the expected
value of $\chi_\RR(g;r_0,\ldots,r_{n-1})$ with respect to the random
choice of the $r_i$'s, as effected by randomly shuffling them (by a
random permutation acting on $\RRR$).

We first fix a value $r$ (one of the values $\theta_i\in\RRR$)
of $r_0$ and bound $\bar \chi_\RR(g\mid r)$,
the expected number of vertices charged to $g$ conditioned on the
choice $r_0=r$; the expectation is taken over those permutations that
fix $r_0=r$; they can be regarded as \emph{random} permutations of the
remaining elements of $\RRR$.  Then we bound $\bar\chi_\RR(g)$ by
averaging the resulting bound over the choice of $r_0$.

So fix $r_0=r$. Set $K_0 = e_0\oplus D(r)$, and let $\ell_0^-$ denote
the line supporting $e_0^-$. We have $g=\lpt_0\lpt_0^-$, and observe
that all these quantities depend only on $r_0$, so they are now
fixed. By our charging scheme, if a vertex $v \in \bd R_0 \cap \bd
R_j$ is charged to the \halfedge $g$, then $v \in e_0^-$, and $e_j$
intersects $g$. Furthermore, by Lemma~\ref{lem:neg}, the slope of
$e_j$ is negative. Let $\E_g\subseteq \E \setminus \{e_0\}$ be the set
of segments that intersect $g$ and have negative slopes; the set
$\E_g$ depends on the choice of $r_0=r$ but not on (the shuffle of)
$r_1, \ldots, r_{n-1}$.

For a fixed permutation $(r_1,\ldots,r_{n-1})$, set $\K_g = \{ K_l := e_l\oplus D(r_l)
\mid e_l \in \E_g\}$ and $\union_g = \union(\K_g) \cap e_0^-$. We call
a vertex of $\union_g$ an \emph{R-vertex} if it lies on $\bd R_i$ for
some $e_i \in \E_g$ (as opposed to lying on some semicircular arc). If
a non-terminal RR-vertex $v$ is charged to the \halfedge $g$, then $v$
is an R-vertex of $\union_g$ (for the specific choice $r_0=r$). It
thus suffices to bound the expected number of R-vertices on
$\union_g$, where the expectation is taken over the random shuffles of
$r_1, \ldots, r_{n-1}$.

% \paragraph{Expected number of R-vertices in $\union_\dela$}
Consider a segment $e_i \in \E_g$. If $\ell_0^-\cap e_i \not=
\emptyset$ then we put $q_i = \ell_0^-\cap e_i$. If $\ell_0^-\cap e_i
= \emptyset$, then let $\lambda_i$ denote the line perpendicular to
$e_i$ through $\rpt_i$ (the right endpoint of $e_i$), and define $q_i$
to be the intersection of $\lambda_i$ with $\ell_0^-$. (We may assume
that $q_i$ lies to the right of $\lpt_0^-$, for otherwise no expansion
of $e_i$ will be such that $R_i$ intersects the edge $e_0^-$.) Define
$r_i^* = 0$ if $\ell_0^-\cap e_i \not= \emptyset$ and $r_i^*=|\rpt_i
q_i|$ otherwise. For simplicity, write $\E_g$ as $\langle e_1, \ldots,
e_m\rangle$, for some $m < n$, ordered so that $q_1,\ldots,q_m$ appear on
$\ell_0^-$ from \emph{right to left} in this order; see
\figref{vwzprob}. We remark that $q_1,\ldots,q_m$ are independent of
the values of $r_1,\ldots,r_m$, and that the order $e_1,\ldots,e_m$
may be different from the order of the intercepts of these segments
along $g$ (e.g., see the segments $e_1$ and $e_2$ in \figref{vwzprob}).

\begin{figure}[\si{htb}]
    \centering%
    {\includegraphics[width=0.5\linewidth]{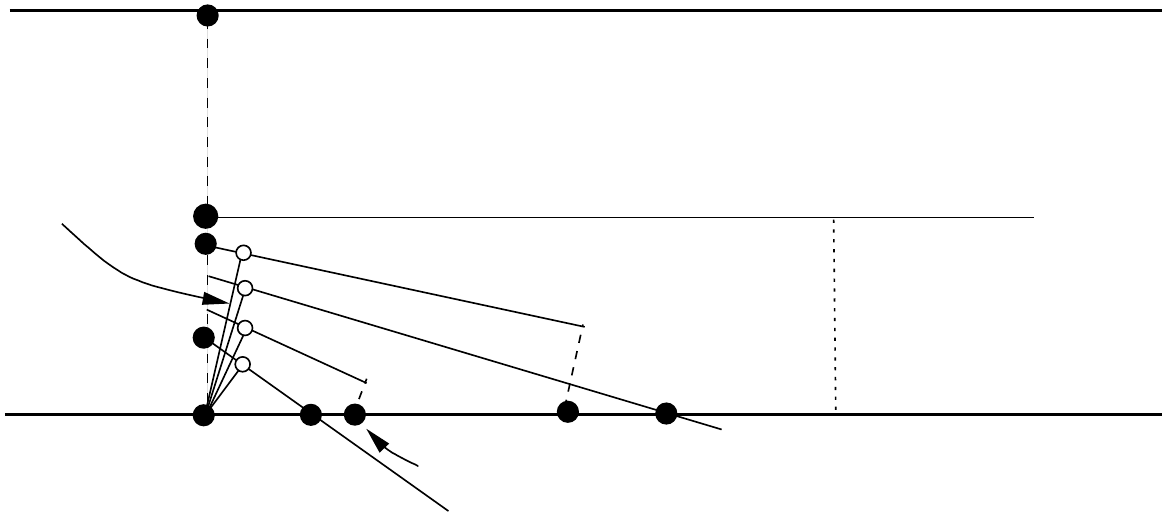}}

    \caption{Segments in $\E_g$ and the points that they induce on
       $\ell_0^-$.}
    \figlab{vwzprob}
\end{figure}

For $i=1,\ldots,m$, let $r_i$ be, as above, the
(random) expansion distance chosen for $e_i$, and set
$J_i = R_i \cap \ell_0^-$. If $r_i \le r_i^*$ then $J_i=\emptyset$,
and if $r_i > r_i^*$ then $J_i$ is an interval containing $q_i$.
Let $\union_0$ be the union of the intervals $J_i$, and let
$\mu (r; r_1,\ldots r_m)$ be the number of connected components of
$\union_0$. %It can be verified that
Clearly, each R-vertex of $\union_g$ is an endpoint of a component of
$\union_0$, which implies that $\chi_\RR(g; r, r_1,\ldots,r_{n-1}) \le
2\mu(r; r_1,\ldots,r_m)$. It therefore suffices to bound $\bar\mu(r)$, the
expected value of $\mu(r; r_1,\ldots,r_m)$ over the random shuffles of
$r_1,\ldots,r_m$.

For each $e_i \in \E_g$, let $\beta_i$ be the length of the segment
connecting $\lpt_0^-$ to its orthogonal projection on $e_i$. As is
easily checked, we have $\beta_i < r$. It is also clear that if $r_i
\ge \beta_i$ then the entire segment $q_i \lpt_0^-$ is contained in
$K_i$.

%%%%%%%%%%%%%%%%%%%%%%%%%%%%%%%%%%%%%%%%%
\begin{lemma} \label{RRprob}
    In the preceding notations, the expected
    value of $\bar\mu(r)$ is $O(\log n)$.
\end{lemma}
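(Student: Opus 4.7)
The plan is to bound $\bar\mu(r)$ by a harmonic-sum argument that combines a simple geometric dichotomy with the randomness of the permutation. The starting observation is that each interval $J_i$ is either \emph{empty} (when $r_i \le r_i^*$), \emph{short} (when $r_i^* < r_i < \beta_i$, in which case $J_i$ lies strictly to the right of $\lpt_0^-$), or \emph{long} (when $r_i \ge \beta_i$, in which case $J_i$ contains $\lpt_0^-$, because $\beta_i$ is the distance from $\lpt_0^-$ to $e_i$). Since all long intervals share the common point $\lpt_0^-$, they merge into at most one connected component. Hence $\mu \le 1 + \mu_s$, where $\mu_s$ is the number of components built entirely from short intervals.

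To bound $E[\mu_s]$, I would reindex the segments in $\E_g$ so that $\beta_1 \le \beta_2 \le \cdots \le \beta_m$; this is only a relabeling and does not change the distribution of the random assignment. For each $k$, let $X_k$ be the indicator that $J_k$ is short and forms a component all on its own (i.e., that no $J_j$, $j\ne k$, has a point in common with $J_k$, and $J_k$ does not contain $\lpt_0^-$). Then $\mu_s \le \sum_{k=1}^m X_k$, and the goal reduces to showing $\Pr[X_k = 1] = O(1/k)$.

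The geometric content of the $O(1/k)$ bound is that, among the $k$ segments $e_1,\ldots,e_k$ with the $k$ smallest values of $\beta$, whichever one receives the largest assigned radius will ``dominate'' the others on $\ell_0^-$: its interval $J$ will be long enough to absorb all the short intervals produced by the other $k-1$ segments in that group, because the relevant geometric thresholds (controlled by $\beta_1,\dots,\beta_k$) are comparable to those of $e_k$. Since the radii are a uniformly random permutation of a fixed multiset, the probability that the largest among the $k$ assigned radii falls on $e_k$ itself (which is a necessary condition for $e_k$ not to be dominated) is exactly $1/k$. A more careful charging shows that this is in fact the only way for $X_k = 1$, giving $\Pr[X_k = 1] \le 1/k$. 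Summing, $E[\mu] \le 1 + \sum_{k=1}^m 1/k = O(\log m) = O(\log n)$.

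The main obstacle is making the ``domination'' argument geometrically rigorous: one must verify that, once the segments are sorted by $\beta_i$, any $e_j$ with $\beta_j \le \beta_k$ that receives a radius larger than $r_k$ really does absorb $J_k$ on $\ell_0^-$. This uses the facts that $\beta_i < r$, that the $q_i$'s are ordered along $\ell_0^-$, and the monotonicity built into the definitions of $r_i^*$ and $\beta_i$. Once this geometric step is in hand, the permutation-based probability estimate is immediate, the harmonic sum goes through, and averaging over the choice of $r_0 = r$ preserves the $O(\log n)$ bound, yielding the lemma.
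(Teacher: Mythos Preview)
Your ``domination'' step is where the argument breaks, and it is not a matter of rigor: the claim is false. Sorting by $\beta_i$ and comparing the radii assigned to $e_1,\ldots,e_k$ \emph{with one another} does not force the interval of the largest-radius segment to cover $J_k$. For a concrete obstruction (with $r_0=1$, $\lpt_0^-$ at the origin, $\ell_0^-$ the $x$-axis), let $e_j$ cross $g$ near its bottom with slope $-1$, so $\beta_j\approx 0.07$ and $q_j\approx(0.1,0)$, and let $e_k$ cross $g$ slightly higher with slope $-0.01$, so $\beta_k\approx 0.2>\beta_j$ while $q_k\approx(20,0)$. Assign $r_j=0.15>r_k=0.1$. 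Then $J_j\approx[-0.11,\,0.31]$ (it even contains $\lpt_0^-$, since $r_j>\beta_j$) and $J_k\approx[10,\,30]$; the two are disjoint. Thus $e_j$ received the larger radius among $\{e_j,e_k\}$, yet $J_j$ does not absorb $J_k$, and $J_k$ stands as its own short component. The underlying reason is that $\beta_i$ governs only whether $J_i$ reaches $\lpt_0^-$; how far $J_i$ extends to the \emph{right} is controlled by $q_i$ and the slope of $e_i$, which are unrelated to the $\beta$-ordering. (There is also a minor slip: with $X_k$ defined as ``$J_k$ is a component all on its own'', a component built from several short intervals contributes $0$ to $\sum_k X_k$, so $\mu_s\le\sum_k X_k$ need not hold. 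That is fixable, but the domination issue is not.)

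The paper's proof uses a different ordering and a different comparison, and this is exactly what makes the geometry go through. One orders the segments of $\E_g$ by the position of $q_i$ along $\ell_0^-$, from right to left, and compares each $r_i$ not with the other $r_j$'s but with the fixed value $r_0=r$. The key fact is that $\beta_i<r$ for every $e_i\in\E_g$ (because $e_i$ crosses the portal $g$, whose length is $r$). Hence $r_i>r$ forces $r_i>\beta_i$, so $J_i$ is an interval containing both $q_i$ and $\lpt_0^-$, and therefore contains every $q_l$ with $l>i$ in the $q$-ordering (all such $q_l$ lie between $\lpt_0^-$ and $q_i$). Consequently $\mu\le i$, where $i$ is the smallest index with $r_i>r$. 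If $r=\radii_{n-k+1}$ is the $k$-th largest element of $\RRR$, the expected value of this first index under a random permutation of the remaining distances is at most $n/k$, and averaging over $k$ gives $\frac{1}{n}\sum_{k=1}^n n/k = O(\log n)$.
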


%%%%%%%%%%%%%%%%%%%%%%%%%%%%%%%%%%%%%%%%%
\begin{proof}
    Assume that $r=\radii_{n-k+1}$, for some $k\in\{1,\ldots,n\}$.  We
    claim that is this case $\bar\mu(r) \le n/(k+1)$.

    For $i=1,\ldots,m$, if $r_i > r$ then $r_i > \beta_i$ and
    therefore $\lpt_0^- \in J_i$. Hence, if $i$ is the smallest index
    for which $r_i > r$ (assuming that such an index exists), then
    $\union_0$ has at most $i$ connected components: the one
    containing $J_i$ and at most $i-1$ intervals to its right.

    Recall that we condition the analysis on the choice of $r_0=r$,
    and that we are currently assuming that $r_0$ is the $k$-th
    largest value of $\RRR$.  For this fixed value of $r_0$, the set
    $\E_g$ is fixed.

    Order the segments in $\E_0:=\E\setminus e_0$ by placing first the
    $m$ segments of $\E_g$ in their order as defined above, and then
    place the remaining $n-m-1$ segments in an arbitrary
    order. Clearly this reshuffling of the segments does not affect
    the property that the expansion distances in
    $\RRR_0:=\RRR\setminus\{r\}$ that are assigned to them form a
    random permutation of $\RRR_0$.

    In this context, $\bar{\mu}(r)$ is upper bounded by the expected
    value of the index $j$ of the first segment $e_j$ in $\E_0$ that
    gets one of the $k-1$ distances larger than $r$. (In general, the
    two quantities are not equal, because we set
    $\mu(r; r_1,\ldots,r_m)=m$ when $j$ is greater than $m$, that is,
    in case no segment of $\E_g$ gets a larger distance.)

    As is well known, the expected value of $j$ is $n/k$ (this follows,
    e.g., as in \cite[p.~175,~Problem~2]{GKP}), from which our claim follows.
    (Note that the case $k=1$ is special, because no index can get a larger
    value, but the resulting expectation, namely $n$, serves as an upper
    bound for $\bar{\mu}(r)$.)

    Since $r=\radii_{n-k+1}$ with probability $1/n$, for every $k$, we have
    $$
    \EE[\bar\mu(r)] = \sum_{k=1}^n \frac1n \cdot
    \bar{\mu}(\radii_{n-k+1}) \le \frac1n \sum_{k=1}^n \frac{n}{k} =
    \sum_{k=1}^n \frac{1}{k} = O(\log n) .
    $$
\end{proof}

\paragraph{Putting it all together.}
Lemma~\ref{RRprob} proves that the expected number of non-terminal
vertices of $\union$ charged to a fixed \halfedge of some rectangle in
$\R$ is $O(\log n)$. By Lemma~\ref{lem:chg}, each non-terminal RR-vertex of
$\union$ is charged to one of the $4n$ \halfedges of the rectangles in
$\R$. Repeating this analysis for all these $4n$ \halfedges, the
expected number of non-terminal RR-vertices in $\union$ is $O(n\log
n)$. Adding the linear bound on the number of terminal RR-vertices, we
obtain the following result.

\begin{lemma} \label{lem:rr} The expected number of RR-vertices of
    $\union(\K)$ is $O(n\log n)$.
\end{lemma}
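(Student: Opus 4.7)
The plan is to assemble the pieces already established in this subsection into the claimed bound. First I would split the RR-vertices into terminal and non-terminal ones. The terminal RR-vertices are trivially bounded: each such vertex lies on a straight edge $e_i^\pm$ and is joined to an endpoint of that edge by a subsegment fully contained in the other racetrack, so each of the $4n$ straight edges $e_i^\pm$ (two per racetrack) contributes at most a bounded number of terminal vertices (in fact, using general position, at most two: one at each end). This gives an $O(n)$ bound on terminal RR-vertices, unconditionally, and so it suffices to handle the non-terminal ones in expectation.

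Next I would invoke the charging scheme established by \lemref{chg}: every non-terminal RR-vertex $v$ that lies on $\bd R_i \cap \bd R_j$ can be assigned to one of the four \halfedges $A_1, A_1', A_2, A_2'$ of the rectangles $R_i$ or $R_j$ (those being the \halfedges that are crossed by $E_1$ or $E_2$ as guaranteed by the lemma). Thus the total number of non-terminal RR-vertices is at most the sum, over all $4n$ \halfedges $g$ of the rectangles in $\R$, of the number $\chi_\RR(g;r_0,\ldots,r_{n-1})$ of vertices charged to $g$.

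I would then take expectations on both sides over the random shuffle of $\RRR$. By linearity of expectation, the expected number of non-terminal RR-vertices is at most $\sum_g \bar\chi_\RR(g)$, where the sum runs over the $4n$ \halfedges. By \lemref{neg} and the setup leading up to \lemref{RRprob}, each $\bar\chi_\RR(g)$ is at most twice the expected number of connected components of the associated union of intervals on $\ell_0^-$, and \lemref{RRprob} bounds this by $O(\log n)$. Here I would note that Lemma~\ref{lem:neg} is essential: without it we would not be allowed to restrict attention to the negatively-sloped segments $\E_g$, which is what makes the interval-union model and its analysis go through.

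Summing, the expected number of non-terminal RR-vertices is $O(n \log n)$. Adding the $O(n)$ deterministic bound on terminal RR-vertices yields the stated $O(n \log n)$ bound on the expected total number of RR-vertices, completing the proof. The only nontrivial step in the assembly is making sure the hypotheses of \lemref{RRprob} apply to every \halfedge---this is essentially bookkeeping, handled by the symmetric treatment (by reflection) of the other three \halfedges of $R_0$ alluded to in the text---so I expect no real obstacle beyond careful notation.
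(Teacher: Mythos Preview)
Your proposal is correct and follows essentially the same route as the paper: split into terminal and non-terminal RR-vertices, bound the former by $O(n)$, charge each non-terminal vertex via \lemref{chg} to one of the $4n$ \halfedges, and apply Lemma~\ref{RRprob} (with \lemref{neg} justifying the restriction to $\E_g$) to get $O(\log n)$ expected charge per \halfedge. One minor slip: there are $2n$ straight edges $e_i^\pm$, not $4n$, but since each has two endpoints your terminal-vertex count of $O(n)$ is still correct.
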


%%%%%%%%%%%%%%%%%%%%%%%%%%%%%%%%%%%%%%%%%%%%%%%%%%%%%%%%%%%%%%%
\subsection{CR-vertices}
%%%%%%%%%%%%%%%%%%%%%%%%%%%%%%%%%%%%%%%%%%%%%%%%%%%%%%%%%%%%%%%
%
Next, we bound the expected number of CR-vertices of $\union$. Using a
standard notation, we call a vertex $v \in \union$ lying on $\bd K_i
\cap \bd K_j$ \emph{regular} if $\bd K_i$ and $\bd K_j$ intersect at
two points (one of which is $v$); otherwise $v$ is called
\emph{irregular}. By a result of Pach and Sharir~\cite{PS}, the number
of regular vertices on $\bd \union$ is proportional to $n$ plus the
number of irregular vertices on $\bd\union$. Since the expected number
of RR- and CC-vertices on $\bd \union$ is $O(n\log n)$, the number of
regular CR-vertices on $\bd \union$ is $O(n\log n+\kappa)$, where
$\kappa$ is the number of irregular CR-vertices on $\bd\union$. It
thus suffices to prove that $\kappa = O(n\log n)$.

\paragraph{Geometric properties of CR-vertices.}
We begin by establishing a few simple geometric lemmas.

% \begin{lemma}
%     Let $D \in \D$ be a disk. If a segment $e_i \in \E$ is disjoint
%     from $D$, then $\bd K_i$ intersects $D$ in at most two points.
% \end{lemma}

\begin{figure}[t]
    \centering%
    %\scalebox{0.5}
    {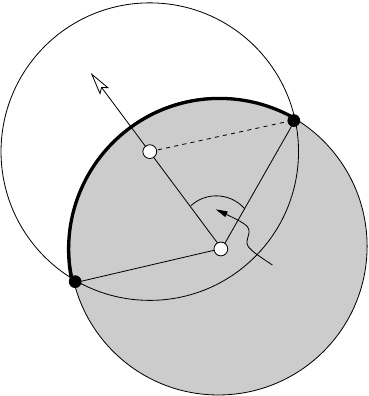}
    \caption{Illustration of the proof of
       Lemma~\protect\ref{lem:disks}.} \label{fig:disks}
\end{figure}

\begin{lemma}
    \label{lem:disks} Let $D$ and $D'$ be two disks of respective
    radii $r,r'$ and centers $o,o'$. Assume that $r'\ge r$ and that
    $o'\in D$.  Then $D'\cap \bd D$ is an arc of angular extent at
    least $2\pi/3$, centered at the radius vector of $D$ from $o$
    through $o'$.
\end{lemma}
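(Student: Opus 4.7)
The plan is to reduce the lemma to an elementary computation in the triangle $o o' p$, where $p$ is an endpoint of the arc $D' \cap \partial D$. Set $d = |oo'|$. The hypothesis $o' \in D$ says $d \le r$, and the hypothesis $r' \ge r$ bounds the other radius. First I would dispense with the degenerate case in which $\partial D$ lies entirely inside $D'$ (which happens exactly when $d + r \le r'$); here the ``arc'' is all of $\partial D$ and the bound $2\pi/3$ is trivial. In the remaining case the circles $\partial D$ and $\partial D'$ meet in two points, and by symmetry of the configuration about the line through $o$ and $o'$ these two intersection points are mirror images across that line. This already gives the centering statement: the arc $D' \cap \partial D$ is symmetric about, and hence centered on, the ray from $o$ through $o'$.

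Next I would compute the half-angular extent $\theta$ of this arc as the angle $\angle p o o'$, where $p$ is one of the two intersection points. Since $|op| = r$, $|oo'| = d$, and $|o'p| = r'$, the law of cosines applied to triangle $o o' p$ yields
\[
r'^2 \;=\; r^2 + d^2 - 2rd\cos\theta,
\qquad\text{i.e.,}\qquad
\cos\theta \;=\; \frac{r^2 + d^2 - r'^2}{2rd}.
\]
The goal reduces to showing $\cos\theta \le 1/2$, which is equivalent to $r'^2 \ge r^2 - rd + d^2$.

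Finally I would verify this inequality using the two hypotheses. Because $d \le r$, we have $d^2 \le rd$, and hence $r^2 - rd + d^2 \le r^2$. Combined with $r' \ge r$, this yields $r'^2 \ge r^2 \ge r^2 - rd + d^2$, as required. Therefore $\theta \ge \pi/3$, so the full angular extent of the arc is at least $2\pi/3$. There is no real obstacle here; the only point to watch is the case distinction between the circles meeting in two points versus $\partial D \subset D'$, and correctly identifying $\theta$ as the angle at vertex $o$ (not at $o'$) in the law-of-cosines computation so that the bound $d \le r$ is what drives the estimate.
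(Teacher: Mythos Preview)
Your proof is correct and follows essentially the same approach as the paper: both dispose of the degenerate case $\partial D\subseteq D'$, then apply the law of cosines in the triangle $oo'p$ to bound $\cos\theta=\frac{r^2+d^2-r'^2}{2rd}\le\tfrac12$ using $r'\ge r$ and $d\le r$. The only difference is cosmetic---the paper chains the inequalities as $\frac{r^2+d^2-r'^2}{2rd}\le\frac{d^2}{2rd}=\frac{d}{2r}\le\tfrac12$ rather than rearranging to $r'^2\ge r^2-rd+d^2$---and you are slightly more explicit about the symmetry argument for the centering claim.
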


\begin{proof}
    We may assume that $D$ is not fully contained in $D'$, for
    otherwise the claim is trivial. Consider then the triangle $oo'p$,
    where $p$ is one of the intersection points of $\bd D$ and $\bd
    D'$.  Put $|oo'|=d\le r$, and let $\angle o'op=\theta$; see
    Figure~\ref{fig:disks}. Then
    $$
    \cos\theta = \frac{r^2+d^2-r'^2}{2dr} \le \frac{d^2}{2dr} =
    \frac{d}{2r} \le \frac12 .
    $$
    Hence $\theta \ge \pi/3$. Since the angular extent of $D'\cap \bd
    D$ is $2\theta$, the claim follows. The property concerning the
    center of the arc $D'\cap \bd D$ is also obvious.
\end{proof}

\begin{cor}
    \label{cor:disks} Let $D$ and $D'$ be two disks of radii $r$ and
    $r'$ and centers $o$ and $o'$, respectively, let $D_1$ be a sector
    of $D$ of angle $\pi/3$, and let $\gamma_1$ denote the circular
    portion of $\bd D_1$. (a) If $o'\in D_1$ and $r < r'$ then
    $\gamma_1$ is fully contained in $D'$. (b) If $o'\notin D_1$ then
    either $D'$ is disjoint from $\gamma_1$ or $D'\cap \gamma_1$
    consists of one or two arcs, each containing an endpoint of
    $\gamma_1$.
\end{cor}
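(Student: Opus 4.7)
The plan is to exploit the elementary fact that two distinct circles meet in at most two points, so $D'\cap\bd D$ is either empty, the entire $\bd D$, or a single circular arc $\alpha$; in the last case the endpoints of $\alpha$ lie symmetrically about the line $oo'$, and the midpoint of $\alpha$ is the point of $\bd D$ lying in the direction from $o$ to $o'$. This midpoint characterization is what connects the hypothesis on $o'$ to statements about $\gamma_1$.

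For part~(a), I would apply Lemma~\ref{lem:disks} directly. Since $D_1\subseteq D$, we have $o'\in D$, and together with $r<r'$ the lemma gives that $\alpha$ has angular extent at least $2\pi/3$ and is centered at $\theta_0$, the direction from $o$ to $o'$. Writing the angular range of $D_1$ (and of $\gamma_1$) as $[\beta,\beta+\pi/3]$, the hypothesis $o'\in D_1$ gives $\theta_0\in[\beta,\beta+\pi/3]$, so $\alpha$ contains the subarc $[\theta_0-\pi/3,\theta_0+\pi/3]$, which in turn covers $[\beta,\beta+\pi/3]$. Hence $\gamma_1\subseteq\alpha\subseteq D'$.

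For part~(b), I would first dispose of the two degenerate cases: if $\alpha=\emptyset$ then $D'\cap\gamma_1=\emptyset$, and if $\alpha=\bd D$ then $D'\cap\gamma_1=\gamma_1$, which is a single arc containing both endpoints. When $\alpha$ is a proper arc, the key step is to show that no component of $\alpha\cap\gamma_1$ can lie strictly in the interior of $\gamma_1$; since $\alpha$ is connected, such an ``interior'' component would have to coincide with the whole of $\alpha$, so the claim reduces to establishing $\alpha\not\subseteq\gamma_1$. I would deduce this from the midpoint characterization above: the midpoint of $\alpha$ lies in the direction from $o$ to $o'$, which by the hypothesis $o'\notin D_1$ is not among the directions spanned by the sector, i.e., outside the angular range $[\beta,\beta+\pi/3]$ of $\gamma_1$. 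Hence the midpoint of $\alpha$ falls outside $\gamma_1$, so $\alpha$ cannot be contained in $\gamma_1$; any remaining components of $\alpha\cap\gamma_1$ must then be bounded on at least one side by an endpoint of $\gamma_1$, and there are at most two such components because $\alpha$ and $\gamma_1$ are connected arcs on a circle.

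The main obstacle is essentially circular-arc bookkeeping: tracking which arc endpoints lie in which angular ranges on $\bd D$ and correctly handling the wrap-around on the circle in the enumeration of cases. No new geometric estimate beyond the chord-symmetry input already used in Lemma~\ref{lem:disks} is required.
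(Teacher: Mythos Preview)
Your approach is essentially identical to the paper's: for (a) you both invoke Lemma~\ref{lem:disks} and compare angular extents; for (b) you both pass to the arc $\alpha=D'\cap\bd D$, observe that its midpoint lies in the direction $\vec{oo'}$, argue that this midpoint falls outside $\gamma_1$, and thereby rule out $\alpha\subseteq\gamma_1$ via a short case count on the components of $\alpha\cap\gamma_1$. The paper organizes the case split as ``zero, one, or two subarcs'' rather than first peeling off the degenerate cases, but the content is the same.

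One caveat worth flagging: your inference ``$o'\notin D_1$ implies the direction $\vec{oo'}$ is outside the angular range $[\beta,\beta+\pi/3]$'' is not valid in general, since $o'$ could lie in that angular wedge but outside $D$. For instance, with $D$ the unit disk, $\gamma_1$ the arc over $[-\pi/6,\pi/6]$, $o'=(2,0)$, and $r'=1.1$, one has $o'\notin D_1$ yet $D'\cap\gamma_1$ is a short arc around angle $0$ containing neither endpoint of $\gamma_1$. The paper's proof makes exactly the same leap (``center lies in direction $\vec{oo'}$ and thus outside $\gamma_1$''). In the paper's sole use of part~(b), inside Lemma~\ref{lem:non-terminal}, the point $o'$ is always chosen in $e_j\cap D_0\subseteq D$, so this bad case cannot occur there. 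Your argument is therefore as sound as the paper's for the intended application, but the corollary as literally stated needs an additional hypothesis (e.g., $o'\in D$, or $o'$ outside the infinite angular wedge of $D_1$) to be true.
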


\begin{proof}
    The first claim (a) follows from the preceding lemma, since $D'\cap\bd
    D$ is an arc of angular extent at least $2\pi/3$ centered at a
    point on $\gamma_1$. For (b), $D'\cap\bd D$ is a connected arc
    $\delta$, whose center lies in direction $\vec{oo'}$ and thus
    outside $\gamma_1$, and $D'\cap\gamma_1 = \delta\cap\gamma_1$. The
    intersection of two arcs of the same circle consists of zero, one,
    or two connected subarcs.  In the first case the claim is
    obvious. In the third case, each of the arcs $\delta$, $\gamma_1$
    must contain both endpoints of the other arc, so (b) follows.  In
    the second case, the only situation that we need to rule out is
    when $\delta\cap\gamma_1$ is contained in the relative interior of
    $\gamma_1$, so $\delta$, and its center, are contained in
    $\gamma_1$, contrary to assumption. Hence (b) holds in this case
    too.
\end{proof}

Fix a segment of $\E$, call it $e_0$, and rename the other segments to
be $e_1,\dots,e_{n-1}$. $\partial K_0$ has two semicircular arcs, each
corresponding to a different endpoint of $e_0$.  We fix one of the
semicircular arcs of $K_0$ and denote it by $\gamma_0$. Let $r_0$ be
the random distance assigned to $e_0$, let $D_0$ be the disk of radius
$r_0$ containing $\gamma_0$ on its boundary, and let $H_0\subset D_0$
be the half-disk spanned by $\gamma_0$.

Partition $H_0$ into three sectors of angular extent $\pi/3$ each,
denoted as $H_{01},H_{02},H_{03}$. Let $\gamma_{0i} \subset\gamma_0$
denote the arc bounding $H_{0i}$, for $i=1,2,3$. Here we call a vertex
$v \in \bd \union$ formed by $\gamma_{0i} \cap \bd K_j$, for some $j$,
a \emph{terminal} vertex if $K_j$ contains one of the endpoints of
$\gamma_{0i}$, and a \emph{non-terminal} vertex otherwise.  There are
at most six terminal vertices on $\gamma_0$, for an overall bound of
$12n$ on the number of such vertices, so it suffices to bound the
(expected) number of non-terminal irregular CR-vertices on each subarc
$\gamma_{0i}$, for $i=1,2,3$.

Let $\E(r_0)$ denote the set of all segments $e_j\ne e_0$ that
intersect the disk $D_0$, and, for $i=1,2,3$, let $\E_i(r_0)\subseteq
\E(r_0)$ denote the set of all segments $e_j\ne e_0$ that intersect
the sector $H_{0i}$. Set $m_i := m_i(r_0) = |\E_i(r_0)|$. Segments in
$\E(r_0)\setminus \E_i(r_0)$ intersect $D_0$ but are disjoint from
$H_{0i}$.  (The parameter $r_0$ is to remind us that all these sets
depend (only) on the choice of $r_0$.)

\begin{lemma}
    \label{lem:boundary}
    Let $e_j\in \E \setminus \E(r_0)$.  If $\union$ has a CR-vertex $v
    \in \gamma_{0i}\cap \bd K_j$, for some $i =1,2,3$, then $v$ is
    either a regular vertex or a terminal vertex.
\end{lemma}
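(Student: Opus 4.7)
The plan is to prove the stronger conclusion that $v$ must in fact be terminal under the lemma's hypotheses, which clearly implies the stated disjunction. The argument is a direct application of Corollary~\ref{cor:disks}(b) to one of the disks whose union makes up $K_j$, namely the disk centered at the point of $e_j$ nearest to $v$.

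First I would establish the geometric setup. Since $e_j \in \E \setminus \E(r_0)$, the segment $e_j$ is disjoint from $D_0$, and since the sector $H_{0i}$ is contained in $D_0$, it is also disjoint from $H_{0i}$. In particular, every point of $e_j$ lies outside $H_{0i}$. Let $p^* \in e_j$ be the (unique) point of $e_j$ nearest to $v$; since $v \in \bd K_j$, we have $|v - p^*| = d(v, e_j) = r_j$, so $v \in \bd D(p^*, r_j)$, and $D(p^*, r_j) \subseteq K_j$ by the definition of the Minkowski sum.

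Then I would apply Corollary~\ref{cor:disks}(b), taking $D = D_0$, $D_1 = H_{0i}$, $\gamma_1 = \gamma_{0i}$, and $D' = D(p^*, r_j)$. Since the center $p^*$ of $D'$ lies outside $H_{0i}$, the corollary yields that $D(p^*, r_j) \cap \gamma_{0i}$ is either empty or consists of one or two arcs, each containing an endpoint of $\gamma_{0i}$. Because $v \in D(p^*, r_j) \cap \gamma_{0i}$, this intersection is non-empty, so the arc containing $v$ contains some endpoint $w$ of $\gamma_{0i}$. Then $w \in D(p^*, r_j) \subseteq K_j$, so by definition $v$ is a terminal vertex.

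The only slightly delicate point is the verification that $p^* \notin H_{0i}$ needed to invoke Corollary~\ref{cor:disks}(b); this is exactly where the hypothesis $e_j \in \E \setminus \E(r_0)$ enters, via the inclusion $H_{0i} \subseteq D_0$. Beyond this the argument is essentially a one-line deduction, which is why the proof can bypass the ``regular'' alternative entirely and simply show that $v$ is terminal.
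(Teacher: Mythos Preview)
Your argument has a genuine gap, and in fact the stronger conclusion you aim for (that $v$ must always be terminal) is false.

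The flaw lies in your invocation of Corollary~\ref{cor:disks}(b). Inspect the proof of that corollary: it asserts that the center of the arc $\delta=D'\cap\bd D$ lies ``in direction $\vec{oo'}$ and thus outside $\gamma_1$.'' This inference is valid only when $o'\in D\setminus D_1$, so that $o'\notin D_1$ forces the \emph{direction} $\vec{oo'}$ out of the angular range of the sector. (In the paper the corollary is only ever applied with $o'\in D_0$; see Lemma~\ref{lem:non-terminal}.) In your setting, however, $p^*\in e_j$ and the hypothesis $e_j\notin\E(r_0)$ says exactly that $e_j$ misses $D_0$, so $p^*\notin D_0$. Then $p^*\notin H_{0i}$ is automatic and tells you nothing about the direction $\vec{op^*}$; that direction may well fall inside the angular span of $\gamma_{0i}$, and then $D(p^*,r_j)\cap\gamma_{0i}$ can be an arc lying entirely in the interior of $\gamma_{0i}$.

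Here is a concrete instance. Take $e_0$ with right endpoint at the origin and $r_0=1$, so $\gamma_0$ is the right half of the unit circle and $\gamma_{0i}$ is, say, the arc $\{(\cos\theta,\sin\theta):|\theta|\le\pi/6\}$. Let $e_j$ be a long vertical segment on the line $x=2$ and take $r_j=1.1$. The left straight edge of $K_j$ is the line $x=0.9$, which meets $\gamma_{0i}$ at the two CR-vertices $v,v'=(0.9,\pm\sqrt{0.19})$. The nearest point $p^*$ on $e_j$ to $v\approx(0.9,0.436)$ is $(2,0.436)$, and one checks that both endpoints $(\sqrt{3}/2,\pm 1/2)$ of $\gamma_{0i}$ lie at distance greater than $1.1$ from $p^*$; indeed they lie outside all of $K_j$ (their $x$-coordinate $\sqrt{3}/2\approx 0.866<0.9$). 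So $v$ is \emph{not} terminal. It is, however, regular: $\bd K_0\cap\bd K_j=\{v,v'\}$.

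This is precisely the situation the paper's proof handles by a different mechanism. Rather than trying to force terminality, the paper shows that when $e_j$ misses $D_0$ the pair $D_0,K_j$ are pseudo-disks (via a Minkowski-sum reduction and \cite{KLPS}), so $\bd D_0\cap\bd K_j$ has exactly two points; if both lie on $\gamma_0$ one argues that $\bd K_0\cap\bd K_j$ also has exactly two points (regular), while if only one lies on $\gamma_0$ the arc $K_j\cap\bd D_0$ must exit $\gamma_{0i}$ through one of its endpoints (terminal). Your shortcut via Corollary~\ref{cor:disks}(b) cannot replace this case analysis.
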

\begin{proof}
    Let $c$ denote the center of $D_0$, and consider the interaction
    between $K_j$ and $D_0$. We split into the following two cases.

    \noindent {\bf Case 1, $r_j \le r_0$:} Regard $D_0$ as
    $D_0^*\oplus D(r_j)$, where $D_0^*$ is the disk of radius
    $r_0-r_j$ centered at $c$. By assumption, $D_0^*$ and $e_j$ are
    disjoint, implying that $D_0$ and $K_j$ are pseudo-disks
    (cf.~\cite{KLPS}), that is, their boundaries intersect in two
    points, one of which is $v$; denote the other point as $v'$.

    If only $v$ lies on $\gamma_0$, then $v$ must be a terminal
    vertex, so assume that both $v$ and $v'$ lie on $\gamma_0$. We
    claim that $\bd K_j$ and $\bd K_0$ can intersect only at $v$ and
    $v'$, implying that $v$ is regular. Indeed, $v$ and $v'$ partition
    $\bd K_j$ into two connected pieces. One piece is inside $D_0$,
    locally near $v$ and $v'$, and cannot intersect $\bd K_i$ in a
    point other than $v$ and $v'$ without intersecting $D_0$ in a
    third point (other than $v$ and $v'$), contradicting that $D_0$
    and $K_j$ are pseudo-disks. The other connected piece of $\bd K_j$
    between $v$ and $v'$ is separated from $\bd K_i \setminus
    \gamma_0$ by the line through $v$ and $v'$ and therefore cannot
    contain intersections other than $v$ and $v'$ between $\bd K_j$
    and $\bd K_i$. See Figure \ref{fig:cr}(a).

    \noindent {\bf Case 2, $r_j > r_0$:} Let $K_j^* = e_j \oplus
    D(r_j-r_0)$. $K_j$ can now be regarded as $K_j^* \oplus
    D(r_0)$. If $c \not\in K_j^*$, then by the result of \cite{KLPS},
    $D_0=c\oplus D(r_0)$ and $K_j$ are pseudo-disks; see
    Figure~\ref{fig:cr}(b).  Therefore, the argument given above for
    the case where $r_0\ge r_j$ implies the lemma in this case as
    well. Finally, $c \in K_j^*$ implies that $K_j$ contains $D_0$, so
    this case cannot occur (it contradicts the existence of $v$). See
    Figure~\ref{fig:cr}(c).
\end{proof}

\begin{figure*}[t]
    \centering
    \begin{tabular}{cc}
        %\scalebox{0.5}
        {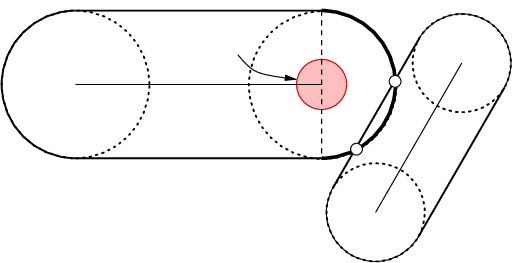}&
        %\scalebox{0.5}
        {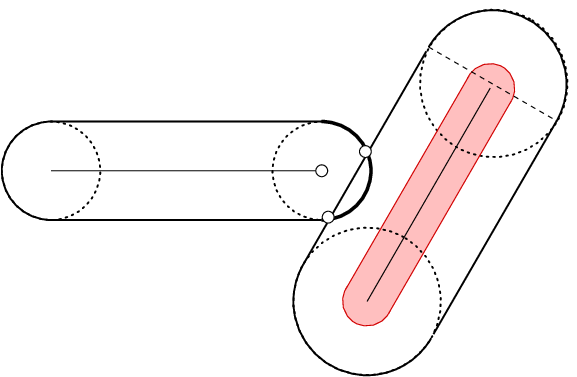} \\
        \small (a)&\small(b) \\
        \multicolumn{2}{c}{
           %\scalebox{0.5}
           {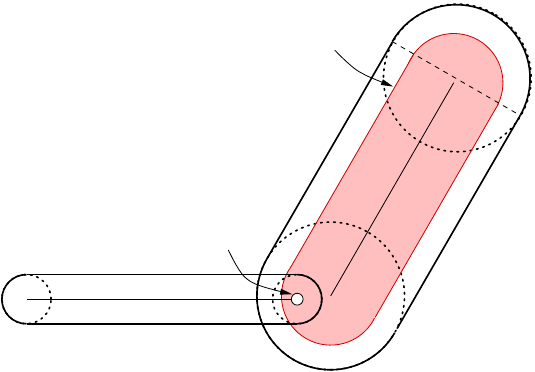}}\\
        \multicolumn{2}{c}{\small (c)}\\
    \end{tabular}
    \caption{(a): The case when $r_0 \ge r_j$ and $\bd K_j \cap
       \gamma_0$ contains the two intersection points of $\bd D_0$ and
       $\bd K_j$. (b) The case when $r_j > r_0$ and $c \not\in
       K_j^*$. (c) The case when $r_j > r_0$ and $c \in K_j^*$.}%
    \label{fig:cr}
\end{figure*}

Using Lemmas~\ref{lem:disks} and~\ref{lem:boundary}, we obtain the
following property.

%%%%%%%%%%%%%%%%%%%%%%%%%%%%%%%%%%%%%%%%%
\begin{lemma}
    \label{lem:non-terminal} Let $v\in \gamma_{0i}\cap \bd K_j$ be a
    non-terminal, irregular CR-vertex of $\union$. Then (i) $e_j \in
    \E_i(r_0)$, and (ii) for all $e_l \in \E_i(r_0)$, $r_l < r_0$.
\end{lemma}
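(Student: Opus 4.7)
The plan is to prove both parts by contradiction, leveraging Lemma~\ref{lem:boundary} and Corollary~\ref{cor:disks}. The key geometric idea is that $K_j = \bigcup_{p \in e_j} D(p, r_j)$ can be analyzed disk-by-disk via the sector-versus-disk corollary, with the $\pi/3$-sector partition of $H_0$ engineered precisely so that this corollary applies.

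For part (i), I would split into two cases according to whether $e_j$ meets $D_0$. If $e_j \notin \E(r_0)$, then Lemma~\ref{lem:boundary} immediately forces $v$ to be regular or terminal, contradicting the hypothesis. If instead $e_j \in \E(r_0) \setminus \E_i(r_0)$, so $e_j$ enters $D_0$ but avoids the sector $H_{0i}$, then every point $p \in e_j$ lies outside $H_{0i}$, and Corollary~\ref{cor:disks}(b), applied to each $D(p, r_j)$ and the sector $H_{0i}$, shows that $D(p, r_j) \cap \gamma_{0i}$ is either empty or a union of at most two arcs, each containing an endpoint of $\gamma_{0i}$. Taking the union over $p \in e_j$, the same structural description survives for $K_j \cap \gamma_{0i}$: at most two arcs, each containing an endpoint of $\gamma_{0i}$. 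In particular, any point of $\bd K_j \cap \gamma_{0i}$ lying in the relative interior of $\gamma_{0i}$ must be the inner endpoint of such an arc, whose outer endpoint is an endpoint of $\gamma_{0i}$ contained in $K_j$. This forces $v$ to be terminal, a contradiction.

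For part (ii), I would suppose for contradiction that some $e_l \in \E_i(r_0)$ has $r_l \geq r_0$, and pick any $p \in e_l \cap H_{0i}$. Corollary~\ref{cor:disks}(a) then yields $\gamma_{0i} \subset D(p, r_l) \subset K_l$. If $l \neq j$, then $v \in K_l$, and general position (no three of $\bd K_0, \bd K_j, \bd K_l$ concurrent) places $v$ in the interior of $K_l$, contradicting $v \in \bd \union$. If $l = j$, then $\gamma_{0i} \subset K_j$, so no point of $\gamma_{0i}$ can transversally cross $\bd K_j$; combined with the general position assumption that rules out tangential contact, this gives $\gamma_{0i} \cap \bd K_j = \emptyset$, again contradicting the existence of $v$.

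The main subtlety I anticipate is precisely the case $l = j$ of part (ii): the inclusion $\gamma_{0i} \subset K_j$ does not immediately place $v$ off $\bd K_j$, so one must rely on the absence of tangencies guaranteed by general position to conclude that $\gamma_{0i}$ and $\bd K_j$ are actually disjoint. The remainder of the argument is a clean application of Corollary~\ref{cor:disks}, made possible by the sector angle $\pi/3$ together with the $2\pi/3$ lower bound on the angular extent of $D' \cap \bd D$ supplied by Lemma~\ref{lem:disks}.
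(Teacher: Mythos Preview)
Your proposal is correct and follows essentially the same route as the paper: Lemma~\ref{lem:boundary} to place $e_j$ in $\E(r_0)$, Corollary~\ref{cor:disks}(b) to rule out $e_j\in\E(r_0)\setminus\E_i(r_0)$, and Corollary~\ref{cor:disks}(a) for part~(ii). The one noteworthy difference is in part~(i): the paper picks a \emph{single} point $o'\in e_j\cap D_0$ and argues from the single disk $D(o',r_j)$, whereas you take the union of $D(p,r_j)$ over all $p\in e_j$; your version is actually the cleaner argument, since the paper's single disk $D'$ need not itself meet $\gamma_{0i}$, and one implicitly needs exactly your union step to conclude that $K_j$ contains an endpoint of $\gamma_{0i}$.
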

%%%%%%%%%%%%%%%%%%%%%%%%%%%%%%%%%%%%%%%%%
\begin{proof}
    Lemma~\ref{lem:boundary} implies that $e_j \in \E(r_0)$. Suppose
    first that $e_j \in \E(r_0)\setminus \E_i(r_0)$. Pick a point
    $o'\in e_j\cap D_0$, which exists by assumption, and note that
    $K_j$ contains the disk $D'$ of radius $r_j$ centered at
    $o'$. Part (b) of Corollary~\ref{cor:disks} implies that $D'$
    intersects $\gamma_{0i}$ at an arc or a pair of arcs, each
    containing an endpoint of $\gamma_{0i}$, i.e., $v$ is a terminal
    vertex, contrary to assumption.  We can therefore conclude that
    $e_j \in \E_i(r_0)$. Part (a) of Corollary~\ref{cor:disks} implies
    that $r_l < r_0$ for all $e_l \in \E_i(r_0)$, because otherwise we
    would have $\gamma_{0i}\subset K_l$ and $\gamma_{0i}$ would not
    contain any vertex of $\bd\union$.
\end{proof}

\paragraph{The expected number of non-terminal vertices on
   $\gamma_0$.}
We are now ready to bound the expected number of non-terminal
irregular CR-vertices of $\union$ that lie on the semi-circular arc
$\gamma_0$ of $K_0$. Note that $\gamma_0$ is not fixed, as it depends
on the value of $r_0$. Let $\chi_\CR(\gamma_0; r_0, r_1, \ldots,
r_{n-1})$ denote the number of non-terminal irregular vertices on
$\gamma_0$, assuming that $r_i$ is the expansion distance of $K_i$,
for $i=0,\ldots,n-1$. Our goal is to bound
\[ \bar\chi_\CR(\gamma_0) = \EE[\chi_\CR(\gamma_0;
r_0,\ldots,r_{n-1})] \] where the expectation is over all the random
permutations assigning these distances to the segments of $\E$.  As
for RR-vertices, we first fix the value of $r_0$ to, say, $r$, and
bound $\chi_\CR(\gamma_0\mid r)$, the expected value of
$\chi_\CR(\gamma_0, r, r_1, \ldots, r_{n-1})$, where the expectation
is taken over the random shuffles of $r_1, \ldots, r_{n-1}$, and then
bound $\bar\chi_\CR(\gamma_0)$ by averaging over the choice of $r_0$.

%%%%%%%%%%%%%%%%%%%%%%%%%%%%%%%%%%%%
\begin{lemma}
    Using the notation above, $\bar\chi_\CR(\gamma_0) = O(\log n)$.
\end{lemma}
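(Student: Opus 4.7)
My plan is to bound the expected number of non-terminal irregular CR-vertices on each of the three subarcs $\gamma_{0i}$, $i=1,2,3$, separately, and then sum; the $O(1)$ terminal vertices on $\gamma_0$ have already been handled. I will follow the same template as the RR analysis: first condition on a fixed value $r_0 = r$ for $e_0$'s expansion distance, then average over the random choice of $r_0$ at the end.

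Fix $i$ and condition on $r_0 = r$. Then $H_{0i}$, $\E_i(r_0)$, and $m_i := |\E_i(r_0)|$ are all determined, since they depend only on $r_0$ and the fixed endpoint of $e_0$. By Lemma~\ref{lem:non-terminal}, any non-terminal irregular CR-vertex on $\gamma_{0i}$ can exist only if the event $\mathcal{E} = \{\, r_l < r_0 \text{ for every } e_l \in \E_i(r_0)\,\}$ holds. Moreover, each racetrack boundary $\bd K_l$, for $e_l\in\E_i(r_0)$, meets the single circular arc $\gamma_{0i}$ in $O(1)$ points, so conditional on $\mathcal{E}$ the number of candidate vertices on $\gamma_{0i}$ is at most $c\,m_i$ for an absolute constant $c$.

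Now write $r = \theta_{n-k+1}$, so that $r$ is the $k$-th largest value in $\RRR$. Conditioned on $r_0 = r$, the remaining $n-1$ distances in $\RRR\setminus\{r\}$ (of which exactly $n-k$ are smaller than $r$) are distributed to the other segments via a uniformly random permutation, hence
\[
\Pr[\mathcal{E}\mid r_0=r] \;=\; \frac{\binom{n-k}{m_i}}{\binom{n-1}{m_i}} \;\le\; \left(1-\frac{k-1}{n-1}\right)^{m_i} \;\le\; e^{-m_i(k-1)/(n-1)}.
\]
Setting $x := m_i(k-1)/(n-1)$ and using the bound $x e^{-x} \le 1/e$ yields, for $k \ge 2$,
\[
\EE\!\left[\chi_\CR(\gamma_{0i})\mid r_0=r\right] \;\le\; c\,m_i\,e^{-m_i(k-1)/(n-1)} \;=\; \frac{c(n-1)}{k-1}\cdot xe^{-x} \;=\; O(n/k),
\]
while for $k=1$ I fall back on the trivial bound $c\,m_i \le c(n-1) = O(n)$. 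Since $r_0 = \theta_{n-k+1}$ with probability $1/n$ for every $k\in\{1,\ldots,n\}$, averaging gives
\[
\bar\chi_\CR(\gamma_{0i}) \;\le\; \frac{1}{n}\sum_{k=1}^{n} O(n/k) \;=\; O(H_n) \;=\; O(\log n),
\]
and summing over $i=1,2,3$ produces the claimed $O(\log n)$ bound on $\bar\chi_\CR(\gamma_0)$.

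The main obstacle is the probabilistic step that trades $m_i$ (the candidate vertex count when $\mathcal{E}$ holds) against $\Pr[\mathcal{E}]$ (which shrinks exponentially in $m_i$): the naive bound $\Pr[\mathcal{E}]\le 1$ gives only $O(n)$ per subarc, and one genuinely needs the exponential inequality together with the $xe^{-x}\le 1/e$ device to extract exactly the $O(n/k)$ factor that sums harmonically to $O(\log n)$. A minor bookkeeping point is verifying that $\E_i(r_0)$ is measurable with respect to $r_0$ alone; this is immediate, since $H_{0i}$ is a sector of $D_0$ whose geometry is determined entirely by $r_0$ and the fixed endpoint of $e_0$.
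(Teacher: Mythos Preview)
Your proof is correct and follows essentially the same approach as the paper: condition on $r_0=r=\theta_{n-k+1}$, invoke Lemma~\ref{lem:non-terminal} to reduce to the event that all $m_i$ segments in $\E_i(r)$ receive radii smaller than $r$, bound that probability by $e^{-(k-1)m_i/(n-1)}$, apply the $xe^{-x}\le 1/e$ trick to get $O(n/k)$, and average harmonically over $k$. The only cosmetic difference is that the paper phrases the per-subarc count as ``at most $2m_i$ connected components of $\gamma_{0i}\cap\bigcup K_l$'' whereas you count ``$O(1)$ intersection points of each $\partial K_l$ with $\gamma_{0i}$,'' which yields the same $O(m_i)$ bound.
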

%%%%%%%%%%%%%%%%%%%%%%%%%%%%%%%%%%%%
\begin{proof}
    Following the above scheme, suppose that the value $r_0$ is indeed
    fixed to $r$, so $\gamma_0$ and $\gamma_{0i}$, $1 \le i \le 3$,
    are fixed. As above, set $m_i= |\E_i(r)|$, for $i=1,2,3$; the sets
    $\E_i(r)$ and their sizes $m_i$ are also fixed.  We bound the
    expected number of non-terminal irregular vertices on
    $\gamma_{0i}$, for a fixed $i\in\{1,2,3\}$. By
    Lemma~\ref{lem:non-terminal}, any such vertex lies on the boundary
    of $\J_0$, the intersection of $\gamma_{0i}$ with the union of
    $\{K_l \mid e_l \in \E_i(r)\}$. Equivalently, it suffices to
    bound the expected number of connected components of $\J_0$ that
    lie in the interior of $\gamma_{0i}$. By
    Lemma~\ref{lem:non-terminal}, if $r_l \ge r$ for any $e_l \in
    \E_i(r)$, then there are no such components.

    Assume that $r=\radii_{n-k+1}$ for some $k\in\{1,\ldots,n\}$.  To
    bound $\chi_\CR(\gamma_0\mid r)$, we first bound the probability
    $p$ that all the $m_i$ radii that are assigned to the segments of
    $\E_i(r)$ are smaller than $r$. We have
    \begin{align*}
        p & =%
        \frac{ \binom{n-k}{m_i} }{ \binom{n-1}{m_i} } = \frac{
           (n-k)(n-k-1)\cdots (n-k-m_i+1) }
        { (n-1)(n-2)\cdots (n-m_i) }  \\
        & =%
        \left(1-\frac{k-1}{n-1}\right) \left(1-\frac{k-1}{n-2}\right)
        \cdots
        \left(1-\frac{k-1}{n-m_i}\right) \\
        & < \left(1-\frac{k-1}{n-1}\right)^{m_i} < e^{-(k-1)m_i/(n-1)} .
    \end{align*}
    That is, with probability $1-p$ there are no connected components.
    Note that $1-p=0$ when $k=1$.
    In the complementary case, when all the $m_i$ radii under
    consideration are smaller than $r$, we pessimistically bound the
    number of connected components by $2m_i$ --- each segment of
    $\E_i(r)$ can generate at most two connected components. In other
    words, when $k\ge 2$, the expected number of connected components
    of $\J_0$ is at most
    $$
    2m_ip < 2m_i e^{-(k-1)m_i/(n-1)} = \frac{2(n-1)}{k-1} \cdot \left( ((k-1)m_i/(n-1))
        e^{-(k-1)m_i/(n-1)}\right) < \frac{2(n-1)}{e(k-1)} ,
    $$
    because the maximum value of the expression $xe^{-x}$ is $e^{-1}$.
    The bound is $2m_i \le 2(n-1)$ when $k=1$.

    Since $r=\radii_{n-k+1}$ with probability $1/n$ for every $k$, we have
    \begin{align*}
    \EE[\bar\chi_\CR(\gamma_0) ]%
    &=%
    \EE[\chi_\CR(\gamma_0\mid r)] = \sum_{k=1}^n \frac1n \cdot
    \EE[\chi_\CR(\gamma_0\mid \radii_{n-k+1})] \\
    &\le \frac1n \left[ 2(n-1) + \sum_{k=2}^n
    \frac{2(n-1)}{e(k-1)} \right] \\
    &= O\left( \sum_{k=1}^n \frac{1}{k} \right) = O(\log n) .
    \end{align*}
\end{proof}

Summing this bound over all three subarcs of $\gamma_0$ and adding the
constant bound on the number of terminal (irregular) vertices, we
obtain that the expected number of irregular CR-vertices of $\union$
on $\gamma_0$ is $O(\log n)$.  Summing these expectations over the
$2n$ semicircular arcs of the racetracks in $\K$, and adding the
bounds on the number of regular CR-vertices we obtain the following
lemma.
%%%%%%%%%%%%%%%%%%%%%%%%%%%%%%
\begin{lemma} \label{lem:CR} The expected number of CR-vertices on
    $\union(\K)$ is $O(n\log n)$.
\end{lemma}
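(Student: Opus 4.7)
The plan is to assemble the pieces developed in this subsection, together with the earlier bound on RR-vertices (Lemma~\ref{lem:rr}) and the trivial linear bound on CC-vertices, into the claimed $O(n\log n)$ bound. First I would separate the CR-vertices on $\bd \union$ into regular and irregular classes, where a vertex is regular if the two racetrack boundaries meeting at it cross in exactly two points. By the theorem of Pach and Sharir~\cite{PS}, the number of regular vertices on $\bd\union$ is $O(n+\kappa)$, where $\kappa$ is the total number of irregular vertices on $\bd\union$. The CC-vertices of $\union$ are also vertices of the union of the $2n$ bounding disks, so their count is $O(n)$ by~\cite{KLPS}, and the expected number of RR-vertices is $O(n\log n)$ by Lemma~\ref{lem:rr}. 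It therefore suffices to show that the expected number of irregular CR-vertices on $\bd\union$ is $O(n\log n)$.

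To do so I would handle each of the $2n$ semicircular arcs of the racetracks in $\K$ separately. Fix one such arc $\gamma_0$ and split it into three subarcs $\gamma_{01},\gamma_{02},\gamma_{03}$ of angular span $\pi/3$ each, as set up above. On each subarc $\gamma_{0i}$ the irregular CR-vertices split into terminal and non-terminal ones: there are $O(1)$ terminal vertices per subarc by definition, for an overall contribution of $O(n)$, and the expected number of non-terminal ones is $O(\log n)$ per subarc by the preceding lemma. Summing the three subarc contributions gives $O(\log n)$ expected irregular CR-vertices on $\gamma_0$, and summing over all $2n$ semicircular arcs of the racetracks in $\K$ gives $O(n\log n)$ expected irregular CR-vertices on $\bd\union$ in total.

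Plugging this back into the Pach--Sharir inequality yields $O(n\log n)$ regular CR-vertices as well, completing the proof. All of the genuine difficulty — the sector decomposition, the pseudo-disk consequences of Corollary~\ref{cor:disks} and Lemma~\ref{lem:boundary}, and the probabilistic argument that only the $m_i$ sets of $\E_i(r_0)$ with $r_l<r_0$ can contribute non-terminal irregular vertices — has already been carried out in the lemmas above, so the only remaining obstacle is the (routine) bookkeeping to make sure that no class of vertices is double-counted or overlooked when invoking~\cite{PS}.
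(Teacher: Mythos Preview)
Your proposal is correct and follows essentially the same route as the paper: split CR-vertices into regular and irregular, invoke Pach--Sharir~\cite{PS} to reduce to bounding the irregular ones, use the $O(n)$ bound on CC-vertices and Lemma~\ref{lem:rr} for RR-vertices, and then bound the expected number of irregular CR-vertices on each of the $2n$ semicircular arcs by $O(\log n)$ via the three-subarc decomposition and the preceding lemma. The paper's own argument is exactly this assembly of previously established pieces.
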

%%%%%%%%%%%%%%%%%%%%%%%%%%%%%%

Combining Lemma~\ref{lem:rr}, Lemma~\ref{lem:CR}, and the linear bound
on the number of CC-vertices, completes the proof of
Theorem~\ref{thm:main} for the case of segments.

%%%%%%%%%%%%%%%%%%%%%%%%%%%%%%%%%%%%%%%%%%%%%%%%%%%%%%%%%%
\section{The Case of Polygons}
\label{sec:poly}
%%%%%%%%%%%%%%%%%%%%%%%%%%%%%%%%%%%%%%%%%%%%%%%%%%%%%%%%%%

In this section we consider the case where the objects of $\C$ are $n$
convex polygons, each with at most $s$ vertices. For simplicity, we prove
Theorem~\ref{thm:main} when each $C_i$ is a convex $s$-gon---if $C_i$ has
fewer than $s$ vertices, we can split some of its edges into multiple edges so that it has exactly $s$ vertices.
We reduce this case to the case of segments
treated above. A straightforward reduction that just takes the edges
of the $s$-gons as our set of segments does not work since edges of
the same polygon are all expanded by the same distance.
Nevertheless, we can overcome this difficulty as follows.

Let $\C=\{C_1,\ldots,C_n\}$ be the $n$ given polygons, and consider a
fixed assignment of expansion distances $r_i$ to the polygons $C_i$.
For each $i$, enumerate the edges of $C_i$ as
$e_{i1},e_{i2},\ldots,e_{is}$; the order of enumeration is not
important. Let $v$ be a vertex of $\union$, lying on the boundaries of
$K_i$ and $K_j$, for some $1\le i<j\le n$. Then there exist an edge
$e_{ip}$ of $C_i$ and an edge $e_{jq}$ of $C_j$ such that $v$ lies on
$\bd (e_{ip}\oplus D(r_i))$ and on $\bd (e_{jq}\oplus D(r_j))$; the
choice of $e_{ip}$ is unique if the portion of $\bd K_i$ containing
$v$ is a straight edge, and, when that portion is a circular arc, any
of the two edges incident to the center of the corresponding disk can
be taken to be $e_{ip}$.  A similar property holds for $e_{jq}$.

The following stronger property holds too. For each $1\le p\le s$, let
$\C_p$ be the set of edges $\{e_{1p},e_{2p},\ldots,e_{np}\}$, and let
$\K_p = \{ e_{1p} \oplus D(r_1),\ldots,e_{np} \oplus D(r_n) \}$. Then,
as is easily verified, our vertex $v$ is a vertex of the union
$\union(\K_p \cup \K_q)$.  Moreover, for each $p$, the expansion
distances $r_i$ of the edges $e_{ip}$ of $\C_p$ are all the elements
of $\RRR$, each appearing once, and their assignment to the segments
of $\C_p$ is a \emph{random permutation}.
Fix a pair of indices $1\le p<q\le s$, and note that
each expansion distance $r_i$ is assigned to exactly two segments of
$\C_p\cup\C_q$, namely, to $e_{ip}$ and $e_{iq}$.

We now repeat the analysis given in the preceding section for the
collection $\C_p\cup\C_q$, and make the following observations.
First, the analysis of CC-vertices remains the same, since the
complexity of the union of any family of disks is linear.

Second, in the analysis of RR- and CR-vertices, the exploitation of
the random nature of the distances $r_i$ comes into play only after we
have fixed one segment (that we call $e_0$) and its expansion distance
$r_0$, and consider the expected number of RR-vertices and CR-vertices
on the boundary of $K_0=e_0\oplus D(r_0)$, conditioned on the fixed
choice of $r_0$. Suppose, without loss of generality, that $e_0$
belongs to $\C_p$. We first ignore its sibling $e'_0$ in $\C_q$ (from
the same polygon), which receives the same expansion distance $r_0$;
$e'_0$ can form only $O(1)$ vertices of $\union$ with
$e_0$.\footnote{%
   As a matter of fact, $e_0$ and $e'_0$ do not generate any vertex of
   the full union $\union(\C)$, but they might generate vertices of
   $\union(\C_p\cup\C_q)$.} The interaction of $e_0$ with the other
segments of $\C_p$ behaves exactly as in Section~\ref{sec:segs}, and
yields an expected number of $O(\log n)$ RR-vertices of $\union(\K_p)$
charged to the \halfedges of $R_0$ and an expected number of $O(\log
n)$ CR-vertices charged to circular arcs of $K_0$.  Similarly, The
interaction of $e_0$ with the other segments of $\C_q$ (excluding
$e_0'$) is also identical to that in Section~\ref{sec:segs}, and
yields an additional expected number of $O(\log n)$ vertices of
$\union(\{e_0\}\cup\K_q)$ charged to an \halfedges and circular arcs
of $K_0$. Since any vertex of $\union(\K_p\cup\K_q)$ involving $e_0$
must be one of these two kinds of vertices, we obtain a bound of
$O(\log n)$ on the expected number of such vertices, and summing this
bound over all segments $e_0$ of $\C_p\cup\C_q$, we conclude that the
expected complexity of $\union(\K_p\cup\K_q)$ is $O(n\log n)$. (Note
also that the analysis just given manages to finesse the issue of
segments sharing endpoints.)

Summing this bound over all $O(s^2)$ choices of $p$ and $q$, we obtain
the bound asserted in Theorem~\ref{thm:main}. The constant of
proportionality in the bound that this analysis yields is $O(s^2)$.

%%%%%%%%%%%%%%%%%%%%%%%%%%%%%%%%%%%%%%%%%%%%%%%%%%%%%%%%%%%%%%%%%
%%%%%%%%%%%%%%%%%%%%%%%%%%%%%%%%%%%%%%%%%%%%%%%%%%%%%%%%%%%%%%%%%
\section{Network Vulnerability Analysis}
\label{sec:nva}
%%%%%%%%%%%%%%%%%%%%%%%%%%%%%%%%%%%%%%%%%%%%%%%%%%%%%%%%%%%%%%%%%
%%%%%%%%%%%%%%%%%%%%%%%%%%%%%%%%%%%%%%%%%%%%%%%%%%%%%%%%%%%%%%%%%

\begin{figure}%[\si{htb}]
    \centering
    \includegraphics[width=2.5in]{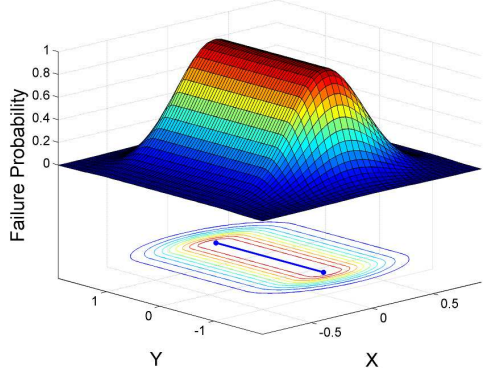}
    \caption{Discretizing the function $f_e$ for an edge $e$.}
    \label{fig:link}
\end{figure}

Let $\E=\{e_1, \ldots, e_n\}$ be a set of $n$ segments in the plane
with pairwise-disjoint relative interiors,
   and let $\ph:\preals\rightarrow [0,1]$ be an
edge failure probability function such that $1-\ph$ is a \cdf. For
each segment $e_i$, define the function $f_i: \reals^2 \rightarrow
[0,1]$ by $f_i(q)=\ph(d(q,e_i))$, for $q\in\reals^2$, where
$d(q,e_i)$ is the distance from $q$ to $e_i$, and set
$$\Phi(q,\E) = \sum_{i=1}^n f_i(q).$$
In this section we present a Monte-Carlo algorithm, which is an
adaptation and a simplification of the algorithm described
in~\cite{AEG*}, for computing a location $\tilde q$ such that
$\Phi(\tilde q,\E) \ge (1-\delta)\Phi(\E)$, where $0<\delta<1$ is some
prespecified error parameter, and where
$$\Phi(\E) = \max_{q\in\reals^2} \Phi(q,\E).$$
The expected running time of the algorithm is
a considerable improvement over the algorithm in \cite{AEG*};
this improvement is a consequence of the bounds obtained in the
preceding sections.

To obtain the algorithm we first discretize each $f_i$ by choosing a
finite family $\K_i$ of \emph{super-level sets} of $f_i$ (each of the
form $\{q\in\reals^2 \mid f_i(q) \ge t\}$), and reduce the problem of
computing $\Phi(\E)$ to that of computing the maximum depth in the
arrangement $\A(\K)$ of $\K=\bigcup_i \K_i$. Our algorithm then uses a
sampling-based method for estimating the maximum depth in $\A(\K)$,
and thereby avoids the need to construct $\A(\K)$ explicitly.
% The efficiency of our algorithm is established using Corollary~\ref{cor:klevel}.

In more detail, set $m=\lceil 2n/\delta\rceil$.
For each $1 \le j < m$, let $r_j = \ph^{-1}(1-j/m)$, and let,
for $i=1,\ldots,n$,
$K_{ij}=e_i \oplus D(r_j)$ be the racetrack formed by the Minkowski
sum of $e_i$ with the disk of radius $r_j$ centered at the origin.
Note that $r_j$ increases with $j$. Set $\tilde\ph=\{ r_j \mid 1 \le j
< m\}$, $\K_i=\{ K_{ij} \mid 1 \le j < m\}$, and $\K=\bigcup_{1 \le i
   \le n} \K_i$.  See Figure~\ref{fig:link}.
Note that we cannot afford to, and indeed do not, compute $\K$
explicitly, as its cardinality (which is quadratic in $n$) is too
large.

For a point $q \in \reals^2$ and for a subset $\X\subseteq \K$, let
$\Delta(q,\X)$, the \emph{depth} of $q$ with respect to $\X$, be the
number of racetracks of $\X$ that contain $q$ in their interior, and
let
$$\Delta(\X) = \max_{q \in \reals^2} \Delta(q,\X).$$
The following lemma (whose proof, which is straightforward, can be
found in~\cite{AEG*}) shows that the maximum depth of $\K$
approximates $\Phi(\E)$

%%%%%%%%%%%%%%%%%%%%%%%%%%%%%%%%%%%%%%%%%%%%%%%%%%%%%%%%%%%%%%
\begin{lemma}[Agarwal~\etal~\cite{AEG*}]
    \label{lem:discrete} (i) ${\displaystyle \Phi(q,\E) \ge
       \frac{\Delta(q,\K)}{m} \ge \Phi(q,\E) - \tfrac12\delta}$ for
    each point $q\in\reals^2$.\\ (ii) $\displaystyle \Phi(\E) \ge
    \frac{\Delta(\K)}{m} \ge (1-\tfrac12\delta) \Phi(\E)$.
\end{lemma}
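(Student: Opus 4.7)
}
The plan is to exploit the fact that each racetrack $K_{ij}$ is, by construction, exactly a super-level set of $f_i$. Indeed, since $\ph$ is monotonically decreasing and $\ph(r_j) = 1-j/m$, we have $q\in K_{ij}$ iff $d(q,e_i)\le r_j$ iff $f_i(q)\ge 1-j/m$. Consequently, the number $\Delta_i(q)$ of racetracks of $\K_i$ that contain $q$ is exactly the number of indices $j\in\{1,\ldots,m-1\}$ with $j/m\ge 1-f_i(q)$. A direct count then yields the two-sided bound
\[
m f_i(q) - 1 \;<\; \Delta_i(q) \;\le\; m f_i(q).
\]

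Summing this bound over $i=1,\ldots,n$, and recalling that $\Delta(q,\K)=\sum_i \Delta_i(q)$ and $\Phi(q,\E)=\sum_i f_i(q)$, we obtain
\[
m\,\Phi(q,\E) - n \;<\; \Delta(q,\K) \;\le\; m\,\Phi(q,\E).
\]
Dividing by $m$ and using $m = \lceil 2n/\delta\rceil$, so that $n/m \le \delta/2$, yields part (i).

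For part (ii), the upper bound $\Delta(\K)/m \le \Phi(\E)$ is immediate: if $q'$ attains $\Delta(\K)$, then $\Delta(\K)/m = \Delta(q',\K)/m \le \Phi(q',\E)\le \Phi(\E)$ by part (i). For the lower bound, let $q^*$ attain $\Phi(\E)$; by part (i), $\Delta(\K)/m \ge \Delta(q^*,\K)/m \ge \Phi(\E)-\delta/2$. To convert this additive error into a multiplicative one, I will use the elementary fact that $\Phi(\E)\ge 1$: picking any point $q_0$ on any edge $e_i$ gives $f_i(q_0)=\ph(0)=1$, hence $\Phi(\E)\ge \Phi(q_0,\E)\ge 1$. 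Therefore $\delta/2 \le (\delta/2)\Phi(\E)$, and the lower bound $\Delta(\K)/m\ge (1-\delta/2)\Phi(\E)$ follows. The only mildly delicate step is the integer count defining $\Delta_i(q)$, but it is entirely routine; there is no real obstacle here, the lemma is essentially a discretization identity.
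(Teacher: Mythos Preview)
Your proof is correct; the paper itself does not supply a proof for this lemma but defers to~\cite{AEG*}, calling it ``straightforward,'' and your argument is precisely the natural discretization count one would expect there. One minor caveat: the paper defines $\Delta(q,\K)$ via \emph{interior} containment, so the relevant condition is $d(q,e_i)<r_j$ rather than $\le$, but this shifts each per-edge count by at most one and your two-sided bound $mf_i(q)-1\le\Delta_i(q)\le mf_i(q)$ (hence part~(i), and then part~(ii) via $\Phi(\E)\ge 1$) goes through unchanged.
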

%%%%%%%%%%%%%%%%%%%%%%%%%%%%%%%%%%%%%%%%%%%%%%%%%%%%%%%%%%%%%%

% \noindent (Briefly, it follows from the definitions that, for any
% $q\in \reals^2$, $f_i(q)$ is well approximated by $1/m$ times the
% number of racetracks of $\K_i$ that contain $q$, with an additive
% error of at most $1/m$; specifically, the latter quantity lies
% between $f_i(q)-\tfrac{1}{m}$ and $f_i(q)$. Summing over $i$, we
% obtain that $\Phi(q,\E) \ge \tfrac{\Delta(q,\K)}{m} \ge
% \Phi(q,\E)-\tfrac{n}{m}$; since $n/m \le \delta/2$, (i) follows.
% Since we always have $\Phi(\E) \ge 1$, we get a relative error of at
% most $(\delta/2) \Phi(\E)$ at the point $q^*$ that attains
% $\Phi(\E)$, and part (ii) then easily follows.)

By Lemma~\ref{lem:discrete}, it suffices to compute a point $\tilde q$
of depth at least $(1-\tfrac12\delta)\Delta(\K)$ in $\A(\K)$; by (i) and
(ii) we will then have
\[
\Phi(\tilde q,\E) \ge \frac{\Delta(\tilde q,\K)}{m} \ge
(1-\tfrac12\delta)\frac{\Delta(\K)}{m} \ge (1-\tfrac12\delta)^2
\Phi(\E) > (1-\delta) \Phi(\E) .
\]

We describe a Monte-Carlo algorithm for computing
such a point $\tilde q$, which is
a simpler variant of the algorithm described in~\cite{AHP} (see
also~\cite{AHSSW}), but we first need the following
definitions.

For a point $q \in \reals^2$ and for a subset $\X\subseteq \K$, let
$\fdepth(q,\X) = \tfrac{\Delta (q,\X)}{|\X|}$ be the \emph{fractional
   depth} of $q$ with respect to $\X$, and let
$$\fdepth(\X) = \max_{q\in\reals^2}\ \fdepth(q,\X)
= \frac{\Delta(\X)}{|\X|} \; .$$ We observe that $\Delta(\K) \ge m-1$
because the depth near each $e_i$ is at least $m-1$. Hence,
\begin{equation}
    \label{eq:depth}
    \fdepth(\K) \ge \frac{m-1}{|\K|} = \frac{m-1}{(m-1)n} = \frac{1}{n} .
\end{equation}

Our algorithm estimates fractional depths of samples of $\K$ and
computes a point $\tilde q$ such that $\fdepth({\tilde q},\K) \ge
(1-\tfrac12\delta)\fdepth(\K)$. By definition, this is equivalent to
$\Delta({\tilde q},\K) \ge (1-\tfrac12\delta)\Delta(\K)$, which is
what we need.

%%%%%%%%%%%%%%%%%%%%%%%%%%%%%%%%%%%%%%%%%%%%%%%%%%%%%%%%%
\medskip

We also need the following concept from the theory of random sampling.

For two parameters $0 < \rho,\eps < 1$, we call a subset $\AA\subseteq
\K$ a \emph{$(\rho,\eps)$-approximation} if the following holds for
all $q \in \reals^2$:
\begin{equation}
    \label{eq:approx}
    \left| \fdepth(q,\K) - \fdepth (q,\AA) \right| \le
    \begin{cases}
        \eps\fdepth(q,\K) & \text{if $\fdepth (q,\K) \ge \rho$} \\
        \eps\rho & \text{if $\fdepth (q,\K) < \rho$} .
    \end{cases}
\end{equation}
This notion of $(\rho,\eps)$-approximation is a special case of the
notion of \emph{relative $(\rho,\eps)$-approximation} defined
in~\cite{HS11} for general range spaces with finite VC-dimension.  The
special case at hand applies to the so-called dual range space
$(\K,\reals^2)$, where the ground set $\K$ is our collection of
racetracks, and where each point $q\in\reals^2$ defines a range equal
to the set of racetracks containing $q$; here $\Delta(q,\K)$ is the
size of the range defined by $q$, and $\fdepth(q,\K)$ is its
\emph{relative size}.  Since $(\K,\reals^2)$ has finite VC-dimension
(see, e.g., \cite{SA}), it follows from a result in~\cite{HS11} that,
for any integer $b$, a random subset of size
\begin{equation}
    \asize(\rho,\eps) := \frac{cb}{\eps^2\rho} \ln n
\end{equation}
is a $(\rho,\eps)$-approximation of $\K$ with probability at least
$1-1/n^b$, where $c$ is a sufficiently large constant (proportional to
the VC-dimension of our range space). In what follows we fix $b$ to be
a sufficiently large integer, so as to guarantee (via the probability
union bound) that, with high probability, all the samplings that we
construct in the algorithm will have the desired approximation
property.

The algorithm works in two phases. The first phase finds a value $\rho
\ge 1/n$ such that $\fdepth(\K) \in [\rho, 2\rho]$.  The second phase
exploits this ``localization'' of $\fdepth(\K)$ to compute the desired
point $\tilde q$.

The first phase performs a decreasing exponential search: For $i \ge
1$, the $i$-th step of the search tests whether $\fdepth(\K) \le
1/2^i$.  If the answer is \textsc{yes}, the algorithm moves to the
$(i+1)$-st step; otherwise it switches to the second phase. Since we
always have $\fdepth(\K) \ge 1/n$ (see (\ref{eq:depth})), the first
phase consists of at most $\lceil \log_2 n\rceil$ steps.

At the $i$-th step of the first phase, we fix the parameters
$\rho_i=1/2^i$ and $\eps=1/8$, and construct a
$(2\rho_i,\eps)$-approximation of $\K$ by choosing a random subset
$\R_i \subset \K$ of size $\asize_i=\asize (2\rho_i,\eps)=O(2^i\log
n)$. We construct $\A(\R_i)$, e.g., using the randomized incremental
algorithm described in~\cite[Chapter~4]{SA} and compute
$\fdepth(\R_i)$ by traversing the arrangement $\A(\R_i)$.  Then, if
$$\fdepth(\R_i)\le (1-2\eps)\rho_i = \tfrac34 \rho_i \ ,$$
we continue to step $i+1$ of the first phase.  Otherwise, we switch to
the second phase of the algorithm (which is described below). The
following lemma establishes the important properties of the first
phase.
\begin{lemma} \label{lem:navi} When the algorithm reaches step $i$ of
    the first phase, we have $\fdepth(\K) \le \rho_{i-1}$ and
    $\fdepth(\K)$ is within the interval
    $[\fdepth(\R_i)-\tfrac{1}{4}\rho_i,
    \fdepth(\R_i)+\tfrac{1}{4}\rho_i]$.
\end{lemma}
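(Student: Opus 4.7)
\textbf{Proof plan for Lemma~\ref{lem:navi}.}
The plan is to drive both conclusions from the defining property of the relative $(\rho,\eps)$-approximations that the algorithm constructs. Throughout, I will assume (as the algorithm does, by taking $b$ large in the sample size) that every sample $\R_j$ used so far is indeed a $(2\rho_j,\eps)$-approximation of $\K$; this holds with probability at least $1-1/n^{b-O(1)}$ by a union bound over the $O(\log n)$ steps.

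For the first assertion, I would argue by contradiction: assume the algorithm has just entered step $i$, so at step $i-1$ it tested $\fdepth(\R_{i-1})\le (1-2\eps)\rho_{i-1}=\tfrac{3}{4}\rho_{i-1}$ and passed, yet $\fdepth(\K)>\rho_{i-1}$. Let $q^*$ realize $\fdepth(\K)$. Split on whether $\fdepth(\K)\ge 2\rho_{i-1}$ or $\rho_{i-1}<\fdepth(\K)<2\rho_{i-1}$. In the former case the first branch of~(\ref{eq:approx}) gives $\fdepth(q^*,\R_{i-1})\ge (1-\eps)\fdepth(\K)\ge \tfrac{7}{4}\rho_{i-1}$; in the latter case the second branch gives $\fdepth(q^*,\R_{i-1})\ge \fdepth(\K)-2\eps\rho_{i-1}>(1-2\eps)\rho_{i-1}=\tfrac{3}{4}\rho_{i-1}$. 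Either way $\fdepth(\R_{i-1})\ge\fdepth(q^*,\R_{i-1})>\tfrac{3}{4}\rho_{i-1}$, contradicting the test passing at step $i-1$. Hence $\fdepth(\K)\le\rho_{i-1}=2\rho_i$.

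For the second assertion, I will use this bound $\fdepth(\K)\le 2\rho_i$ together with the property that $\R_i$ is a $(2\rho_i,\eps)$-approximation of $\K$. For every point $q\in\reals^2$, $\fdepth(q,\K)\le 2\rho_i$, and both branches of~(\ref{eq:approx}) then yield the uniform estimate $|\fdepth(q,\K)-\fdepth(q,\R_i)|\le \eps\cdot 2\rho_i=\tfrac{1}{4}\rho_i$ (the first branch gives $\eps\,\fdepth(q,\K)\le 2\eps\rho_i$ when $\fdepth(q,\K)\ge 2\rho_i$, the second gives $\eps\cdot 2\rho_i$ otherwise). Applying this to the maximizer of $\fdepth(\cdot,\K)$ gives $\fdepth(\R_i)\ge\fdepth(\K)-\tfrac{1}{4}\rho_i$, and applying it to the maximizer of $\fdepth(\cdot,\R_i)$ gives $\fdepth(\R_i)\le\fdepth(\K)+\tfrac{1}{4}\rho_i$, which together say exactly that $\fdepth(\K)\in[\fdepth(\R_i)-\tfrac{1}{4}\rho_i,\fdepth(\R_i)+\tfrac{1}{4}\rho_i]$.

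There is no real obstacle to grind through: the only subtle step is the case split in the first part, which is forced by the piecewise definition of the relative approximation and by the fact that $(1-2\eps)=\tfrac{3}{4}$ and $2\eps=\tfrac{1}{4}$ are exactly the thresholds that make the two branches compatible with the test $\fdepth(\R_{i-1})\le\tfrac{3}{4}\rho_{i-1}$. Once the choice of constants $\eps=1/8$ and sample radius $2\rho_i$ (rather than $\rho_i$) is recognized as tuned precisely for this calculation, both parts reduce to direct substitution into~(\ref{eq:approx}).
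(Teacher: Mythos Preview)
Your proposal is correct and follows essentially the same approach as the paper: both arguments derive the interval bound by applying the $(2\rho_i,\eps)$-approximation property~(\ref{eq:approx}) at the maximizers of $\fdepth(\cdot,\K)$ and $\fdepth(\cdot,\R_i)$, using $\eps=1/8$ so that $2\eps\rho_i=\tfrac14\rho_i$. The one structural difference is in the first claim: the paper sets it up as an induction on $i$ (the interval bound at step $i-1$ plus the passed test $\fdepth(\R_{i-1})\le\tfrac34\rho_{i-1}$ give $\fdepth(\K)\le\rho_{i-1}$, which feeds the next step), whereas you argue directly by contradiction with a case split on whether $\fdepth(\K)\ge 2\rho_{i-1}$, exploiting the multiplicative branch of~(\ref{eq:approx}) to dispose of the large case without an inductive hypothesis---this is a clean self-contained variant, and you should just add the trivial base case $i=1$ ($\rho_0=1$).
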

\begin{proof}
    The proof is by induction on the steps of the algorithm. Assume
    that the algorithm is in step $i$. Then by induction $\fdepth(\K)
    \le \rho_{i-1}= 2\rho_i$ (for $i=1$ this is trivial since
    $\rho_{i-1} =1$). This, together with $\R_i$ being a
    $(2\rho_i,\eps)$-approximation of $\K$, implies by
    \eqref{eq:approx} that
    \begin{equation} \label{eq:111} \fdepth(\K) = \fdepth(q^*,\K) \le
        \fdepth(q^*,\R_i) + 2\eps\rho_i \le \fdepth(\R_i) +
        2\eps\rho_i = \fdepth(\R_i) + \tfrac14\rho_i ,
    \end{equation}
    where $q^*$ is a point satisfying $\fdepth(q^*,\K)=\fdepth(\K)$.
    Furthermore, using \eqref{eq:approx} again (in the opposite
    direction), we conclude that
    \[ \fdepth(\K) \ge \fdepth(q_i,\K) \ge \fdepth(q_i,\R_i) -
    2\eps\rho_i = \fdepth(\R_i) - \tfrac14\rho_i ,\] where $q_i$ is a
    point satisfying $\fdepth(q_i,\R_i)=\fdepth(\R_i)$. So we conclude
    that $\fdepth(\K)$ is in the interval specified by the lemma.

    The algorithm continues to step $i+1$ if $\fdepth(\R_i) \le
    \tfrac34 \rho_i$. But then by \eqref{eq:111} we get that
    $\fdepth(\K) \le \tfrac34 \rho_i + \frac{1}{4}\rho_i = \rho_i $ as
    required.
\end{proof}

%%%%%%%%%%%%%%%%%%%%%%%%%%%%%%%%%%%%%%%%%%%%%%%%%%%%%%%%

Suppose that the algorithm decides to terminate the first phase and
continue to the second phase, at step $i$. Then, by Lemma 4.2, we have
that $\fdepth(\K)\in [\fdepth(\R_i)-\tfrac{1}{4}\rho_i,
\fdepth(\R_i)+\tfrac{1}{4}\rho_i]$. Since, by construction,
$\fdepth(\R_i) > \tfrac34 \rho_i $, the ratio between the endpoints of
this interval is at most $2$, as is easily checked, so if we set
$\rho=\fdepth(\R_i)-\tfrac{1}{4}\rho_i$ then $\fdepth(\K) \in
[\rho,2\rho]$ as required upon entering the second phase.

% Then by the definition of the algorithm we know that $\fdepth(\R_i)
% > \tfrac34 \rho_i $ and by Lemma \ref{lem:navi} we know that
% $\fdepth(\K)\in [\fdepth(\R_i)-\tfrac{1}{4}\rho_i,
% \fdepth(\R_i)+\tfrac{1}{4}\rho_i]$. Since $\fdepth(\R_i) > \tfrac34
% \rho_i $ the ratio between the endpoints of this interval is at most
% $2$, as is easily checked, so if we set
% $\rho=\fdepth(\R_i)-\tfrac{1}{4}\rho_i$ then $\fdepth(\K) \in
% [\rho,2\rho]$ as required upon entering the second phase.

\medskip

In the second phase, we set $\rho=\fdepth(\R_i)-\tfrac{1}{4}\rho_i$
and $\eps=\delta/4$, and construct a $(\rho,\eps)$-approximation of
$\K$ by choosing, as above, a random subset $\R$ of size
$\nu=\asize(\rho,\eps)=O(2^i\log n)$. We compute $\A(\R)$, using the
randomized incremental algorithm in \cite{SA}, and return a point
$\tilde q \in \reals^2$ of maximum depth in $\A(\R)$.

This completes the description of the algorithm.

\paragraph{Correctness.}
We claim that $\omega(\tilde q, \K) \ge (1-\tfrac12\delta)\omega(\K)$.
Indeed, let $q^* \in \reals^2$ be, as above, a point of maximum depth
in $\A(\K)$.  We apply \eqref{eq:approx}, use the fact that
$\fdepth(q^*, \K) \ge \rho$, and consider two cases. If
$\fdepth(\tilde q, \K) \ge \rho$ then
\begin{align*}
    \fdepth(\tilde q, \K) \ge & \frac{\fdepth(\tilde
       q,\R)}{1+\tfrac14\delta} \ge \frac{\fdepth(q^*,\R)}{1+\tfrac14\delta}
    \ge \frac{1-\tfrac14\delta}{1+\tfrac14\delta}\fdepth(q^*,\K)\\
    \ge & (1-\tfrac12\delta)\fdepth( q^*,\K) = (1-\tfrac12\delta)\fdepth(\K) .
\end{align*}
On the other hand, if $\fdepth(\tilde q, \K) < \rho$ then
\begin{align*}
    \fdepth(\tilde q, \K) \ge & \fdepth(\tilde q,\R)
    -\tfrac{\delta}{4} \rho
    \ge \fdepth(q^*,\R) - \tfrac{\delta}{4} \rho \\
    \ge & (1-\tfrac{\delta}{4})\fdepth(q^*,\K) - \tfrac{\delta}{4} \rho \\
    \ge & (1-\tfrac{\delta}{4})\fdepth(q^*,\K)
    - \tfrac{\delta}{4} \fdepth(q^*,\K) \\
    = & (1-\tfrac12\delta)\fdepth(\K) .
\end{align*}
Hence in both cases the claim holds.  As argued earlier, this implies
the desired property
$$\Phi(\tilde q,\E) \ge (1-\delta)\Phi(\E) .$$

\paragraph{Running time.}
We now analyze the expected running time of the algorithm.  We first
note that we do not have to compute the set $\K$ explicitly to obtain
a random sample of $\K$. Indeed, a random racetrack can be chosen by
first randomly choosing a segment $e_i \in \E$, and then by choosing
(independently) a random racetrack of $\K_i$.  Hence, each sample
$\R_i$ can be constructed in $O(\asize_i)$ time, and the final sample
$\R$ in $O(\asize)$ time.

To analyze the expected time taken by the $i$-th step of the first
phase, we bound the expected number of vertices in $\A(\R_i)$.
\begin{lemma}
    The expected number of vertices in the arrangement $\A(\R_i)$ is
    $O(2^i\log^3 n)$.
\end{lemma}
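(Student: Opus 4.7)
The plan is to combine an upper bound on the maximum depth of $\A(\R_i)$ with the $(\le k)$-level bound of Corollary~\ref{cor:klevel}. Throughout, set $N := |\R_i| = \asize_i = O(2^i \log n)$.

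The first step is to bound $\Delta(\R_i)$. Since the algorithm has reached step $i$ of the first phase, Lemma~\ref{lem:navi} guarantees $\fdepth(\K) \le \rho_{i-1} = 2\rho_i$. With probability at least $1-1/n^b$, the sample $\R_i$ is a $(2\rho_i,1/8)$-approximation of $\K$, so inequality~\eqref{eq:approx} gives, for every $q \in \reals^2$,
\[
\fdepth(q,\R_i) \;\le\; \fdepth(q,\K) + \tfrac14 \rho_i \;\le\; \tfrac94 \rho_i .
\]
Hence $\Delta(\R_i) \le \tfrac94 \rho_i \cdot N = O(\log n)$; let $k$ denote this bound.

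The second step is to apply Corollary~\ref{cor:klevel} (segment case, $s=2$) to $\R_i$. Because every vertex of $\A(\R_i)$ has depth at most $k$, the total vertex count of $\A(\R_i)$ coincides with the count at depth $\le k$, which the corollary bounds in expectation by
\[
O(Nk\log(N/k)) \;=\; O\!\left(2^i \log n \cdot \log n \cdot \log 2^i\right) \;=\; O\!\left(i\cdot 2^i \log^2 n\right) \;=\; O\!\left(2^i \log^3 n\right),
\]
where I used $i \le \log_2 n$ (since $\rho_i \ge \fdepth(\K) \ge 1/n$ by~\eqref{eq:depth}). The low-probability event that the approximation property fails contributes at most $N^2 \cdot n^{-b}$ to the expectation, which is negligible for large $b$.

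The main technical obstacle will be justifying the application of Corollary~\ref{cor:klevel} to $\R_i$. The corollary is stated for a family of racetracks whose underlying segments are pairwise disjoint and whose radii are assigned via a random permutation, whereas $\R_i$ is generated by picking each of its $N$ racetracks independently and uniformly from $\K$, so two racetracks of $\R_i$ may share the same underlying segment. The observation that resolves this is that any two racetracks in $\R_i$ sharing a segment are nested and therefore contribute no crossing vertices to $\A(\R_i)$; hence the vertex count of $\A(\R_i)$ is governed entirely by interactions among racetracks with pairwise-distinct underlying segments, which can be viewed as an instance of the \pdf model (a special case of the permutation model) with the uniform \pdf on $\tilde\ph$, placing the situation squarely within the reach of Corollary~\ref{cor:klevel}.
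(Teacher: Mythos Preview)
Your proof follows the same approach as the paper's: bound $\Delta(\R_i)=O(\log n)$ via Lemma~\ref{lem:navi} and the approximation guarantee, then invoke Corollary~\ref{cor:klevel}. The only substantive divergence is in how you handle repeated segments in the sample. Your observation that two racetracks over the same segment are nested (hence contribute no crossing vertices) is correct, but it does not by itself place $\R_i$ within the hypotheses of Corollary~\ref{cor:klevel}: that result is stated for a family of \emph{pairwise-disjoint} segments each assigned a \emph{single} random radius, whereas $\R_i$ may contain several racetracks over one segment, and each of those still interacts separately with the racetracks of every other segment. Saying that the resulting collection ``can be viewed as an instance of the \pdf model'' is an assertion, not a reduction.

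The paper closes this gap differently: it conditions on the multiset $\E_i$ of chosen segments and then slightly perturbs any repeated copies so that they become pairwise disjoint, arguing that under the general-position assumptions this does not affect the asymptotic maximum depth; Corollary~\ref{cor:klevel} then applies directly to the perturbed family under the \pdf model with the uniform distribution on $\tilde\ph$. Your nestedness remark is in fact consistent with this fix---perturbing two previously nested racetracks can only create additional intersection vertices, so the perturbed arrangement has at least as many vertices as the original---but that connecting step is what your last sentence is missing.
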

\begin{proof}
    By Lemma \ref{lem:navi} if we perform the $i$-th step of the first
    phase then $\fdepth(\K)\le \rho_{i-1}=2\rho_i$. Therefore, using
    \eqref{eq:approx} we have
    \[ \fdepth(\R_i) \le \fdepth(\K) + 2\eps\rho_i \le 2\rho_i +
    2\eps\rho_i < 3 \rho_i .\] Therefore, $\Delta(\R_i) =
    \fdepth(\R_i)|\R_i| \le 3\rho_i\nu_i = O(\log n)$. The elements in
    $\R_i$ are chosen from $\K$ using the 2-stage random sampling
    mechanism described above, which we can rearrange so that we first
    choose a random sample $\E_i$ of segments, and then, with this
    choice fixed, we choose the random expansion distances. This
    allows us to view $\R_i$ as a set of racetracks over a fixed set
    $\E_i$ of segments, each of which is the Minkowski sum of a
    segment of $\E_i$ with a disk of a random radius, where the radii
    are drawn uniformly at random and independently from the set
    $\tilde\ph$. There is a minor technical issue: we
    might choose in $\E_i$ the same segment $e\in \E$ several times,
    and these copies of $e$ are not pairwise-disjoint. To address this
    issue, we slightly shift these multiple copies of $e$ so as to
    make them pairwise-disjoint.  Assuming that $\E$ is in general
    position and that the \cdf defining $\ph$ is in ``general
    position'' with respect to the locations of the segments of $\E$,
    as defined in Section \ref{sec:segs}, this will not affect the
    asymptotic maximum depth in the arrangement of the sample.
    %A related issue is that the segments of $\E$ may share endpoints. This can also be handled by a small perturbation of these segments.
    % Finally, the analysis in Section~\ref{sec:segs} assumes
    % $\ph$ to be a continuous \pdf, but here $\tilde\ph$ is a
    % discrete distribution. It can be verified that
    % Corollary~\ref{cor:klevel} holds for discrete distributions as
    % well.)

    % \micha{Add (here or elsewhere) a remark about the issue of
     % discretizing $\ph$.}

    By Corollary~\ref{cor:klevel}, applied under the \pdf model and
    conditioned on a fixed choice of $\E_i$, the expected value of
    $|\A(\R_i)|$ is
    $$
    \EE[ |\A(\R_i)| ] = O(\Delta(\R_i)\asize_i\log n)=O(2^i\log^3 n) ,
    $$
    implying the same bound for the unconditional expectation too.
\end{proof}

The expected time spent in constructing $\A(\R_i)$ by the randomized
incremental algorithm in \cite{SA} is $O(\asize_i\log\asize_i +
|\A(\R_i)|)=O(2^i\log^3n)$. Hence, the $i$-th step of the first phase
takes $O(2^i \log^3 n)$ expected time.
% As already remarked, the first phase consists of at most
% $\lceil\log_2 n\rceil$ steps.
Summing this bound over the steps of the first phase, we conclude that
the expected time spent in the first phase is $O(n\log^3 n)$.

In the second phase, $|\R| = O(\tfrac{1}{\delta^{2}\rho}\log n) =
O(\tfrac{n}{\delta^{2}}\log n) ,$ and the same argument as above,
using (\ref{eq:approx}), implies that
$$
\fdepth(\R) \le \max\{ (1+\tfrac{\delta}{4})\fdepth(\K),
\,\fdepth(\K)+ \tfrac{\delta}{4}\rho\} = O(\rho) ;
$$
where the latter bound follows as $\fdepth(\K)\in
[\rho,2\rho]$. Hence, $\Delta(\R)=\fdepth(\R)\cdot|\R| =
O(\tfrac{1}{\delta^2}\log n)$, and the expected size of $\A(\R)$ is
thus $O(\Delta(\R)\cdot|\R|\log n)=O(\tfrac{n}{\delta^4}\log^3 n)$.
Since this dominates the cost of the other steps in this phase, the
second phase takes $O(\tfrac{n}{\delta^4}\log^3 n)$ expected time.

Putting everything together, we obtain that the expected running time
of the procedure is $O(\tfrac{n}{\delta^4}\log^3 n)$, and it computes,
with high probability, a point $\tilde q$ such that $\Phi(\tilde
q,\E)\ge (1-\delta)\Phi(\E)$. This completes the proof of
Theorem~\ref{thm:nva}.

%%%%%%%%%%%%%%%%%%%%%%%%%%%%%%%%%%%%%%%%%%%%%%%%%%%%%%%%%%%%%%%

\section{Discussion}

We have shown that if we take the  Minkowski sums of the members of a family  of
pairwise-disjoint convex sets, each of constant description complexity, with
disks whose radii are chosen using  a suitable  probabilistic model, then
the expected complexity of the union of the Minkowski sums is near linear.
This generalizes the result of Kedem~\etal~\cite{KLPS} and shows that the
complexity of the union of Minkowski sums is quadratic only if the expansion
distances are chosen in an adversial manner. Our model is related to the
so-called \emph{realistic input models}, proposed  to obtain more refined bounds
on the performance of a variety of geometric  algorithms~\cite{BKSV}. There are
also some similarities between our model and the framework of smoothed analysis~\cite{ST}.

A natural collection of open problems is to tighten the bounds in our theorems or
prove corresponding lower bounds. In particular, the following questions arise.
(i) The $O(n^{1+\varepsilon})$ bound of Theorem \ref{thm:sets} is unlikely to
be tight. Is it possible to prove an $O(n\log n)$ upper bound as we
did for polygons in Theorem~\ref{thm:main}?
(ii) Can the bound in Theorem~\ref{thm:main} be improved from $O(s^2n\log n)$ to $O(sn\log n)$?
(iii) Is the bound of Theorem~\ref{thm:main}
asymptotically tight, even for segments, or could one prove a tighter
$o(n\log n)$ bound? maybe even linear?

Another interesting direction for future research is to explore other
problems that can benefit from our model.  For example, we believe that the
expected complexity of the multiplicatively-weighted Voronoi diagram of a set
of points in $\reals^2$ is near-linear if the weights are chosen using one of
our models, and we plan to investigate this problem. Recall that if the weights
are chosen by an adversary, then the complexity is quadratic~\cite{AK}.

Finally, it would be useful to prove, or disprove, that the density and
permutation models are equivalent, in the sense that the value of $\compl(\C)$ is
asymptotically the same under both models for any family $\C$ of pairwise-disjoint convex sets.
Nevertheless, it is conceivable that there is a large class of density functions for which the
\pdf model yields a better upper bound.

\paragraph{Acknowledgments.}
The authors thank Emo Welzl for useful discussions
concerning the two probabilistic models used in the paper.

\small


\begin{thebibliography}{99}
    \bibitem{AEG-MCOM} P.~K. Agarwal, A.~Efrat, S.~K. Ganjugunte,
    D.~Hay, S.~Sankararaman, and G.~Zussman, Network vulnerability to
    single, multiple, and probabilistic physical attacks, in
    \emph{Proc. 2010 Military Communication Conference},
    2010, 1824--1829.

    \bibitem{AEG*} P. K. Agarwal, A. Efrat, S. K. Ganjugunte, D. Hay,
    S. Sankararaman, and G. Zussman, The resilience of WDM networks to
    probabilistic geographical failures, \emph{Proc. 30th IEEE
       International Conference on Computer Communications},
    2011, 1521--1529.

    \bibitem{AES} P. K. Agarwal, E. Ezra, and M. Sharir, Near-linear
    approximation algorithms for geometric hitting sets,
    \emph{Algorithmica} 63~(2012), 1--25.

    \bibitem{AHSSW} P. K. Agarwal, T. Hagerup, R. Ray, M. Sharir,
    M. Smid, and E. Welzl, Translating a planar object to maximize
    point containment, \emph{Proc. 10th Annu. European
       Sympos. Algorithms}, 2002, 42--53.

     %\bibitem{aks-urmsn-13}
    %P.~K.~{Agarwal}, H.~Kaplan, and
    %M.~{Sharir},
    %Union of random minkowski sums and network vulnerability analysis,
    %In {\em Proc. 29th Annu. Sympos. Comput. Geom.},  2013,
    %177--186.

 \bibitem{APS} P.~K. Agarwal, J.~Pach and M.~Sharir, State of the
    union (of geometric objects), {\em Surveys on Discrete and
       Computational Geometry}, (J.~Goodman, J.~Pach and R.~Pollack,
    eds.), Amer. Math.  Soc., Providence, RI, 2008, 9--48.

    \bibitem{AHP} B.~Aronov and S.~Har-Peled, On approximating the
    depth and related problems, \emph{SIAM J. Comput.} 38~(2008),
    899--921.

    \bibitem{AK}
    F. Aurenhammer and R. Klein,
    Voronoi diagrams, in \emph{Handbook of Computational Geometry}
    (eds.\ J.-R. Sack and J. Urrutia), Elsevier, 1999, 201--290.

    \bibitem{BKSV}
    M. de Berg, M. J. Katz, A. F. van der Stappen, and J. Vleugels,
    Realistic input models for geometric algorithms,
    \emph{Algorithmica} 34~(2002), 81--97.

    \bibitem{Bhandari99} R.~Bhandari, \emph{Survivable Networks:
       Algorithms for Diverse Routing}, Kluwer, 1999.

    \bibitem{CS} K. L. Clarkson and P. W. Shor, Applications of random
    sampling in computational geometry II, {\em Discrete
       Comput. Geom.} 4 (1989), 387--421.

    \bibitem{EFR} H. Edelsbrunner, B. T. Fasy, and G. Rote, Add
    isotropic Gaussian kernels at own risk: More and more resilient
    modes in higher dimensions, \emph{Proc. 28th
       Annu. Sympos. Comput. Geom.}, 2012, 91--100.

    \bibitem{GKP}
R.~L.~Graham, D.~E.~Knuth, and O.~Patashnik,
\emph{Concrete Mathematics:A Foundation for Computer Science} (2nd ed.), Addison Welsley, 1994.

\bibitem{EMP} J.~S. Foster, E.~Gjelde, W.~R. Graham,
    R.~J. Hermann, H.~M.  Kluepfel, R.~L. Lawson, G.~K. Soper,
    L.~L. Wood, and J.~B. Woodard, \emph{Report of the commission to
       assess the threat to the {United States} from electromagnetic
       pulse ({EMP}) attack, critical national infrastructures},
    Apr. 2008.

    \bibitem{HS11} S. Har-Peled and M. Sharir, Relative $(p,
    \eps)$-approximations in geometry, \emph{ Discrete Comput.\ Geom.}
    45~(2011), 462--496.

    \bibitem{JKK05} N.~L. Johnson, A.~W. Kemp, and S.~Kotz,
    \emph{Univariate Discrete Distributions (3rd ed.)}, Wiley \& Sons,
    2005.

    \bibitem{KLPS} K. Kedem, R. Livne, J. Pach and M. Sharir, On the
    union of Jordan regions and collision-free translational motion
    amidst polygonal obstacles, {\em Discrete Comput. Geom.} 1 (1986),
    59--71.

    \bibitem{LP} L. M. Lifshitz and S. M. Pizer, A multiresolution
    hierarchical approach to image segmentation based on intensity
    extrema, \emph{IEEE Trans. Pattern Anal. Mach. Intell.} 12~(1990),
    529--540.

    \bibitem{Mat} J. Matou\v{s}ek, \emph{Lectures on Discrete Geometry},
       Springer-Verlag, 2002.

    \bibitem{NM09TR} S.~Neumayer and E.~Modiano, Network reliability
    with geographically correlated failures, in \emph{Proc. 29th IEEE
       International Conference on Computer Communications},
    2010, 1658--1666.

    \bibitem{G42} S.~Neumayer, G.~Zussman, R.~Cohen, and E.~Modiano,
    Assessing the vulnerability of the fiber infrastructure to
    disasters,
    % in
    % \emph{Proc. 28th IEEE International Conference on Computer
    % Communications (INFOCOM)}, 2009.
    \emph{IEEE/ACM Transactions on Networking } 19~(2011),
    1610--1623.

    \bibitem{OuBook} C.~Ou and B.~Mukherjee, \emph{Survivable Optical
       WDM Networks}, Springer-Verlag, 2005.

    \bibitem{PS} J. Pach and M. Sharir, On the boundary of the union
    of planar convex sets, {\em Discrete Comput. Geom.} 21 (1999),
    321--328.

    \bibitem{SA} M. Sharir and P. K. Agarwal, {\em Davenport-Schinzel
       Sequences and Their Geometric Applications,} Cambridge
    University Press, Cambridge-New York-Melbourne, 1995.

    \bibitem{ST}
    D. A. Spielman and S. H. Teng,
    Smoothed analysis: An attempt to explain the behavior of
    algorithms in practise,
    \emph{Comm.\ ACM}, 52~(2009), 76--84.

    \bibitem{W-top} W.~Wu, B.~Moran, J.~Manton, and M.~Zukerman,
    Topology design of undersea cables considering survivability under
    major disasters, in \emph{Proc. International Conference on
       Advanced Information Networking and Applications Workshops},
    2009, 1154--1159.
\end{thebibliography}
\end{document}